\newtheorem{lemma}{\indent Lemma}
\newtheorem{theorem}{\indent Theorem}
\newtheorem{proposition}{Proposition}
\newcommand{\Ebb}{\mathbb{E}}
\newcommand{\Fbb}{\mathbb{F}}
\newcommand{\Ibb}{\mathbb{I}}
\newcommand{\Rbb}{\mathbb{R}}
\newcommand{\Ccal}{\mathcal{C}}
\newcommand{\Ecal}{\mathcal{E}}
\newcommand{\Fcal}{\mathcal{F}}
\newcommand{\Hcal}{\mathcal{H}}
\newcommand{\Ical}{\mathcal{I}}
\newcommand{\Lcal}{\mathcal{L}}
\newcommand{\Ocal}{\mathcal{O}}
\newcommand{\Pcal}{\mathcal{P}}
\newcommand{\Ucal}{\mathcal{U}}
\newcommand{\Vcal}{\mathcal{V}}
\newcommand{\Wcal}{\mathcal{W}}
\newcommand{\Zcal}{\mathcal{Z}}
\newcommand{\Wmat}{\bm{\mathrm W}}
\newcommand{\Xmat}{\bm{\mathrm X}}
\newcommand{\Zmat}{\bm{\mathrm Z}}
\newcommand{\var}{\mathrm{Var}}
\newcommand{\Cl}{\mathrm{Cl}}
\newcommand{\Clgroup}{\mathbf{Cl}}
\newcommand{\tr}[1]{\textnormal{Tr}\left( #1 \right)}
\newcommand{\abs}[1]{\left| #1 \right|}
\newcommand{\sbra}[1]{\left[ #1 \right]}
\newcommand{\pbra}[1]{\left( #1 \right)}
\newcommand{\cbra}[1]{\left\{ #1 \right\}}
\newcommand{\superket}[1]{| #1 \rangle\!\rangle}
\newcommand{\superbra}[1]{\langle\!\langle #1 |}
\newcommand{\superbraket}[1]{\langle\!\langle #1 \rangle\!\rangle}
\begin{document}
\title{State Similarity in Modular Superconducting Quantum Processors with Classical Communications}

\author{Bujiao Wu$^1$}
\thanks{These authors contributed equally to this work.}

\author{Changrong Xie$^{2, 1}$}
\thanks{These authors contributed equally to this work.}

\author{Peng Mi$^3$}
\thanks{These authors contributed equally to this work.}

\author{Zhiyi Wu$^{4, 1}$}
\thanks{These authors contributed equally to this work.}

\author{Zechen Guo$^{2, 1}$}
\author{Peisheng Huang$^{5, 1}$}
\author{Wenhui Huang$^{2, 1}$}
\author{Xuandong Sun$^{2, 1}$}
\author{Jiawei Zhang$^{2, 1}$}
\author{Libo Zhang$^{2, 1}$}
\author{Jiawei Qiu$^1$}
\author{Xiayu Linpeng$^1$}
\author{Ziyu Tao$^1$}
\author{Ji Chu$^1$}
\author{Ji Jiang$^{2, 1}$}
\author{Song Liu$^{1, 6}$}
\author{Jingjing Niu$^{1, 6}$}
\author{Yuxuan Zhou$^1$}

\author{Yuxuan Du$^7$}
\email{yuxuan.du@ntu.edu.sg}

\author{Wenhui Ren$^1$}
\email{wenhuiren@zju.edu.cn}

\author{Youpeng Zhong$^{1, 6}$}

\author{Tongliang Liu$^3$}
\email{tongliang.liu@sydney.edu.au}

\author{Dapeng Yu$^{1, 6}$}

\affiliation{
$^1$International Quantum Academy, Shenzhen 518048, China\\
$^2$Guangdong Provincial Key Laboratory of Quantum Science and Engineering, Southern University of Science and Technology, Shenzhen, Guangdong, China\\
$^3$School of Computer Science, University of Sydney, Australia\\
$^4$School of Physics, Peking University, Beijing 100871, China\\
$^5$School of Physics and Electronic-Electrical Engineering, Ningxia University, Yinchuan 750021, China\\
$^6$Shenzhen Branch, Hefei National Laboratory, Shenzhen 518048, China\\
$^7$College of Computing and Data Science, Nanyang Technological University, 639798, Singapore
}

\begin{abstract}
	As quantum devices continue to scale, distributed quantum computing emerges as a promising strategy for executing large-scale tasks across modular quantum processors. A central challenge in this paradigm is verifying the correctness of computational outcomes when subcircuits are executed independently following circuit cutting. Here we propose a cross-platform fidelity estimation algorithm tailored for modular architectures. Our method achieves substantial reductions in sample complexity compared to previous approaches designed for single-processor systems. We experimentally implement the protocol on modular superconducting quantum processors with up to 6 qubits to verify the similarity of two 11-qubit GHZ states. Beyond verification, we show that our algorithm enables a federated quantum kernel method that preserves data privacy. As a proof of concept, we apply it to a 5-qubit quantum phase learning task using six 3-qubit modules, successfully extracting phase information with just eight training samples. These results establish a practical path for scalable verification and trustworthy quantum machine learning of modular quantum processors.
\end{abstract}

\maketitle

\section{INTRODUCTION} 
Quantum computing is steadily progressing from proof-of-concept demonstrations to practical utilities across a range of disciplines such as chemistry and fundamental science~\cite{Arute2019Quantum, Zhong2020Quantum, Madsen2022Quantum, Zhu2022Quantum, bluvstein2022quantum, Cao2023GenerationOG, Kim2023Evidence,Guo2024BoostedFG}. Along this path,  neutral-atom and superconducting quantum processors with over a hundred qubits have been fabricated~\cite{bluvstein2024logical,GoogleQuantumAI2024QEC}, with experimental results confirming their capabilities in sampling tasks beyond the reach of classical computers~\cite{morvan2024phase,gao2024establishing}. Despite these advances, scalability has become an increasingly pressing challenge, as expanding a single massive quantum processor to millions of qubits while preserving precise control and high-fidelity connectivity presents formidable technical hurdles~\cite{bravyi2022future,sunami2025scalable}. A leading solution is captured by the principle ``scaling up quantum computing by scaling out", which exploits modular and distributed architectures to achieve scalable and fault-tolerant quantum computing faster and more feasibly.

 A promising modular approach is distributed quantum computing~\cite{cirac1999distributed,wehner2018quantum,caleffi2024distributed}, which relies on quantum links to coordinate computations across remote quantum processors. However, achieving entangled interconnections remains a long-term challenge, with current experimental demonstrations still limited to small-scale systems~\cite{niu2023low,Main2024Distributed,aghaee2025scaling}. To maximize the capabilities of near-term quantum processors, a more practical alternative is circuit cutting~\cite{mitarai2021constructing,piveteau2023circuit,Lowe2023fast}. Unlike distributed quantum computing, the coordination among modular quantum processors is completed by specific local quantum operations and classical post-processing~\cite{CarreraVazquez2024Combing}. Thanks to its flexibility, diverse quantum algorithms can be adapted to this formalism~\cite{gentinetta2024overhead}. Moreover, the feasibility of circuit cutting has been recently demonstrated, with two 127-qubit quantum processors used to generate a 134-qubit graph state~\cite{CarreraVazquez2024Combing}.

\begin{figure*}
    \centering
    \includegraphics[width=1.0\linewidth]{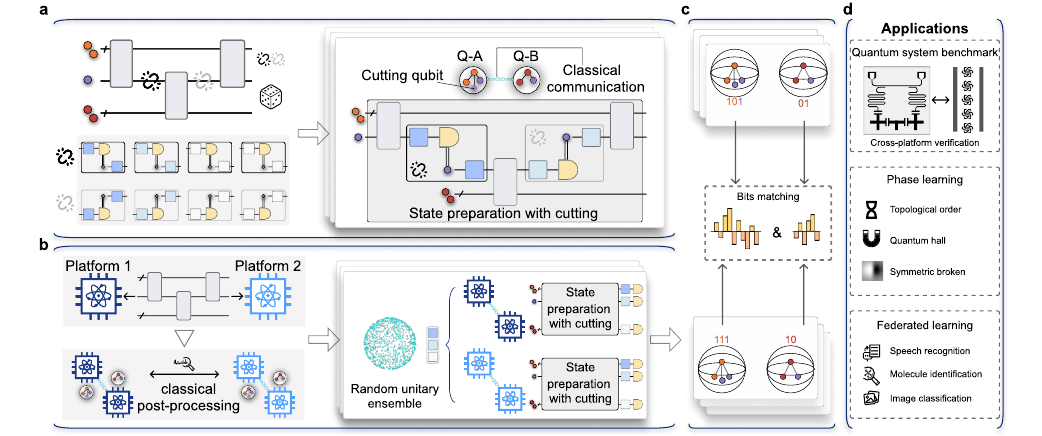}
     \caption{An overview of state similarity estimation between two 5-qubit states using circuit cutting across 4-qubit platforms (or 3-qubit platforms with qubit reuse) 
     \textbf{a.} Circuit cutting process and the first source of randomness in the subcircuits $\mathsf{Q}$-$\mathsf{A}$ and  $\mathsf{Q}$-$\mathsf{B}$: random Clifford circuits (indicated by boxes preceding yellow-colored measurements), applied at the cut locations. In this example, the quantum state with $n=5$ qubit is partitioned into $r=2$ subcricuits ($\mathsf{Q}$-$\mathsf{A}$ and  $\mathsf{Q}$-$\mathsf{B}$) with $k_2=2$ wire cuttings. Each cut involves $k_1=1$ qubit. 
\textbf{b.} Cross-platform fidelity estimation with circuit cutting is performed by preparing the quantum state using circuit cutting across modular $\mathsf{Q}$-$\mathsf{A}$ and  $\mathsf{Q}$-$\mathsf{B}$ devices. The second source of randomness, i.e., random unitaries sampled from a unitary ensemble, is applied before the final measurements.
\textbf{c.} Fidelity is estimated by classically post-processing the measurement outcomes, filtered by the sampled Bernoulli variables and corresponding Clifford circuits. \textbf{d.} Applications for the state similarity estimation algorithm with classical communication.}
    \label{fig:main_device}
\end{figure*}

Despite the progress, a fundamental challenge in using modular quantum processors with circuit cutting is how to rigorously certify the correctness of computations. The unwilling noise in modern quantum processors induces errors throughout the computation, making the final output unreliable and underscoring the need for verification procedures~\cite{eisert2020quantum}.  In addition, conventional verification techniques such as randomized benchmarking~\cite{Xue19Benchmarking,sung2021realization,bao2022fluxonium} or direct fidelity estimation~\cite{flammia2011direct,da2011practical} cannot be directly applied, as circuit cutting involves extra quantum operations, measurements, and classical post-processing.  This raises a key question: \textit{Can the outputs of circuit cutting be certified through provable and scalable methods}?

To address this, we propose an algorithm for estimating quantum state similarity, extending cross-platform verification to modular quantum processors. Conceptually, as shown in Fig.~\ref{fig:main_device}\textbf{b}, cross-platform verification assesses whether quantum states prepared on two isolated quantum platforms are consistent, without relying on trust in either device~\cite{Elben2020cross}. While the task appears to be a natural generalization, our theoretical analysis reveals a separation in sample complexity between our proposal and prior methods developed for single-processor architectures. Specifically, in contrast to prior methods that require at least the order of $\sqrt{2^n}$ samples with $n$ being the qubit number~\cite{Anshu2022Distributed}, our proposal achieves a slightly exponential scaling
with $n$ in a class of practically relevant scenarios. This separation suggests a promising path toward scalable cross-platform verification for large-scale modular quantum processors. We further exhibit the effectiveness of our approach on a superconducting quantum processor. That is, by partitioning the available qubits into distinct parts, the device is configured to emulate modular quantum processors, enabling cross-platform verification under the circuit-cutting setting. In this way, the cross-platform fidelity of Greenberger–Horne–Zeilinger (GHZ) states in the circuit cutting setting up to $11$ qubits is evaluated using modular quantum processors up to $6$ qubits.

As a secondary result, we extend our approach to support trustworthy quantum machine learning~\cite{du2025quantum}, with potential applications in domains where data confidentiality is critical, such as medicine and finance~\cite{yang2019federated}. In particular, our method enables the implementation of federated quantum kernels such that the kernel matrix is constructed in a privacy-preserving manner. Experimental results on a 5-qubit phase learning problem exhibit that the proposed federated quantum kernel can accurately predict quantum phase transitions. The achieved results provide valuable insights into how circuit cutting can be used to advance quantum machine learning.

\section{MAIN RESULTS}

\noindent The crux of enabling large-scale quantum computation in modular architectures lies in the circuit cutting technique~\cite{piveteau2023circuit,mitarai2021constructing,Lowe2023fast,CarreraVazquez2024Combing}. As illustrated in Fig.~\ref{fig:main_device}, a large-scale quantum circuit corresponding to a given program is partitioned into smaller segments. This divide-and-conquer strategy substantially reduces the number of qubits required on each quantum module, allowing existing quantum processors to be integrated for solving more challenging tasks. Depending on the cutting strategy, current circuit cutting methods can be primarily categorized into two classes: gate cutting~\cite{piveteau2023circuit,mitarai2021constructing} and wire cutting~\cite{Lowe2023fast}. In gate cutting, entangling operations between subcircuits are replaced by probabilistic or classically coordinated procedures. In wire cutting, the circuit is partitioned by measuring and reinitializing qubits along specific wires. Our focus in this work is on wire-cutting techniques, thanks to their more favorable scaling properties in large-scale modular quantum processors. 

Let us briefly introduce the mechanism of wire cutting (see Supplementary Information (SI) I for details). As shown in Fig.~\ref{fig:main_device}\textbf{a}, wire cutting decomposes a given quantum circuit into smaller and independently executable subcircuits by partitioning qubit wires at designated locations.  Suppose the input $n$-qubit state is partitioned into $r$ subcircuits using $k_2$ wire cuttings, where each cut involves $k_1$ qubit wires. 
For the $l$-th cut with $l\in [k_2]$, the identity channel on the  $k_1$ cut qubits between two involved subcircuits is replaced using equivalent measure-and-prepare channels, i.e., 
\begin{equation}\label{eqn:wire-cut}
\mathcal{I}_{2^{k_1}\times 2^{k_1}}(X)=\sum_{i}\tr{M_i X}\kappa_i, 
\end{equation}
 with $\{M_i\}$ being informationally complete measurements and $\{\kappa_i\}$ being the corresponding reinitialized states. An effective way to implement $\{M_i\}$ is to apply random unitaries sampled from the Clifford group, followed by the computational basis measurement. More precisely, the relevant operation is denoted by $\left( \bm{z}^{(l)}, U_{\bm{z}^{(l)}}, p_{\bm{z}^{(l)}} \right)$, indicating that when the Bernoulli variable associated with this cutting equals $\bm{z}^{(l)}\in\{0, 1\}^{k_1}$, the Clifford unitary $U_{\bm{z}^{(l)}}$ applied to the cut qubits is with probability $p_{\bm{z}^{(l)}}$, and the measurement outcomes are $\bm{c}^{(l)}$. The  variable $\bm z^{(l)}$ and the obtained bit-string $\bm{c}^{(l)}$ indicate the initial state $\kappa_i$ for the next sub-circuit. The target output state is reconstructed by repeatedly measuring the qubit state in one subcircuit and reinitializing it in the next.

 A potential solution for certifying the reliability of computations performed by modular quantum processors with wire cutting is using cross-platform verification protocols~\cite{Elben2020cross}. Conceptually, it assesses the similarity between quantum states prepared independently on different platforms executing the same quantum program. Denote the output $n$-qubit states from two quantum platforms as $\rho$ and $\sigma$, the cross-platform fidelity yields
\begin{equation}\label{eqn:cross-fide}
    \Fcal\pbra{\rho, \sigma} = \frac{\tr{\rho\sigma}}{\sqrt{\tr{\rho^2}\tr{\sigma^2}}}.
\end{equation}
For single-processor settings, recent theoretical results show that for arbitrary quantum states, the sample complexity of estimating $\Fcal$ scales at least sub-exponentially with $n$~\cite{Anshu2022Distributed}, but becomes efficient when the states exhibit certain structure, such as low magic~\cite{hinsche2024efficient}. However, prior algorithms and the corresponding results do not directly extend to modular architectures, due to the additional local quantum operations and classical communication involved. How to perform cross-platform verification for modular quantum processors remains unexplored.

  \begin{figure*}
\includegraphics[width=1.0\textwidth]{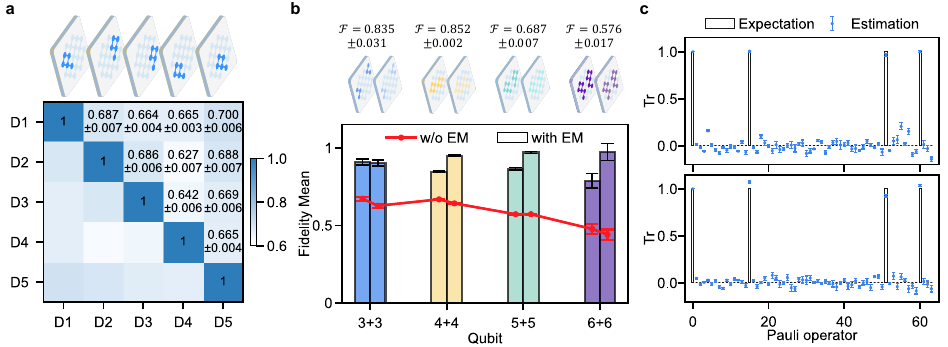}
\caption{Illustration for state similarity in modular superconducting quantum processors with classical communications. \textbf{a.} Cross-platform fidelities of 9-qubit GHZ states prepared on five distinct modular configurations, shown alongside the corresponding qubit layout. \textbf{b.} Cross-platform fidelities for GHZ states of up to 11 qubits distributed across two modules, together with fidelities relative to ideal GHZ states with/without error mitigation. \textbf{c.} A subset of tomography results for 5-qubit GHZ states prepared on Platforms 1 and 2, with error mitigation applied. The $x$-axis denotes a sequence of 64 selected Pauli operators $\cbra{P_i}$, and the $y$-axis shows the corresponding estimated values $\cbra{\tr{\rho P_i}}$ and their expected values.}
\label{fig:experiment_fid_main}
\end{figure*}

\subsection{Cross-platform verification in circuit-cutting case}
To address the above knowledge gap, we propose a cross-platform verification protocol for modular quantum processors with wire cutting. Given a specified $n$-qubit circuit,  the protocol proceeds in two key steps: (1) perform wire cutting on the circuit and independently execute the resulting subcircuits on Platform 1 and  Platform 2; and (2) apply classical post-processing to the measurement outcomes collected from each platform, as illustrated in Fig.~\ref{fig:main_device}\textbf{b,c}. We briefly present the implementation of each step subsequently, and defer more details and further improvements to SI~B and C. 

In the first step, both Platform 1 and  Platform 2 involve two sources of randomness. The first arises from the origin wire cutting procedure in Eq.~(\ref{eqn:wire-cut}). For this source, the Clifford unitaries sampled across different cuts and different platforms are independent. For notational clarity, we denote the sampled Bernoulli variables for all $k_2$ cuts as  $\bm{z}=\cup_{l=1}^{k_2} \bm{z}^{(l)}$ for Platform 1 and $\bm{z}'=\cup_{l=1}^{k_2} {\bm{z}'}^{(l)}$ for Platform 2. Each value of the Bernoulli variable is sampled $L$ times and the obtained bit-strings are denoted by $\bm{c}\in \{0,1\}^{k_1 \times k_2 \times L}$ and $\bm{c}'\in \{0,1\}^{k_1 \times k_2 \times L}$  for Platform 1 and Platform 2, respectively.

The second source of randomness originates from the application of random measurements to the output states of each subcircuit. Specifically, for the $j$-th subcircuit $\forall j \in [r]$, $N$ random unitaries $\{\mathcal{W}^{(i)}_j\}_{i=1}^N$ are sampled from the tensor product of a local 4-design~\cite{mele2024introduction}. In this setting, the sequence of sampled unitaries must be \textit{identical} across platforms to ensure consistent post-processing. We denote the corresponding bitstring outcomes for the $j$-th subcircuit, obtained by repeatedly applying $N$ random unitaries followed by computational measurements, as ${\bm s}_j$ and  ${\bm s}'_j$ for Platform 1 and Platform 2, respectively. 

In the second step, the proposed algorithm applies a classical post-processing to the collected $\bm{z}$, $\bm{c}$, $\{\Wcal_j^{(i)}\}$, and $\{{\bm s}_j\}$ to estimate $\tr{\rho \sigma}$. The estimator yields 
\begin{equation}
    \hat{v} := \sum_{\bm z, \bm z'}\frac{ f_{\bm 
 z,\bm  z'}}{N^r L^2} \sum_{\bm c,\bm c'} \prod_{j} 
\sum_{\bm s_j,\bm s'_j,i}  g_{\bm s_j,\bm s'_j}\hat{p}_{\bm s_j}^{\bm z, \bm c,\Wcal_j^{(i)}}\hat{q}_{\bm s'_j}^{\bm z',\bm c',\Wcal_j^{(i)}},
\label{eq:main_estimator}
\end{equation}
where $f_{\bm 
 z,\bm  z'}$ refers to a weight factor associated with $\bm z,k_1$, and $k_2$, 
 and $g_{\bm s_j,\bm s'_j}$ calculates the similarity between outcome $\bm s_j$ and $\bm s'_j$. Moreover, the quantity $\hat{p}_{{\bm s}_j}^{\bm{z}, \bm{c}, \Wcal_j^{(i)}}$ represents the estimated probability of observing outcome ${\bm s}_j$, conditioned on the values of $\bm{z}$, $\bm{c}$ and $\Wcal_j^{(i)}$. Refer to SI~C for the implementation details.

The theorem below quantifies how the estimation error of the proposed estimator relative to the true cross-platform fidelity scales with the sample complexity.
\begin{theorem}[Informal]\label{thm:1}
Following notations in Eqs.~\eqref{eqn:cross-fide} and~\eqref{eq:main_estimator}, when the random unitaries in the second randomness sources are sampled from a group yielding the tensor product of a unitary local $2$-design, $\hat{v}$ is an unbiased estimator of $\tr{\rho \sigma}$ with $\mathbb{E}[\hat{v}]=\tr{\rho \sigma}$. Moreover, when this group yields the tensor product of a unitary local $4$-design, and the number of measurements satisfies $\Ocal(( 2^{k_1+1} + 1)^{2k_2}6^{n/(2r)}/\varepsilon)$, we have  
\begin{equation}
 |\hat{v} - \tr{\rho \sigma}|\leq \varepsilon,
\end{equation}
with high success probability.
\label{thm:main_res} 
\end{theorem}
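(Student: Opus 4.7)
The plan is to split Theorem~\ref{thm:main_res} into two steps: (i) showing $\hat v$ is unbiased under the local $2$-design assumption, and (ii) bounding $\mathrm{Var}(\hat v)$ under the local $4$-design assumption, and then closing with a standard concentration inequality. For unbiasedness I would first condition on the wire-cutting data $(\bm z,\bm c)$ and $(\bm z',\bm c')$, so that each subcircuit $j\in[r]$ deterministically prepares a state $\rho_j^{\bm z,\bm c}$ on Platform~1 and $\sigma_j^{\bm z',\bm c'}$ on Platform~2. Under this conditioning, the inner double sum $\sum_{\bm s_j,\bm s'_j} g_{\bm s_j,\bm s'_j}\,\hat p^{\bm z,\bm c,\Wcal_j^{(i)}}_{\bm s_j}\,\hat q^{\bm z',\bm c',\Wcal_j^{(i)}}_{\bm s'_j}$ is the canonical cross-platform classical-shadow overlap estimator, and averaging over $\Wcal_j^{(i)}$ drawn from a tensor product of local $2$-designs returns exactly $\tr{\rho_j^{\bm z,\bm c}\sigma_j^{\bm z',\bm c'}}$. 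Taking the product over $j$ and then averaging over $(\bm z,\bm c,\bm z',\bm c')$ with the weights $f_{\bm z,\bm z'}/L^2$ collapses onto $\tr{\rho\sigma}$ via the measure-and-prepare decomposition of the identity channel in Eq.~(\ref{eqn:wire-cut}).

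For the variance step I would separate the two independent randomness sources -- the wire-cutting Clifford/Bernoulli draws and the shadow-unitary samples -- and exploit the product-over-$r$-subcircuits structure of $\hat v$. Within each subcircuit of $m\le n/r$ qubits, the second moment of the overlap estimator under a tensor product of local $4$-designs is bounded by the standard classical-shadow variance argument: expanding the square of the estimator and evaluating the fourth-moment integrals over the local Clifford group, the surviving tensor contractions leave a per-subcircuit bound scaling as $6^{m}$. Multiplying across the $r$ subcircuits produces the $6^{n/r}$ factor inside the final square-root. The outer wire-cutting layer contributes an $\ell^2$ mass of $\{f_{\bm z,\bm z'}\}$ over the Bernoulli lattice, which the explicit form of the measure-and-prepare decomposition in Eq.~(\ref{eqn:wire-cut}) bounds by $(2^{k_1+1}+1)^{2k_2}$. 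Combining these with the $1/(N^r L^2)$ normalization in Eq.~(\ref{eq:main_estimator}) and applying a Chebyshev-style bound, optionally boosted by median-of-means, delivers the claimed sample complexity.

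The principal difficulty is the variance calculation. The random unitaries $\Wcal_j^{(i)}$ are deliberately shared across the two platforms, so the shadow estimators on Platform~1 and Platform~2 are not independent and one cannot multiply per-platform variances; the cross-correlations must be evaluated using the exact fourth moments, which is precisely why the $4$-design hypothesis is indispensable rather than a mere $2$-design. Compounding this, the wire-cutting variables $(\bm z,\bm c)$ appear \emph{inside} the product over subcircuits, so squaring $\hat v$ introduces pairings that couple distinct subcircuit indices through shared cut qubits, and one has to verify that these couplings neither exceed the $6^{m}$ per-subcircuit bound nor spoil the $(2^{k_1+1}+1)^{2k_2}$ wire-cutting factor. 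Carefully classifying which pairings survive the $4$-design averaging, and showing that each surviving term respects both the shadow-norm estimate and the wire-cutting $\ell^2$ estimate, is where the bulk of the technical work lies. Once this combinatorial bookkeeping is done, the remaining steps reduce to routine product estimates and concentration.
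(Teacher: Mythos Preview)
Your unbiasedness argument matches the paper's: condition on the wire-cutting data, reduce each subcircuit to the standard local-$2$-design cross-platform overlap identity, then collapse the outer average via the measure-and-prepare decomposition of the identity channel. The overall scaffolding for the variance step---law of total variance to separate the cutting randomness from the shadow randomness, fourth-moment (Weingarten) calculation under the local $4$-design, Chebyshev to close---is also what the paper does.

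There is, however, a gap in how you obtain the $6^{n/(2r)}$ scaling. Your sentence ``per-subcircuit bound scaling as $6^{m}$; multiplying across the $r$ subcircuits produces the $6^{n/r}$ factor inside the final square-root'' is internally inconsistent: multiplying $r$ independent second-moment bounds each of order $6^{n/r}$ gives $6^{n}$, not $6^{n/r}$, and there is no general principle that the variance of a product of bounded-mean estimators is dominated by the largest factor variance. The paper does \emph{not} argue per subcircuit. Instead it rewrites $\prod_{j=1}^{r}\frac{1}{N}\sum_{i}(\cdots)$ as a single sum $\frac{1}{N^{r}}\sum_{i_1,\ldots,i_r}(\cdots)$, so that the whole object is literally the cross-platform estimator for the full $n$-qubit states $\hat\rho_{\bm z},\hat\sigma_{\bm z'}$, now with $N^{r}$ effective random unitaries and $m^{r}$ effective shots (``cross-concatenation''). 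One then applies the \emph{full $n$-qubit} local-$4$-design variance bound, of the form $2^{n}+3^{n}/(\text{shots})^{2}$, with $N^{r}$ in the denominator. The factor $6^{n/(2r)}$ appears only after separately \emph{optimizing} $N\sim 2^{n/r}$ and $m\sim(3/2)^{n/(2r)}$; it is not a direct tensor product of per-subcircuit $6^{n/r}$ bounds. This cross-concatenation step is the crux; without it you do not beat the single-processor scaling.

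A smaller point: the difficulty you flag---``pairings that couple distinct subcircuit indices through shared cut qubits''---does not actually arise. Once you condition on $(\bm z,\Ccal,\bm z',\Ccal')$, the intermediate outcomes $\bm c_j$ are summed inside each subcircuit factor and the $r$ factors are genuinely independent (the $\Wcal_j$ are independent across $j$). The wire-cutting overhead $(2^{k_1+1}+1)^{2k_2}$ enters only as a deterministic prefactor in $f_{\bm z,\bm z'}$; its square shows up in the variance and is absorbed by the $L$ repetitions, exactly as you say, but there is no combinatorial coupling across subcircuits to classify.
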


The achieved results convey two crucial implications. For the case $r = 2$, and $k_1$ and $k_2$ being two constants, the required number of copies of quantum states scales as $\Ocal\pbra{6^{n/4}/\varepsilon} \approx \Ocal\pbra{1.565^n/\varepsilon}$, which is significantly lower than the $\Ocal\pbra{2^n}$ scaling observed in Ref.~\cite{Elben2020cross} when $n$ continuously grows. Although Anshu et al.~\cite{Anshu2022Distributed} demonstrated that $\Ocal\pbra{\sqrt{2^n}/\varepsilon}$ copies suffice, their protocol relies on applying a global random unitary, which poses significant implementation challenges for near-term quantum devices. Furthermore, when $r \geq 3$, and $k_1$ and $k_2$ are constants, our algorithm achieves a sub-exponential improvement over all existing cross-platform algorithms, while eliminating the need for global random unitaries.

We introduce two modifications to further enhance the effectiveness of the proposed estimator in practice. The first modification is the parallel execution. Recall that the general procedure of wire 
cutting proceeds sequentially, which results in high run-time costs in practice. To address this issue, we parallelize all of $r$ subcircuits by enumerating all possible cutting variables $\pbra{\bm{z}^{(l)}, U_{\bm{z}^{(l)}}, p_{\bm{z}^{(l)}}}$, intermediate outcomes $\bm{c}^{(l)}$, and reinitialized states $\kappa_i$ for $l \in [k_2]$. This approach remains efficient when both $k_1$ and $k_2$ are chosen as small constants.
The second modification is an error mitigation technique, which improves the reliability of the output distributions when we intend to suppress the influence of measurement noise. We postpone the relevant algorithmic implementation and theoretical results about the complexity in SI~E.

\noindent\textbf{Remark}. For completeness, we also prove the computational separation between the cross-platform verification without circuit cutting and the state tomography in terms of the sample complexity. In particular, prior works only numerically exhibited that cross-platform verification using local random unitaries generally requires $\lesssim  2^n$ copies of quantum states if $\rho =\sigma$. In SI~C,  we prove that the required number of measurements scales as $\sqrt{(18/5)^n}$ when $\rho = \sigma$, and as $\sqrt{6^n}$ when $\rho \neq \sigma$. These results establish a rigorous separation between cross-platform verification and quantum state tomography, where the latter requires  $\Omega(2^{2n})$ copies with single-copy measurements \cite{haah2016sample,nayak2025lower}.

\subsection{Federated quantum kernels} Beyond cross-platform verification in modular quantum processors, our approach can also support trustworthy quantum kernel machines~\cite{schuld2019quantum,Havl2019supervised,peters2021machine}. Denote $\mathcal{D}=\{\bm{x}^{(i)}, y^{(i)}\}_{i=1}^N$ as the given dataset with $N$ training examples. As with classical kernels, the quantum kernel matrix $K$ is an $N\times N$ matrix, where its entries quantify the similarity between examples in the quantum feature space. Mathematically, $K_{ij}=\braket{\psi(\bm{x}^{(i)})|\psi(\bm{x}^{(j)})}$ for $\forall i, j\in [N]$, where the state $|\psi(\bm{x}^{(i)})\rangle$ is prepared by applying a tailored parameterized circuit $U(\bm{x}^{(i)})$ to a fixed initial state $\ket{0}^{\otimes n}$. Despite theoretical advances in certain  tasks~\cite{huang2021power,liu2021rigorous,glick2024covariant}, the construction of $K$ raises concerns regarding trustworthiness, since its computation requires all data in $\mathcal{D}$ to be exchanged among different entities, which is undesirable in sensitive domains such as healthcare. 

Given that $K_{ij}$ amounts to state similarity estimation, the cross-platform nature of our approach ensures that all entries in $K$ can be completed instead of sharing raw data. This federated manner addresses the privacy-preserving issue. Furthermore, when the underlying circuit $U(\bm{x}^{(i)})$ is adapted to modular quantum processors, the required sample complexity is substantially reduced compared to the single-processor setting, ensuring computational efficiency.

\section{EXPERIMENTAL RESULTS} 
\subsection{Similarity Estimation of GHZ states}
We experimentally validate the proposed cross-platform verification protocol using GHZ states distributed across modular regions of a superconducting quantum processor. In our setup, a central qubit is cut to divide the GHZ state into two modular components, i.e., $r = 2$, $k_1 = 1$, and $k_2 = 1$. The experiment involves $19$ frequency-tunable and asymmetric qubits arranged in a square lattice, and the modular architecture is emulated by partitioning the chip into spatially separated regions (refer to SI~D and E for the details).

In the first experiment shown in Fig.~\ref{fig:experiment_fid_main}$\bm{a}$, we evaluate $9$-qubit GHZ states prepared across five distinct regions (D1–D5). Each region comprises a subset of the chip, and cross-platform fidelities are estimated pairwise between regions using our wire-cutting protocol. The corresponding modular layouts are illustrated above the matrix, and fidelity values are shown in the off-diagonal entries. We choose $N = 50$ with 1000 snapshots and repeat the procedure across 12 rounds with slight variations to obtain the statistical results. Refer to SI~E for the detailed settings. The achieved results demonstrate that GHZ states prepared in different regions exhibit high mutual fidelity, where all of the results are above 0.62. 

In Fig.~\ref{fig:experiment_fid_main}$\bm{b}$, we further examine GHZ states varying from $5$ to $11$ qubits partitioned into two modular components with equal sizes: $3+3$, $4+4$, $5+5$, and $6+6$ qubits. The corresponding modular configurations are depicted above each bar group. For each layout, we report the cross-platform fidelity relative to ideal GHZ states, both with and without the proposed measurement error mitigation technique. The experimental settings follow a similar structure to the $5+5$-qubit case employed in Fig.~\ref{fig:experiment_fid_main}$\bm{a}$. That is, for the $4+4$ and $6+6$ configurations, we use $N = 40$ and $N = 66$, respectively, with $1000$ snapshots per random unitary. For the $3+3$-qubit case, we enumerate all possible unitaries, totaling 9 and 27 for the two isolated parts, respectively, with 500 snapshots taken per unitary. The achieved results show that when error mitigation is employed, all fidelities in the $6+6$-qubit case relative to the ideal GHZ states exceed 0.79, attaining a substantial improvement compared to the results without error mitigation. This indicates the effectiveness of the proposed error mitigation technique.

To verify that the reconstructed quantum states match the target GHZ states beyond fidelity estimates, we perform Pauli tomography on $5$-qubit GHZ states prepared across two modules ($\mathsf{Q}$-$\mathsf{A}$ and  $\mathsf{Q}$-$\mathsf{B}$). To ensure that the tomography state corresponds to the same quantum state used in the cross-platform fidelity estimation experiment, the Pauli tomography results are obtained using the same measurement data from Fig.~\ref{fig:experiment_fid_main}\textbf{b}, but with different post-processing. As shown in Fig.~\ref{fig:experiment_fid_main}$\bm{c}$, we compare 64 experimentally estimated Pauli expectation values with their theoretical predictions. The mean squared errors (MSE) of the tomography values using the first 64 Pauli operators and the full set of 1024 are $0.00345$ and $0.0043$ for Platform 1, and $0.00162$ and $0.0023$ for Platform 2, respectively. The lower MSE for the first 64 operators reflects their predominantly local structure, suggesting that expectation values of global Pauli operators are more susceptible to noise.
The agreement across all Pauli terms further confirms the accuracy of our method in reconstructing global quantum states from modular subcircuits. Refer to SI~E for the omitted implementation details and more experimental results.

\begin{figure*}[t]
    \centering
    \includegraphics[width=1.0\linewidth]{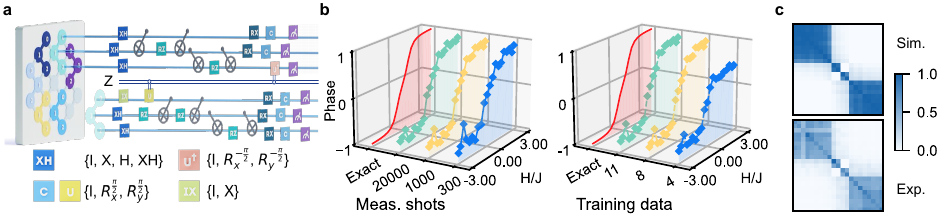}
    \caption{Experimental phase learning using the cross-platform verification in circuit-cutting case. \textbf{a.} The parameterized quantum circuit used to compute elements of the kernel matrix. \textbf{b.} Phase learning performance as a function of the number of measurement shots with a fixed training set of 8 data points, and as a function of training set size with 1000 shots. \textbf{c.} Comparison between the simulated kernel matrix and the experimentally obtained kernel matrix.}
    \label{fig:application_phaseLearn}
\end{figure*}

\subsection{Federated quantum kernels for phase identification}
We next evaluate the performance of the federated quantum kernels by applying the proposed models to quantum phase learning tasks, a compelling setting for demonstrating quantum advantage, where the underlying problem is believed to be classically intractable~\cite{wu2023quantum, bouland2019complexity}. In this experiment, a total of 18 frequency-tunable, asymmetric qubits are used.
In particular, the phase structure under investigation is derived from a specific family of transverse-field Ising models, i.e.,   
\begin{align}
  \Fcal=\Big\{H(h) \big| H(h) = - \frac{1}{2}\sum_{j=1}^{n-1} Z_jZ_{j+1} - h\sum_{j=1}^n X_j \Big\},
\end{align}
where $h\in  [-1.45, 1.45]$ refers to the strength of the external field and the number of qubits is set to $n=5$. Let the ground state of $H(h)$ be $\rho(h)$. The object is to build a quantum kernel machine that can accurately predict the phase of $\rho(h')$ for any unseen $h'$. 

To complete this learning task, we first construct the training dataset $\mathcal{T}=\{\rho(h^{(i)}), y^{(i)}\}$, where $\rho(h^{(i)})$ and $ y^{(i)}$ denote the ground state and the quantum phase for the $i$-th Hamiltonian $H(h^{(i)})$, respectively. The preparation of $\rho(h^{(i)})$ is completed by a variational quantum eigen-solver with a Hamiltonian-informed ansatz (see SI~E for the implementation detail). Once the dataset $\mathcal{T}$ is prepared, we apply it to construct the proposed federated quantum kernel $K$, where the entry $K_{ij}$ amounts to the estimation value of $\tr{\rho(h^{(i)}) \rho(h^{(j)})}$ returned by the proposed cross-platform verification algorithm. Here we consider the settings with $k_1=1$, $k_2=1$, and $r=2$, where the circuit implementation is depicted in Fig.~\ref{fig:application_phaseLearn}\textbf{a}. Given access to the established federated quantum kernel $K$, the kernel support vector regression (SVR) model is utilized to conduct the prediction~\cite{scholkopf2002learning,cherkassky2004practical,smola2004tutorial}.

To evaluate how the performance of the proposed federated quantum kernel depends on the number of training examples and the number of snapshots used to estimate each entry of the kernel matrix $K$, we vary the training set size from $4$ to $11$ examples and the number of snapshots from $300$ to $2000$. Prediction performance is assessed on 21 test examples drawn from the same distribution as the training data, as shown in Fig.~\ref{fig:application_phaseLearn}\textbf{b}. When the number of training examples is $8$ and the snapshots yield $2000$, the average mean squared error between the predicted and exact test labels is $0.026$, validating the effectiveness of the proposed method. 

We further examine the consistency between the experimentally reconstructed kernel matrix and the kernel matrix obtained through classical estimation. All hyperparameter settings are identical to those used in the previous task, with  $k_1=k_2=1$, $r=2$, and 1000 snapshots used to approximate the $21 \times 21$ kernel matrix constructed from 21 selected training data points. Using centered kernel alignment as the evaluation metric~\cite{kornblith19similarity}, the achieved core is $0.98$, indicating a high degree of structural similarity between the experimentally reconstructed kernel matrix and the target kernel matrix.

\section{CONCLUSION}
In this study, we develop an algorithm for cross-platform verification tailored to modular quantum processors connected via wire cutting. Theoretical analysis establishes how the estimation error scales with the number of measurements, and in certain regimes, the proposed method outperforms existing approaches that do not incorporate circuit cutting, enabling verification at larger scales. Building on this framework, we design a federated quantum kernel to address trust and privacy concerns in quantum machine learning. Complementing the theoretical contributions, we perform systematic experiments on a superconducting quantum processor. Our results demonstrate successful verification of cross-platform fidelity for GHZ states involving up to $11$ qubits, and accurate phase classification in the transverse-field Ising model, validating the practical utility of the proposed methods for scalable quantum information processing.

Several important research directions warrant further exploration. One direction is to investigate whether incorporating intermediate measurements in a quantum device can extend the applicability of our algorithm to more general settings~\cite{CarreraVazquez2024Combing,decross2023qubit}. At the same time, advancing hardware capabilities to support such measurements is essential, as implementing intermediate measurements remains challenging on current superconducting quantum platforms~\cite{swiadek2024enhancing}. Another promising direction is to establish lower bounds for cross-platform verification in the presence of circuit cutting. Understanding the extent of this separation is a key open question. Finally, it is of interest to develop cross-platform verification algorithms for modular quantum processors when limited quantum communication is allowable. Any progress in this setting could offer valuable insights into the verification of distributed quantum computations.

\section*{ACKNOWLEDGEMENT}
This work was supported by the National Natural Science Foundation of China (12405014, 12204228), the Innovation Program for Quantum Science and Technology (2021ZD0301703), and Guangdong Basic and Applied Basic Research Foundation (2024A1515011714, 2022A1515110615).

\newpage
 
\clearpage

\onecolumngrid

\appendix 

\begin{center}
	{\textbf{\large{Supplementary Information for: ``State Similarity in Modular Superconducting Quantum Processors with Classical Communications''}}}
\end{center}

\tableofcontents

\newpage

\renewcommand{\appendixname}{SI}
 \renewcommand\thefigure{\thesection.\arabic{figure}}   
  \renewcommand\thetable{\thesection.\arabic{table}}   
 
\renewcommand{\figurename}{Supplementary Figure}
\renewcommand{\tablename}{Supplementary Table}

\bigskip

\renewcommand{\theequation}{S\arabic{equation}}
\setcounter{equation}{0}

In this Supplementary Information (SI), we provide detailed descriptions of the algorithms, theoretical analyses, experimental setups and proofs supporting our main results. SI~\ref{supp:preliminary} introduces the necessary notations and background knowledge used throughout the remainder of the supplementary information. In SI~\ref{supp:complexity_distance_based_alg}, we analyze the theoretical complexity of an existing distance-based cross-platform algorithm, which forms the basis for our subsequent proof. SI~\ref{supp:algorithm_details} contains details of our proposed algorithm along with the corresponding proofs. SI~\ref{supp:intro_supconduct_device} presents information about the experimental device. 
 Finally, SI~\ref{supp:experimental_res} provides additional experimental results, including parameter settings and simulation data.

\section{Preliminary}
\label{supp:preliminary}
We introduce the Pauli-transfer matrix representation (Liouville representation) for operators and superoperators~\cite{wood2011tensor}. A nonzero linear operator $A$ acting on a Hilbert space can be represented as a vector in the Hilbert-Schmidt space of dimension $d^2$ on the basis of Pauli operators, where $d$ is the dimension of the Hilbert space. Specifically, we define the rescaled operator  $\superket{A}$ as $\superket{A} := A/\sqrt{\tr{AA^\dagger}}$. Using this vectorized representation, the operator $A$ is expanded as a linear combination of the Pauli basis: $\superket{A} = \sum_{j=0}^{d^2-1} \superbraket{P_j|A} \superket{P_j}$, where $\cbra{\superket{P_j}}_{i=1}^{d^2 - 1}$ represented as the vectorized Pauli basis, with each Pauli operator rescaled by $\superket{P_j}:=P_j/\sqrt{d}$,
and the inner product $\langle\!\langle B|A \rangle\!\rangle := \tr{AB^\dagger}/\sqrt{\tr{AA^\dagger}\tr{BB^\dagger}}$. 
We denote the map of a linear operator as $\Ecal\superket{A} = \Ecal(A)/\sqrt{\tr{AA^\dagger}}$.
Throughout this SI, we denote $\Fbb$ as the $2n$-qubit SWAP gate, which exchanges the $j$-th and $(j+n)$-th qubits for $j = 1, \dots, n$. We use $n^{\underline{t}}$ to denote the falling factorial, defined as
$n^{\underline{t}} = n \times (n - 1) \times \cdots \times (n - t + 1)$.

The channel representation of a unitary operation is defined as $\Ucal(\cdot):= U(\cdot) U^\dagger$. We utilize $\Clgroup_k$ to denote the $k$-qubit Clifford group. We use $\Ibb$ to denote the associated system's identity operator without specifying the dimension. The notation $\Ical(\cdot)$ denotes the identity channel. To specify its operation on a $d$-dimensional Hilbert space, we use $\Ibb_d$ and $\Ical_d$, respectively.
We use $D(\bm{x}, \bm{y})$ to denote the distance between two bitstrings $\bm{x}, \bm{y} \in \cbra{0,1}^n$, defined as $D(\bm{x}, \bm{y}) = \sum_i x_i y_i$. We use $\# \bm{a}$ to denote the number of occurrences of $\bm{a}$ in the set of all measurement shots.

\subsection{Cross-platform fidelity estimation}

Let $\rho$ and $\sigma$ be two quantum states prepared by two quantum platforms, denoted by Platform 1 and Platform 2,  respectively. The cross-platform fidelity~\cite{wang2008alternative,liang2019quantum, brydges2019probing, Elben2020cross} between the states prepared by Platform 1 and Platform 2 is
\begin{align}\label{eqn:def-cross-fide}
    \Fcal(\rho, \sigma) = \frac{\tr{\rho \sigma}}{\sqrt{\tr{\rho^2}\tr{\sigma^2}}},
\end{align}
which is the inner product of $\rho$ and $\sigma$ normalized by the geometric mean of their purities $\tr{\rho^2}$ and $\tr{\sigma^2}$. This quantity is calculated by separately applying quantum operations on the individual platform. To date, two representative classes of algorithms have been proposed that utilize random measurements to estimate $  \Fcal(\rho, \sigma)$, i.e., the ``distance-based algorithm''~\cite{Elben2020cross, brydges2019probing} and the ``collision-based algorithm''~\cite{Anshu2022Distributed}. In what follows, we briefly review their mechanisms.

\subsubsection{Distance-based algorithm for cross-platform fidelity estimation}
In distance-based algorithm~\cite{Elben2020cross, brydges2019probing}, the estimator for $\tr{\rho \sigma}$ in Eq.~(\ref{eqn:def-cross-fide}) is defined as
\begin{equation}
    \bar{v} := \frac{2^n}{N}\sum_{t=1}^N \sum_{\bm s, \bm s'} (-2)^{-D(\bm s, \bm s')}\hat{p}_{\bm s}(\Wcal_t)\hat{q}_{\bm s'} (\Wcal_t)
    \label{eq:cross_platformEstimator}
\end{equation}
 where $\hat{p}_{\bm s} = {\#\bm s}/{m}$ represents the frequency of $\bm s$ occurring in $m$ repetitions for Platform $1$ with the random unitary be chosen as $\Wcal_t$, $\hat{q}_{\bm s}$ follows the similar definition for Platform $2$, and $N$ random unitaries $\Wcal_1,\ldots, \Wcal_N$ are uniformly sampled from an $n$-qubit global (or tensor product of local) Clifford group. 
By the twirling property of the 2-design of Haar measure~\cite{mele2024introduction}, it can be proven that
\begin{equation}
\tr{\rho \sigma} \equiv \Ebb_{\Wcal,\hat{p}_{\bm s}, \hat{q}_{\bm s'} }[ \bar{v}] = 2^n \sum_{\bm s, \bm s'} (-2)^{-D(\bm s, \bm s')}\Ebb_{\Wcal,\hat{p}_{\bm s}\hat{q}_{\bm s'} } \sbra{\hat{p}_{\bm s}(\Wcal) \hat{q}_{\bm s'}(\Wcal)}.
\label{eq:cross_platFid}
\end{equation}

Numerical studies~\cite{Elben2020cross} have shown that, for almost all quantum states, the required number of measurements scales as $2^{bn}$, with $b  \lesssim 1$ in general. More specifically, for pure (entangled) Haar random states, the scaling exponent is found to be $b = 0.6 \pm 0.2$, whereas for pure product states, it is $b = 0.8 \pm 0.1$~\cite{Elben2020cross}. However, a rigorous theoretical foundation for these numerical observations remains lacking.
In SI~\ref{supp:algorithm_details} we provide a theoretical analysis of the measurement requirements to address this gap.

\subsubsection{Collision-based algorithm for cross-platform fidelity estimation}
Anshu et al.~\cite{Anshu2022Distributed} proposed a collision-based algorithm that bears similarity to the distance-based cross-platform algorithm, where both of them require the application of a random unitary before measurement. Different from the distance-based approaches, they prove that the number of measurements required to achieve an estimation error of $\varepsilon$ with high probability scales with  $\Theta\pbra{\sqrt{2^n}/\varepsilon}$. However, a key limitation of their approach is that it requires a global random unitary acting on all qubits, which often requires substantial and complicated quantum operations.

The algorithmic implementation proposed by Ref.~\cite{Anshu2022Distributed} is as follows. Denote the sampled unitary sequence as $\Wmat = \cbra{\Wcal_1,\ldots,\Wcal_{N}}$, where the $i$-th unitary $\Wcal_i$ is randomly sampled in $n$-qubit Haar measure, and $d = 2^n$. Let $m$ be the number of measurement shots for each fixed circuit configuration. Then, the unbiased estimator of $\tr{\rho \sigma}$ takes the form as
\begin{equation}
\bar{v} = \frac{(d+1)}{N m^2}\sum_{\Wcal \in \Wmat} \sum_{i,j=1}^m \mathbbm{1}\sbra{\bm s_i^{\Wcal} = {\bm s'}_j^{\Wcal}} - 1,
\end{equation}
where $\mathbbm{1}[a=b]$ amounts to the indicator function, $\bm s_i^{\Wcal}$ (or ${\bm s'}_j^{\Wcal}$) is the measured bitstring associated with unitary $\Wcal\in \Wmat$ for the quantum state $\rho$ (or $\sigma$). 
In addition, let the set of measurement outcomes in two platforms be $\Xmat = \{\bm s_i^{\Wcal_j}\}_{i,j}$. The variance of estimator $v(\Wmat, \Xmat)$ is bounded by
\begin{align}
\var\pbra{v(\Wmat, \Xmat)} = \Ocal\pbra{\frac{1}{Nd} + \frac{d}{N^2 m^2} + \frac{1}{Nm}}.
\label{eq:variance_collision_based_CP}
\end{align}
To ensure the variance in Eq.~\eqref{eq:variance_collision_based_CP}, the random unitaries $\cbra{\mathcal{W}_j}_{j=1}^N$ must be sampled from at least a 4-design and cannot be substituted with a tensor product of a local unitary group. This requirement poses significant implementation challenges given current quantum resources. Consequently, in the SI~\ref{supp:complexity_distance_based_alg}, we focus exclusively on optimizing the distance-based algorithm with classical communication.

\subsection{Preparation of the quantum state with circuit cutting}
Lowe et al.~\cite{Lowe2023fast} proposed the wire-cutting algorithm by introducing the random measurement technique.
The core idea is to introduce two channels $\Phi_0(\cdot)$ and $\Phi_1(\cdot)$ operating on $k$ qubits, such that
\begin{equation}
\Phi_0 = \Ebb_{\Ccal}\sbra{\Ccal^\dagger\circ M_{\bm s} \circ \Ccal},
\label{eq:phi_0}
\end{equation}
where $\Ccal$ is a randomly chosen Clifford circuit from $\Clgroup_k$, $M_{\bm s} = \sum_{\bm s} \superket{\bm s}\superbra{\bm s}$, and
\begin{equation}
    \Phi_1 = \superket{\Ibb}\superbra{\Ibb}.
    \label{eq:phi_1}
\end{equation}
By the twirling property of the Clifford group, let $z\in\cbra{0,1}$ be the Bernoulli random variable such that $\Pr[z = 0] = \frac{2^k}{2^{k+1}+1}$, where $k$ is the number of qubits of the channel $\Phi_0$ and $\Phi_1$. Then we can generate an identity channel
\begin{equation}\label{SI:eqn:wire-cut}
    \Ical_{2^k} = (2^{k+1}+1)\Ebb_z\sbra{(-1)^z \Phi_z}.
\end{equation}

In addition, supported by the corollary of Schur's lemma~\cite{fulton2013representation}, we observe that
\begin{equation}
 \Phi_0 = \Ebb_{\Ccal}\sbra{\Ccal\circ M_{\bm s} \circ \Ccal^{\dagger}} = \superket{\Ibb}\superbra{\Ibb} + \frac{1}{2^{k}+1} \Pi_1,  
 \label{eq:phi_expect_formula}
\end{equation}
where $\Pi_1=\sum_{P\in \cbra{\Ibb_2,X,Y,Z}^n\backslash \Ibb_{2^n}} \superket{P}\superbra{P}$.

We next illustrate how to apply the wire-cutting algorithm to the circuit depicted in Supplementary Figure~\ref{suppfig:cutting_technique}\textbf{a}, where an $n$-qubit circuit is partitioned into \textsf{Q}-\textsf{A} and \textsf{Q}-\textsf{B} with the number of cutting qubits being $k=\abs{A\cap B}$. The resulting subcircuits for \textsf{Q}-\textsf{A} and \textsf{Q}-\textsf{B} contain $n_A = k + k_A$ and $n_B = k + k_B$ qubits, respectively.   

Following the Liouville representation, the explored $n$-qubit state takes the form as
\begin{equation}
 \superket{\hat{\rho}}  = (\Ical_{2^{k_A}} \otimes \Ucal_B) \mathcal{I}_{2^k} (\Ucal_A \otimes \Ical_{2^{k_B}}) |0^n\rangle\!\rangle.
\end{equation}
Then, supported by Eq.~\eqref{eq:phi_expect_formula}, we have
\begin{equation}
  \superket{\hat{\rho}} = \mathbb{E}_{z}[\superket{\hat{\rho}_z}] \quad \text{with} \quad  \superket{\hat{\rho}_z} = (\Ical_{2^{k_A}} \otimes \Ucal_B)(2^{k+1} + 1)(-1)^z (\Ical_{2^{k_A}} \otimes \Phi_z \otimes \Ical_{2^{k_B}})(\Ucal_A \otimes \Ical_{2^{k_B}}) |0^n\rangle\!\rangle. 
   \label{eq:def_rho_z}
\end{equation}
It can also be easily shown that the coefficients before $\Phi_0$ and $\Phi_1$ are minimized when $\Pr[z = 1] = \frac{2^k}{2^{k+1}+1}$, yielding the smallest variance for the estimator.

Supplementary Figure~\ref{suppfig:cutting_technique}\textbf{b} visualizes the circuit after cutting. Sometimes we omit  $\Ical$ terms or the subscripts of $\Ical$ for simplicity, with a slight abuse of the notation. Note that we can only obtain an estimator of $\superket{\hat{\rho}_z} $ due to the randomness generated by the channels $\Phi_0$ and $\Phi_1$. We can formally define $\superket{\hat{\rho}_z} $ as  
\begin{align}
\superket{\hat{\rho}_z} = \begin{cases}
\Ucal_B(2^{k+1} + 1)\Ccal \frac{\superket{\bm c}_{\bm e}\superbra{\bm c}_{\bm e}\Ccal^\dagger \Ucal_A\superket{0^n}}{\superbra{\bm c}_{\bm e}\text{Tr}_{\bar{\bm e}}\pbra{\Ccal^\dagger \Ucal_A\superket{0^n}}}, &\text{if } z = 0\\
\Ucal_B (2^{k+1} + 1)(-1)
\frac{\superket{\bm c'}_{\bm e} \superbra{\bm c}_{\bm e}\Ccal^\dagger\Ucal_A\superket{0^n}}{\superbra{\bm c}_{\bm e }\text{Tr}_{\bar{\bm e}}\pbra{\Ccal^\dagger \Ucal_A\superket{0^n}}}
 , &\text{otherwise,}
\end{cases}
\label{eq:def_rho_z_approx}
\end{align}
where $\mathcal{C}$ is randomly sampled from the $k$-qubit Clifford group and acts on the set of cutting qubits $\bm{e} = A \cap B$. The reduced density matrix $\text{Tr}_{\bar{e}}(\rho)$ corresponds to qubits $\bm{e}$ after tracing out the remaining qubits. The bit-string $\bm{c}$ denotes the measurement outcomes on $\bm{e}$ after applying $\mathcal{U}_A$ to $\superket{0^{n_A}}$, while $\bm s$ represents the outcomes obtained by inserting $\mathcal{C}_{\bm e}^\dagger$ before measurement. The bit-string $\bm{c}'$ is independently sampled from the uniform distribution.

\begin{figure}
    \centering
    \includegraphics[width=0.8\linewidth]{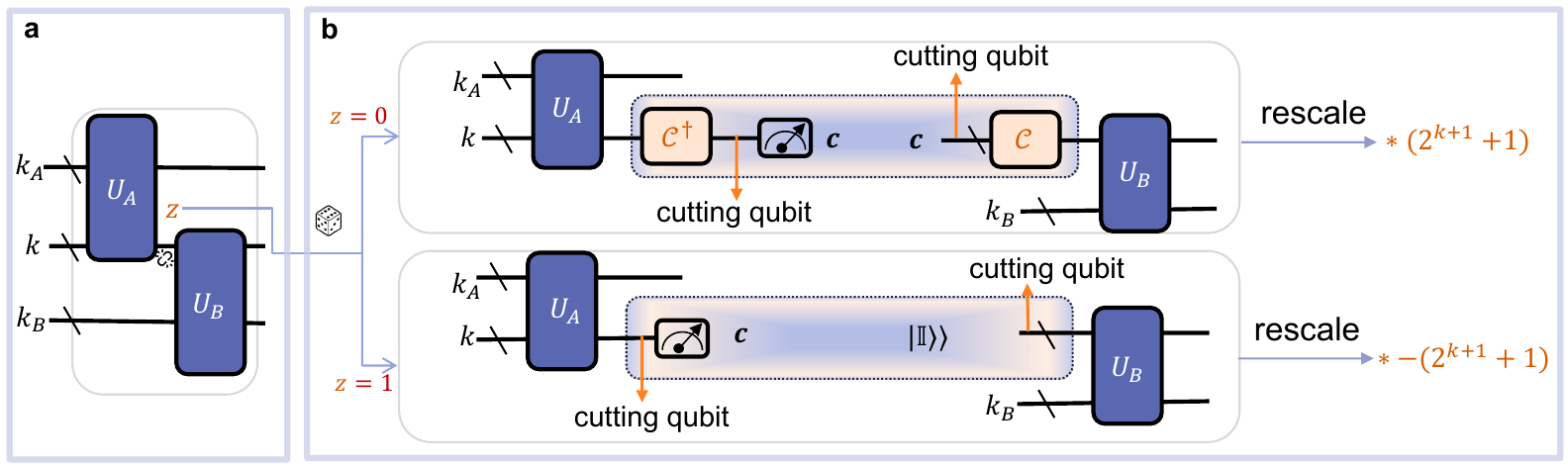}
    \caption{Schematic illustration of the circuit-cutting algorithm. We assume the cut involves $k$ qubits. \textbf{a.} The circuit before the cut. \textbf{b.} The sub-circuits after the cut. When the associated Bernoulli random variable equals $0$, the upper circuit is executed; when it equals $1$, the lower circuit is used. In the latter case, the inputs to the cut qubits are uniformly randomly selected from $\{0,1\}^k$ following measurement.
}
    \label{suppfig:cutting_technique}
\end{figure}

\subsection{Overview of quantum kernel}

Machine learning techniques have become pivotal in the era of big data, demonstrating significant potential in both classical computing~\cite{lecun2015deep, lin2017does} and quantum computing~\cite{carleo2017solving, van2017learning, cong2019quantum, du2025quantum}.
Among these techniques, kernel methods play a crucial role by enabling the application of linear algorithms to nonlinear problems through the use of feature maps. In the context of quantum computing, quantum kernel methods have emerged as a powerful tool in machine learning~\cite{Havl2019supervised,huang2021power,wang2021towards,peters2021machine,kubler2021inductive,glick2024covariant}. These methods involve mapping classical input data into a high-dimensional quantum feature space, allowing linear learning algorithms to be effectively applied to complex, non-linear data structures. Specifically, classical input data $\bm{x}$ is embedded into quantum states via a parameterized quantum circuit $U(\bm{x})$, and the quantum kernel is defined as the inner product between these quantum states.

For instance, in our setting, we encode the strength of the transverse magnetic field $h$ in the 1D Transverse Field Ising Model into the corresponding ground state $\ket{\psi(h)}=U(h)\ket{0}^{\otimes n}$, where $U(h)$ consists of tunable single-qubit gates and fixed quantum gates with a predefined gate layout. Given $R$ data points $\{h^{(i)}, y^{(i)}\}_{i=1}^R$ with $y^{(i)}$ being the label for the $i$-th example, we define an $R \times R$ quantum kernel matrix $K$ with entries  
\begin{align}
  K_{ij} = \braket{\psi(h^{(i)})|\psi(h^{(j)})}.
\end{align}

The resulting kernel matrix can be further processed using classical machine learning models such as Support Vector Machines (SVM), Support Vector Regression (SVR)~\cite{scholkopf2002learning, cherkassky2004practical, smola2004tutorial}, or other kernel-based algorithms~\cite{wu2023quantum} to perform learning and inference. In our experiments, we employ the SVR model, which extends the principles of SVM to regression tasks by finding a function that deviates from the true targets by no more than a specified margin while maintaining model simplicity.

\section{Complexity analysis of Distance-based Cross-platform fidelity estimation algorithm}
\label{supp:complexity_distance_based_alg}
Here, we provide a theoretical analysis of the sampling complexity of the distance-based algorithm introduced in SI~\ref{supp:preliminary}, addressing a gap in previous studies that focused solely on numerical results.

Let $\Xi^{(k)}= \Ebb_{\Wcal\in \mu_H}\sbra{\Wcal^{\otimes k}}$ be the $k$-th moment operator with respect to the probability measure $\mu_H$, where $\mu_H$ is defined as the Haar measure on the unitary group.
\begin{lemma}[Ref.~\cite{mele2024introduction}]
The average channel of the $2$-th moment of Haar measure  $\Xi^{(2)}:=\Ebb_{\Wcal\sim \mu_H} \sbra{\Wcal^{\otimes 2}}$ satisfies 
\begin{align}
    \Xi^{(2)}{\superket{\bm{s_1}}\otimes\superket{\bm{s_2}}}=\begin{cases}
        &\frac{1}{d^2-1}\Ibb + \frac{\mathbb{F}}{d(1-d^2)}\text{ if }\bm{s_1}\ne \bm{s_2},\\
        &\frac{1}{d(d+1)}\pbra{\Ibb + \mathbb{F}}\text{ otherwise,}
    \end{cases}
\end{align}
where $\mathbb{F}$ is the $2n$-qubit SWAP gate which exchanges the $j$-th and $(j+n)$-th qubits for $j = 1, \dots, n$, and each $\bm{s_l}\in \cbra{0,1}^{n}$ for $l\in [2]$.
\label{lem:Haar_second}
\end{lemma}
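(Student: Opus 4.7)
The plan is to reduce the statement to the standard second-moment Haar twirl formula followed by a direct substitution. First I would recall that, by Schur--Weyl duality, the commutant of the diagonal action $W\mapsto W\otimes W$ of $U(d)$ on $(\mathbb{C}^d)^{\otimes 2}$ is the two-dimensional algebra spanned by $\{\Ibb,\mathbb{F}\}$. Consequently, for every operator $X$ on $(\mathbb{C}^d)^{\otimes 2}$,
\begin{equation*}
\Ebb_{W\sim\mu_H}\sbra{W^{\otimes 2}\, X\, (W^\dagger)^{\otimes 2}} \;=\; \alpha(X)\,\Ibb \;+\; \beta(X)\,\mathbb{F},
\end{equation*}
and the coefficients $\alpha(X),\beta(X)$ are pinned down by taking traces against $\Ibb$ and $\mathbb{F}$ on both sides, using $\tr{\Ibb}=d^2$, $\tr{\mathbb{F}}=d$, and $\tr{\mathbb{F}^2}=d^2$. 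This yields the $2\times 2$ linear system
\begin{align*}
d^2\alpha(X) + d\beta(X) &= \tr{X}, \\
d\alpha(X) + d^2\beta(X) &= \tr{\mathbb{F}X},
\end{align*}
whose unique solution is $\alpha(X)=(d\tr{X}-\tr{\mathbb{F}X})/(d(d^2-1))$ and $\beta(X)=(d\tr{\mathbb{F}X}-\tr{X})/(d(d^2-1))$.

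Next I would identify the Liouville vector $\superket{\bm{s_1}}\otimes\superket{\bm{s_2}}$ with the operator $X_{\bm{s_1},\bm{s_2}}:=\ket{\bm{s_1}}\!\bra{\bm{s_1}}\otimes\ket{\bm{s_2}}\!\bra{\bm{s_2}}$, so that $\Xi^{(2)}\pbra{\superket{\bm{s_1}}\otimes\superket{\bm{s_2}}}$ coincides (as a vectorized operator) with the Haar twirl of $X_{\bm{s_1},\bm{s_2}}$. A one-line calculation gives $\tr{X_{\bm{s_1},\bm{s_2}}}=1$ for every pair and $\tr{\mathbb{F}\,X_{\bm{s_1},\bm{s_2}}}=|\langle \bm{s_1}|\bm{s_2}\rangle|^2=\delta_{\bm{s_1},\bm{s_2}}$. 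Substituting these two scalars into the formulas for $\alpha$ and $\beta$ immediately reproduces the two displayed cases: when $\bm{s_1}\neq\bm{s_2}$ the coefficients become $1/(d^2-1)$ and $-1/(d(d^2-1))=1/(d(1-d^2))$, and when $\bm{s_1}=\bm{s_2}$ both coefficients collapse to $1/(d(d+1))$, matching the two branches of the lemma.

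The only step carrying any real content is the Schur--Weyl (equivalently Weingarten) decomposition that justifies the ansatz $\alpha\Ibb+\beta\mathbb{F}$. Since this is a textbook fact already covered by the cited tutorial~\cite{mele2024introduction} and the Collins--\'Sniady Weingarten computation in Refs.~\cite{collins2006integration,collins2003moments}, I would simply invoke it rather than rederive it; the remainder of the argument is a pair of trace evaluations. Accordingly, no serious obstacle is anticipated---the main thing to be careful about is making the identification between the Liouville vector $\superket{\bm{s_1}}\otimes\superket{\bm{s_2}}$ and the operator $\ket{\bm{s_1}}\!\bra{\bm{s_1}}\otimes\ket{\bm{s_2}}\!\bra{\bm{s_2}}$ unambiguous, so that the twirl formula applies verbatim.
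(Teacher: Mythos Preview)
Your proposal is correct: the Schur--Weyl ansatz $\alpha\Ibb+\beta\mathbb{F}$ followed by the two trace conditions is exactly the standard derivation, and your arithmetic checks out in both cases. Note that the paper does not actually supply a proof of this lemma---it is stated with a citation to the Haar-measure tutorial~\cite{mele2024introduction} and used as a black box---so your write-up already goes beyond what the paper does here.
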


\begin{lemma}[Refs.~\cite{mele2024introduction,collins2006integration,collins2003moments,brandao2016local}]
The average channel of the $t$-th moment of Haar measure $\Xi^{(t)}:=\Ebb_{\Wcal\sim\mu_H}\Wcal^{\otimes t}(\cdot)$ satisfies
\begin{equation}
    \superbra{A}\Xi^{(t)}\superket{B} = \sum_{\zeta,\tau\in S_t} C_{\zeta,\tau} \tr{A V_{\zeta}} \tr{V_{\tau}^\dagger B},
\end{equation}
for any linear operators $A,B \in \Lcal(\Hcal_d)$, where $C$ is the pseudo-inverse of $G$, and $G$ is a Gram matrix with $G_{\zeta,\tau} = d^{\text{cycles}(\zeta^{\dagger} \tau)}$, $\text{cycles}(\tau)$ denotes the number of cycles in the permutation $\tau$, $V_{\zeta}, V_{\tau}$ are two permutation matrices corresponding to $\zeta,\tau\in S_{t}$.
\label{lem:Haar_four}
\end{lemma}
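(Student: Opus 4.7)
The plan is to combine unitary invariance of the Haar measure with Schur--Weyl duality to restrict $\Xi^{(t)}$ to the algebra spanned by permutation operators, and then pin down the coefficient matrix $C$ by demanding that $\Xi^{(t)}$ act as the identity on this algebra.

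First, I would argue that both the range and the coimage of $\Xi^{(t)}$ lie in $\mathrm{span}\{\superket{V_\sigma} : \sigma \in S_t\}$. By left-invariance of the Haar measure, for every $V \in U(d)$ and every operator $X$ on $\Hcal_d^{\otimes t}$ one has $V^{\otimes t}\,\Xi^{(t)}(X)\,(V^\dagger)^{\otimes t} = \Xi^{(t)}(X)$, so $\Xi^{(t)}(X)$ lies in the commutant of $\{V^{\otimes t}\}_{V \in U(d)}$; by Schur--Weyl duality this commutant is exactly $\mathrm{span}\{V_\sigma\}_{\sigma \in S_t}$. A symmetric right-invariance argument shows that $\Xi^{(t)}$ vanishes on the Hilbert--Schmidt orthogonal complement of $\mathrm{span}\{\superket{V_\sigma}\}$ on the input side. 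Hence $\Xi^{(t)}$ admits an expansion
\begin{equation}
\Xi^{(t)} = \sum_{\zeta, \tau \in S_t} C_{\zeta, \tau}\, \superket{V_\zeta}\superbra{V_\tau},
\end{equation}
for some matrix $C$ indexed by $S_t \times S_t$; evaluating the matrix element $\superbra{A}\Xi^{(t)}\superket{B}$ with the Hilbert--Schmidt convention of SI~\ref{supp:preliminary} immediately produces the bilinear sum $\sum_{\zeta,\tau} C_{\zeta,\tau}\,\tr{A V_\zeta}\,\tr{V_\tau^\dagger B}$ claimed in the lemma.

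Next, I would identify $C$ by exploiting the fact that $U^{\otimes t} V_\alpha (U^\dagger)^{\otimes t} = V_\alpha$ for every $\alpha \in S_t$ and every $U$, so $\Xi^{(t)}\superket{V_\alpha} = \superket{V_\alpha}$. Substituting the ansatz above and contracting with $\superbra{V_\beta}$ yields, for every $\alpha, \beta \in S_t$,
\begin{equation}
G_{\beta, \alpha} = \sum_{\zeta, \tau \in S_t} G_{\beta, \zeta}\, C_{\zeta, \tau}\, G_{\tau, \alpha},
\end{equation}
i.e., the matrix identity $G = G C G$ with the Gram matrix $G_{\zeta, \tau} = \tr{V_\zeta^\dagger V_\tau} = d^{\mathrm{cycles}(\zeta^{-1}\tau)}$. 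Combined with the support conditions $\mathrm{range}(C) \subseteq \mathrm{range}(G)$ and $\ker(C) \supseteq \ker(G)$ forced by the Schur--Weyl argument on both sides, this uniquely characterises $C$ as the Moore--Penrose pseudo-inverse of $G$.

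The main obstacle is the singular regime $t > d$, in which the permutation operators $\{V_\sigma\}_{\sigma \in S_t}$ become linearly dependent and $G$ degenerates, so that $G C G = G$ by itself is underdetermined. Handling this cleanly requires restricting to a basis of $\mathrm{span}\{\superket{V_\sigma}\}$, inverting the non-degenerate Gram matrix on that subspace, and then lifting the inverse back to the overcomplete permutation basis --- and the Moore--Penrose pseudo-inverse is precisely the operator that implements this lift. For $t \leq d$ the $\{V_\sigma\}$ are linearly independent, $G$ is invertible, and one recovers the standard Weingarten formula $C_{\zeta, \tau} = \mathrm{Wg}(\zeta^{-1}\tau, d)$. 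As a sanity check I would verify that Lemma~\ref{lem:Haar_second} drops out of the case $t = 2$ with $S_2 = \{e, (1\,2)\}$, $V_e = \Ibb$, and $V_{(1\,2)} = \Fbb$, whose Gram matrix has explicit inverse $\tfrac{1}{d(d^2-1)}\begin{pmatrix} d & -1 \\ -1 & d \end{pmatrix}$ matching the coefficients in that lemma.
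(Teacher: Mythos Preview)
The paper does not supply its own proof of this lemma: it is stated with attribution to Refs.~\cite{mele2024introduction,collins2006integration,collins2003moments,brandao2016local} and used as a black box in the subsequent variance bounds. Your argument---Schur--Weyl duality to confine $\Xi^{(t)}$ to $\mathrm{span}\{\superket{V_\sigma}\}$, then the idempotency condition $\Xi^{(t)}\superket{V_\alpha}=\superket{V_\alpha}$ to force $G=GCG$, hence $C=G^{+}$---is precisely the standard derivation of the Weingarten formula given in those references, and is correct, including your handling of the degenerate regime $t>d$ and the $t=2$ sanity check against Lemma~\ref{lem:Haar_second}.
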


\begin{proposition} 
Following the notation in Lemma~\ref{lem:Haar_four}, we have $\sum_{\tau\in S_t} C_{\zeta, \tau} = \frac{1}{(d+t-1)^{\underline{t}}}$ for any fixed permutation $\zeta$.  
\label{pro:Haar_coefs}
\end{proposition}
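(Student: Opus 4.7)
The plan is to exploit the fact that the all-ones vector indexed by $S_t$ is an eigenvector of the Gram matrix $G$, which immediately inverts to give the claim. Writing $\mathbf 1$ for the column vector with $\mathbf 1_\tau = 1$ for every $\tau\in S_t$, the key observation is that the quantity $\sum_{\tau\in S_t} C_{\zeta,\tau}$ is exactly the $\zeta$-th component of $C\mathbf 1$, so the task reduces to computing $G\mathbf 1$ and then applying the pseudoinverse.

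First I would compute $(G\mathbf 1)_\zeta = \sum_{\tau\in S_t} d^{\,\text{cycles}(\zeta^{\dagger}\tau)}$. The map $\tau\mapsto \zeta^{\dagger}\tau$ is a bijection on $S_t$, so this sum equals $\sum_{\sigma\in S_t} d^{\,\text{cycles}(\sigma)}$, independent of $\zeta$. Next I would invoke the standard Stirling-number generating identity (see e.g.~\cite{graham1989concrete})
\begin{equation}
\sum_{\sigma\in S_t} x^{\,\text{cycles}(\sigma)} \;=\; x(x+1)\cdots(x+t-1),
\end{equation}
specialized at $x=d$, giving $(G\mathbf 1)_\zeta = d(d+1)\cdots(d+t-1) = (d+t-1)^{\underline{t}}$ under the falling-factorial convention fixed in SI~\ref{supp:preliminary}. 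Hence $G\mathbf 1 = \lambda\mathbf 1$ with $\lambda := (d+t-1)^{\underline{t}}$, so $\mathbf 1$ is an eigenvector of $G$ with strictly positive eigenvalue $\lambda$.

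Finally, since $\lambda>0$, the vector $\mathbf 1$ lies in the range of $G$, and because $G$ is symmetric (indeed $G_{\zeta,\tau}=d^{\,\text{cycles}(\zeta^{\dagger}\tau)} = G_{\tau,\zeta}$), the Moore--Penrose pseudoinverse $C=G^{+}$ acts on eigenvectors of $G$ with nonzero eigenvalue by inverting the eigenvalue. Therefore $C\mathbf 1 = \lambda^{-1}\mathbf 1$, and reading off the $\zeta$-th entry gives
\begin{equation}
\sum_{\tau\in S_t} C_{\zeta,\tau} \;=\; \frac{1}{(d+t-1)^{\underline{t}}},
\end{equation}
as claimed. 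The only non-routine ingredient is the cycle-index identity in the second step; everything else is a direct one-line pseudoinverse argument, so I do not anticipate any genuine obstacle. One minor point worth checking is the degenerate regime $d<t$, where $G$ becomes singular, but the argument above only uses that $\mathbf 1$ is an eigenvector with nonzero eigenvalue, which remains true for all $d\ge 1$, so the pseudoinverse still acts as claimed.
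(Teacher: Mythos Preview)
Your proof is correct and takes a genuinely different route from the paper's. The paper argues representation-theoretically: it applies the twirl $\Xi^{(t)}$ to the permutation-invariant state $\superket{\bm 0}^{\otimes n}$, so that $\tr{V_\tau^\dagger B}=1$ collapses the sum over $\tau$, then uses the invariance $V_\tau\,\Xi^{(t)}\superket{\bm 0}^{\otimes n}=\Xi^{(t)}\superket{\bm 0}^{\otimes n}$ to conclude that all row sums $C_\zeta:=\sum_\tau C_{\zeta,\tau}$ coincide, and finally takes a trace to pin down the common value via $\sum_\zeta \tr{V_\zeta}=\sum_\zeta d^{\text{cycles}(\zeta)}=t!\binom{d+t-1}{t}$. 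Your argument is purely linear-algebraic: you recognize $\sum_\tau C_{\zeta,\tau}=(C\mathbf 1)_\zeta$, show $\mathbf 1$ is an eigenvector of the symmetric Gram matrix $G$ with eigenvalue $(d+t-1)^{\underline t}$ via the same Stirling-number identity, and invert. Both proofs hinge on the identity $\sum_{\sigma\in S_t} d^{\text{cycles}(\sigma)}=d(d+1)\cdots(d+t-1)$; the paper hides it in the trace step, while you invoke it explicitly. Your route is more self-contained and avoids the Haar-twirl machinery entirely, which is a small gain in elementarity; the paper's route is slightly more informative in that it exhibits the actual operator $\sum_\zeta C_\zeta V_\zeta$ as the twirled state. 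Your remark about the $d<t$ regime and the pseudoinverse is a nice robustness check that the paper does not address.
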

\begin{proof}

By Lemma~\ref{lem:Haar_four}, we have
\begin{align}
    \Xi^{(t)}\superket{\bm 0}^{\otimes n} = \sum_{\zeta \in S_t} \sum_{\tau\in S_t}C_{\zeta,\tau}V_{\zeta}=  \sum_{\zeta \in S_t} C_{\zeta}V_{\zeta},
\end{align} 
where the second equality holds by letting $C_{\zeta} := \sum_{\tau} C_{\zeta,\tau}$, which is invariant under the transformation of $V_\tau$. Then we obtain
\begin{align}
  \sum_{\zeta} C_{\zeta} V_{\zeta} = \Xi^{(t)}\superket{\bm 0}^{\otimes n} &=   V_{\tau}\Xi^{(t)}\superket{\bm 0}^{\otimes n}\\
  &= V_{\tau} \sum_{\zeta}  C_{\zeta}  V_{\zeta}\\
  &=\sum_{\tau\zeta}C_{\tau\zeta} V_{\zeta}.
\end{align}

Hence $C_{\tau\zeta} = C_{\zeta}=C_{\Ibb}$ for any $\zeta, \tau$. By taking the trace, we have $C_{\Ibb} = \frac{1}{t!{d+t-1\choose t}} = \frac{1}{(d+t-1)^{\underline{t}}}$.  
\end{proof}

The following lemma serves as a key component in the proof of our main theorem. Here, $s, s', b, b'$ denote the measurement outcomes for a single qubit. This lemma is employed to bound the variance of the cross-platform fidelity when local random unitaries are applied.
\begin{lemma}
\begin{equation}
    \sum_{\tau \in S_4} \sum_{ s, s',b,  b'\in\cbra{0,1}} (-2)^{-D(s,  s') - D(b,b')} \tr{V_{\tau} \superket{s} \superket{s'} \superket{b} \superket{b'}} = 36,
\end{equation}
where $V_{\tau}$ is the permutation operator associated with permutation $\tau\in S_4$.
\label{lem:intermediate_result_crossplatform}
\end{lemma}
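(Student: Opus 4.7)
The plan is to reduce the sum over $\tau \in S_4$ using the cycle-trace identity, then partition $S_4$ by conjugacy class and evaluate each case using a few elementary sums over $\{0,1\}$.

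First, I would invoke the standard cycle-trace formula: for any $\tau \in S_t$ with cycle decomposition into cycles $c_1, \ldots, c_\ell$ and operators $A_1, \ldots, A_t$,
\[
\tr{V_\tau (A_1 \otimes \cdots \otimes A_t)} = \prod_{i=1}^{\ell} \tr{\prod_{j \in c_i} A_j}.
\]
Setting each $A_j = |x_j\rangle\langle x_j|$ with $x_j \in \{0,1\}$, each cycle factor $\tr{\prod_{j \in c_i} |x_j\rangle\langle x_j|}$ equals the indicator that all labels in the cycle are equal, since a product of rank-one projectors onto distinct orthonormal vectors vanishes. Hence each permutation $\tau$ restricts the $(s, s', b, b')$-sum to tuples respecting the equality pattern of its cycles.

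Next, I would split $S_4$ into its five conjugacy classes (cycle types $1^4$, $2\,1^2$, $2^2$, $3\,1$, $4$, with $1, 6, 3, 8, 6$ elements respectively). For each class I would evaluate the constrained weight sum
\[
W_\tau \;=\; \sum_{(s,s',b,b')\text{ respecting }\tau}(-2)^{-D(s,s')-D(b,b')},
\]
which reduces to combinations of the elementary identities
\[
\sum_{s,s' \in \cbra{0,1}} (-2)^{-D(s,s')} = 1, \qquad \sum_{s'} (-2)^{-D(s,s')} = \tfrac{1}{2} \text{ (fixed }s\text{)}, \qquad \sum_{s,s'} (-2)^{-2 D(s,s')} = \tfrac{5}{2}.
\]
The last identity is triggered precisely when a cycle couples the pair $\{s, s'\}$ to the pair $\{b, b'\}$.

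The main subtlety is that permutations in the same conjugacy class do \emph{not} all give the same $W_\tau$, because the weight $(-2)^{-D(s,s')-D(b,b')}$ distinguishes the "aligned" pair structure $\{s,s'\}, \{b,b'\}$ from pair-crossing patterns. Among the $6$ transpositions, the two aligned ones $(s\,s'), (b\,b')$ yield a different contribution from the four crossing ones $(s\,b), (s\,b'), (s'\,b), (s'\,b')$; among the $3$ double transpositions, the aligned $(s\,s')(b\,b')$ differs from the two crossing $(s\,b)(s'\,b')$ and $(s\,b')(s'\,b)$. Once these sub-cases are enumerated, I expect the five class totals to be $1$ (identity), $6$ (transpositions: $2\cdot 2 + 4\cdot \tfrac{1}{2}$), $9$ (double transpositions: $4 + 2 \cdot \tfrac{5}{2}$), $8$ ($3$-cycles, which force three variables equal and leave one free), and $12$ ($4$-cycles, each contributing $2$), summing to $36$. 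The main obstacle is purely bookkeeping: tracking how each conjugacy class splits into aligned versus crossing sub-patterns, and correctly weighting each by the number of permutations realizing that pattern. No conceptual difficulty arises since every partial sum is a closed-form Kronecker-delta calculation.
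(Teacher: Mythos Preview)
Your proof is correct. The cycle-trace identity, the three elementary weight sums, and the case split into aligned versus crossing sub-patterns within each conjugacy class are all handled correctly, and the class totals $1+6+9+8+12=36$ check out.

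The paper's own proof takes the complementary route: rather than summing over $\tau$ first and then over $(s,s',b,b')$ subject to the cycle constraints, it fixes each of the $16$ tuples $(s,s',b,b')$, tabulates $y_1=(-2)^{-D(s,s')-D(b,b')}$ and $y_2=\sum_{\tau\in S_4}\tr{V_\tau\,|s\rangle\langle s|\otimes|s'\rangle\langle s'|\otimes|b\rangle\langle b|\otimes|b'\rangle\langle b'|}$ in a table, and then sums $y_1 y_2$ row by row. Both are finite enumerations over the same $16\times 24$ grid, just traversed in opposite orders. Your approach is the more structured of the two: by invoking the cycle-trace formula and the closed-form identities for the single-qubit weight sums, the bookkeeping is organized by symmetry type and the computation would scale more gracefully (e.g., to larger $t$ or to the analogous $\sum(-2)^{-2D}$ sum used later in the paper). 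The paper's brute-force table has the virtue of being entirely mechanical and easy to audit. Neither approach involves any idea the other lacks; they differ only in how the double sum is grouped.
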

\begin{proof}
We enumerate all combinations of $s, s', b, b' \in \cbra{0,1}$ and list the corresponding values of
$y_1 := (-2)^{-D(s, s') - D( b, b')}$ and
$y_2 := \sum_{\tau \in S_4} \tr{V_{\tau} \superket{s} \superket{s'} \superket{b} \superket{b'}}$
in Supplementary Table~\ref{supptab:intermediate_proof_tab}. Using these tabulated values, we immediately obtain the following result: 
\begin{equation}
\sum_{\tau \in S_4} \sum_{s, s',b, b' \in \cbra{0,1}} (-2)^{-D( s, s') - D( b, b')} \tr{V_{\tau} \superket{s} \superket{s'} \superket{b} \superket{b'}} = 36. 
\end{equation}

\begin{table}[ht]
\centering
\caption{Enumerate the values of $y_1$ and $y_2$ for all combinations of $s, s', b, b'$ in $\cbra{0,1}$.}
\begin{tabular}{|c |c | c | c | c | c|}
\hline
$s$ & $s'$ & $b$ & $b'$ & $y_1 := (-2)^{-D( s, s') - D(b, b')}$ & $y_2 := \sum_{\tau \in S_4} \tr{V_{\tau} \superket{s} \superket{s'} \superket{b} \superket{b'}}$\\
\hline
 0 & 0 & 0 & 0 & \multirow{3}{*}{1} & $4!$ \\
\cline{1-4} \cline{6-6} 
0 & 0 & 1 & 1 &  & $4$ \\
\cline{1-4} \cline{6-6} 
1 & 1 & 0 & 0 &  & $4$ \\
\cline{1-4} \cline{6-6} 
1 & 1 & 1 & 1 & & $4!$ \\
\hline
 0 & 0 & 0 & 1 & \multirow{8}{*}{$-\frac{1}{2}$} & \multirow{8}{*}{$3!$} \\
\cline{1-4}  
 0 & 0 & 1 & 0 &  & \\
\cline{1-4} 
 1 & 1 & 0 & 1 & & \\
\cline{1-4} 
 1 & 1 & 1 & 0 &  & \\
\cline{1-4} 
 0 & 1 & 0 & 0 &  & \\
\cline{1-4} 
 1 & 0 & 0 & 0 &  &  \\
\cline{1-4} 
 0 & 1 & 1 & 1 &  & \\
\cline{1-4} 
 1 & 0 & 1 & 1 & &  \\
 \hline
 0 & 1 & 0 & 1 & \multirow{4}{*}{$-\frac{1}{4}$} & \multirow{4}{*}{$4$} \\
\cline{1-4} 
 0 & 1 & 1 & 0 & & \\
\cline{1-4} 
 1 & 0 & 1 & 0 & &  \\
\cline{1-4} 
 1 & 0 & 0 & 1 &  &  \\
\hline
\end{tabular}
\label{supptab:intermediate_proof_tab}
\end{table}
\end{proof}

In the following, we prove that the variance of the estimator defined in Eq.~\eqref{eq:cross_platformEstimator}
can be bounded to $\frac{1}{N} \pbra{c + \frac{c'}{m^2}(5/2)^n}$ when the random unitary $\Wcal_t$ is sampled from a 4-design. The following theorem demonstrates that the global distance-based cross-platform algorithm may underperform compared to the collision-based algorithm.
\begin{theorem}
The number of measurements can be bounded by $\Ocal\pbra{\sqrt{(5/2)^n}/\varepsilon^2}$ using the global distance-based cross-platform algorithm where the random unitary are sampled from a 4-design, to estimate $\tr{\rho\sigma}$ with $\varepsilon$ error and high success probability.
\label{thm:Global_crossplatform}
\end{theorem}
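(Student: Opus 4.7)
The plan is to bound $\var[\bar v]$ directly and then invoke Chebyshev's inequality to convert the variance bound into the claimed high-probability error. Since $\bar v = \frac{1}{N}\sum_{t=1}^N Y_t$ is an average of $N$ i.i.d.\ summands $Y_t = 2^n\sum_{\bm s,\bm s'}(-2)^{-D(\bm s,\bm s')}\hat p_{\bm s}(\Wcal_t)\hat q_{\bm s'}(\Wcal_t)$, I have $\var[\bar v] = \var[Y]/N$, and the law of total variance splits $\var[Y]$ into a Haar piece $\var_{\Wcal}[\Ebb_{\mathrm{shots}}[Y\mid\Wcal]]$ and a shot-noise piece $\Ebb_{\Wcal}[\var_{\mathrm{shots}}[Y\mid\Wcal]]$. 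Unbiasedness $\Ebb[\bar v]=\tr{\rho\sigma}$ follows immediately from the $2$-design twirl in Lemma~\ref{lem:Haar_second}, so only the two variance pieces need to be bounded.

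For the Haar piece, let $X = 2^n\sum_{\bm s,\bm s'}(-2)^{-D(\bm s,\bm s')}p_{\bm s}(\Wcal)q_{\bm s'}(\Wcal)$ be the infinite-shot estimator. Applying the Weingarten formula of Lemma~\ref{lem:Haar_four} with $t=4$ expands $\Ebb_{\Wcal}[p_{\bm s}q_{\bm s'}p_{\bm b}q_{\bm b'}]$ as a sum over $\zeta,\tau\in S_4$ with Weingarten coefficients $C_{\zeta,\tau}$ and state-traces $T_\tau:=\tr{V_\tau^\dagger(\rho\otimes\sigma\otimes\rho\otimes\sigma)}$ satisfying $|T_\tau|\le 1$. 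The structural observation that drives the analysis is that on $n$ qubits the four-copy permutation operator factorizes as $V_\zeta = (V_\zeta^{\mathrm{single}})^{\otimes n}$, so the bitstring sum
\begin{equation*}
\sum_{\bm s,\bm s',\bm b,\bm b'}(-2)^{-D(\bm s,\bm s')-D(\bm b,\bm b')}\tr{V_\zeta\superket{\bm s}\superket{\bm s'}\superket{\bm b}\superket{\bm b'}}
\end{equation*}
factorizes across qubits into $f(\zeta)^n$, where $f(\zeta)$ is the single-qubit summand appearing in the proof of Lemma~\ref{lem:intermediate_result_crossplatform}. A short enumeration over $S_4$ shows $f(\zeta)\le 4$ for every $\zeta$ (attained uniquely at $\zeta=(12)(34)$), while the standard Weingarten asymptotics give $|C_{\zeta,\tau}|=\Ocal(d^{-4})=\Ocal(16^{-n})$; combined with the $4^n$ prefactor in $X^2$ and the finite $|S_4|^2=576$ summation, this yields $\Ebb_{\Wcal}[X^2]\le c$ for an absolute constant $c$, hence a contribution at most $c/N$ to $\var[\bar v]$.

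For the shot-noise piece, expanding $\Ebb_{\mathrm{shots}}[\hat p_{\bm s}\hat p_{\bm b}] = (1-\tfrac1m)p_{\bm s}p_{\bm b}+\delta_{\bm s,\bm b}p_{\bm s}/m$ and the analogous identity for $\hat q$ leaves a unique $1/m^2$ contribution arising from the product of the two ``collision'' deltas, which collapses $\bm b=\bm s$, $\bm b'=\bm s'$ and reduces the dominant term to $(4^n/m^2)\,\Ebb_{\Wcal}\sum_{\bm s,\bm s'}(-2)^{-2D(\bm s,\bm s')}p_{\bm s}(\Wcal)q_{\bm s'}(\Wcal)$. Evaluating the inner expectation via Lemma~\ref{lem:Haar_second} and invoking the single-qubit identity $\sum_{s,s'\in\{0,1\}}(-2)^{-2D(s,s')}=1+\tfrac14+\tfrac14+1=\tfrac52$ produces the factor $(5/2)^n$ after qubit-wise factorization; with the $4^n/d^2=1$ cancellation this piece contributes at most $c'(5/2)^n/(Nm^2)$, while the remaining $\Ocal(1/m)$ cross terms are lower-order and absorbed into the same bound. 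Combining the two pieces gives
\begin{equation*}
\var[\bar v]\;\le\;\frac{1}{N}\!\left(c+\frac{c'\,(5/2)^n}{m^2}\right),
\end{equation*}
and minimizing the total sample count $T=Nm$ subject to $\var[\bar v]\le\varepsilon^2$ under the two constraints $N\gtrsim c/\varepsilon^2$ and $Nm^2\gtrsim c'(5/2)^n/\varepsilon^2$ yields $T=\Ocal(\sqrt{(5/2)^n}/\varepsilon^2)$; Chebyshev's inequality then delivers the stated high-probability error bound.

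The main technical obstacle I anticipate is the careful bookkeeping of the $4$-design Weingarten expansion in the Haar-variance step: one must verify that the ``bad'' permutation $\zeta=(12)(34)$ with $f(\zeta)=4$ is genuinely suppressed by the $\Ocal(d^{-4})$ decay of $C_{\zeta,\tau}$ rather than merely cancelled between terms, and this is precisely where the \emph{global} (as opposed to local) $4$-design hypothesis is essential, since a tensor product of local $4$-designs would only yield $\Ocal(16^{-1})$ Weingarten suppression per qubit and would leave an uncontrollable $4^n$ growth uncovered.
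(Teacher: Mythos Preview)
Your proposal is correct and follows essentially the same architecture as the paper's proof: law of total variance splits $\var[Y]$ into a Haar piece bounded by an absolute constant via the $4$-design Weingarten expansion, and a shot-noise piece that produces the $(5/2)^n/m^2$ factor via the $2$-design twirl; the final optimization $Nm=\sqrt{N}\cdot\sqrt{Nm^2}$ is identical.

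The one place where your execution differs from the paper is the bookkeeping for the Haar piece. The paper bounds the double sum over $\zeta,\tau\in S_4$ by invoking Proposition~\ref{pro:Haar_coefs}, i.e.\ the exact row-sum identity $\sum_{\tau}C_{\zeta,\tau}=1/(d+3)^{\underline{4}}$, together with the claim that the full bitstring sum over $\bm s,\bm s',\bm b,\bm b'$ and $\tau$ equals $c_1\,4^n$. You instead exploit the qubit-wise tensor factorization $V_\zeta=(V_\zeta^{(2)})^{\otimes n}$ to write the bitstring sum for each fixed $\zeta$ as $f(\zeta)^n$ with $\max_\zeta f(\zeta)=4$ (at $\zeta=(12)(34)$), and then kill this with the individual Weingarten asymptotic $|C_{\zeta,\tau}|=\Ocal(d^{-4})=\Ocal(16^{-n})$. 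Both routes land on the same $\Ocal(1)$ bound; yours is arguably more transparent about why the global $4$-design is essential, since the factorization makes explicit that a merely local $4$-design would supply only $\Ocal(16^{-1})$ suppression per qubit and fail to control the $4^n$ growth of $f((12)(34))^n$. Your treatment of the shot-noise term via the exact second-moment expansion $\Ebb[\hat p_{\bm s}\hat p_{\bm b}]=(1-\tfrac1m)p_{\bm s}p_{\bm b}+\tfrac1m\delta_{\bm s,\bm b}p_{\bm s}$ is also slightly more careful than the paper's quick reduction to the single-shot variance $v_0$, though both yield the same leading $(5/2)^n/m^2$ behavior.
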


\begin{proof}
Define the estimator as
\begin{equation}
    v = 2^n \sum_{\bm s, \bm s'} (-2)^{-D(\bm s, \bm s')}\hat{p}_{\bm s}(\Wcal)\hat{q}_{\bm s'} (\Wcal),
    \label{eq:cpEstimator1}
\end{equation}
where $\Wcal$ is sampled from a 4-design. 
By the definition of $\bar{v}$ in Eq.~\eqref{eq:cross_platformEstimator}, we have $\var\pbra{\bar{v}} = \frac{\var\pbra{v}}{N}$. Since the randomness of the estimator $\bar{v}$ comes from both $\Wcal$ and measurement $\bm s$, we denote $v$ as $v(S,\Wcal)$, where $S$ denotes the set of all the measured shots.  Then, due to the total variance formula, we have
\begin{equation}
\var\pbra{v(S,\Wcal)} = \var_{\Wcal}\pbra{\Ebb_S\sbra{v(S,\Wcal)|\Wcal}} + \Ebb_{\Wcal} \sbra{ \var_{S}\pbra{v(S,\Wcal)|\Wcal}}.
\label{eq:totalvariance_sketch}
\end{equation}
We begin by bounding the first part as follows:
\begin{align}
&\var_{\Wcal}\pbra{\Ebb_S\sbra{v(S,\Wcal)|\Wcal}}
=\var_{\Wcal} \pbra{2^n \sum_{\bm s, \bm s'} (-2)^{-D(\bm s, \bm s')}p_{\bm s}(\Wcal) q_{\bm s'} (\Wcal)}\\
&\leq 4^n \Ebb_{W}\sbra{\sum_{\bm s, \bm s'} (-2)^{-D(\bm s, \bm s')} \braket{\bm s|W\rho W^\dagger |\bm s}\braket{\bm s'|W\sigma W^\dagger|\bm s'}\sum_{\bm b, \bm b'} (-2)^{-D(\bm b, \bm b')} \braket{\bm b|W\rho W^\dagger |\bm b}\braket{\bm b'|W\sigma W^\dagger|\bm b'}}\\
&= 4^n\sum_{\bm s, \bm s',\bm b, \bm b'}(-2)^{-D(\bm s, \bm s')-D(\bm b, \bm b')} \Ebb_W\sbra{ \tr{\pbra{\rho\otimes \sigma}^{\otimes 2}W^{\dagger\otimes 4} \ket{\bm s}\bra{\bm s} \otimes \ket{\bm s'}\bra{\bm s'} \otimes \ket{\bm b}\bra{\bm b} \otimes \ket{\bm b'}\bra{\bm b'} W^{\otimes 4} }}
\\
&= d^2 \sum_{\zeta,\tau\in S_4} \sum_{\bm s, \bm s',\bm b, \bm b'}(-2)^{-D(\bm s, \bm s')-D(\bm b, \bm b')} C_{\zeta, \tau}\tr{\pbra{\rho \otimes \sigma}^{\otimes 2} S_{\zeta}}\tr{S_{\tau}\ket{\bm s}\bra{\bm s} \otimes \ket{\bm s'}\bra{\bm s'} \otimes \ket{\bm b}\bra{\bm b} \otimes \ket{\bm b'}\bra{\bm b'} }\label{eq:haar_4thmoment}\\
&\leq c_1 d^2\max_{\tau}\sum_{\sigma\in S_4} C_{\zeta, \tau} \tr{\pbra{\rho \otimes \sigma}^{\otimes 2} S_{\zeta}} 4^n
\label{eq:sum_sb}\\
&\leq c_1 d^4 \max_{\tau}\sum_{\sigma\in S_4} C_{\zeta, \tau}\\
& = c_1 \frac{d^4}{(d+3)(d+2)(d+1)d}
\label{eq:sum_4thmoment}\\
&\leq c_2,
\end{align}
for some constant $c_1,c_2$ where Eq. \eqref{eq:haar_4thmoment} holds by Lemma \ref{lem:Haar_four} and Eq. \eqref{eq:sum_4thmoment} holds by Proposition \ref{pro:Haar_coefs}.

Eq.~\eqref{eq:sum_sb} holds by simplifying the summation of the series~\cite{graham1989concrete}:
\begin{align*}
 & \sum_{\bm b, \bm b',\bm s, \bm s'\in\cbra{0,1}^n} \sum_{\tau\in S_4} (-2)^{-D(\bm s, \bm s') -D(\bm b, \bm b')} \tr{S_{\tau} \ket{\bm s}\bra{\bm s} \otimes \ket{\bm s'}\bra{\bm s'} \otimes \ket{\bm b}\bra{\bm b} \otimes \ket{\bm b'}\bra{\bm b'}}= c_1 4^n 
\end{align*}
for some constant $c_3$. Next, to bound the second term, we note that when a single measurement is performed in both platforms, obtaining bitstrings $\bm s$ and  $\bm s'$ respectively. Then the estimation equals 
\begin{align}
    v_0\pbra{\cbra{\bm s, \bm s'},\Wcal} = 2^n (-2)^{-D(\bm s, \bm s')}.
\end{align}
Therefore, the variance of the estimation equals
\begin{align}
\var_S\pbra{v_0\pbra{\cbra{\bm s, \bm s'}, \Wcal}} \leq \Ebb_{S}\sbra{v_0\pbra{\cbra{\bm s, \bm s'}, \Wcal}^2} = 4^n \sum_{\bm s, \bm s'}(-2)^{-2D(\bm s, \bm s')} p_{\bm s}\pbra{\Wcal} q_{\bm s'}\pbra{\Wcal}.
\end{align}
The variance for the estimation with $m$ measurements equals
\begin{align}
    \var_S\pbra{v\pbra{S, \Wcal}} = \frac{\var_S\pbra{v_0\pbra{S, \Wcal}}}{m^2} \leq \frac{4^n}{m^2} \sum_{\bm s, \bm s'}(-2)^{-2D(\bm s, \bm s')} p_{\bm s}\pbra{\Wcal} q_{\bm s'}\pbra{\Wcal}.
    \label{eq:variance_inner_part}
\end{align}
Hence, the second term of Eq.~\eqref{eq:totalvariance_sketch} can be bounded by
\begin{align}
\Ebb_{\Wcal} \sbra{\var_S\pbra{v(S,\Wcal)|\Wcal}} &\leq 4^n \sum_{\bm s, \bm s'} p_{\bm s}(\Wcal) q_{\bm s'}(\Wcal) (-2)^{-2D(\bm s, \bm s')}\frac{1}{m^2}\\
&=\frac{d^2}{m^2} \frac{c_4}{d^2} 2^n \sum_{k=0}^n \binom{n}{k} 2^{-2k}
\label{eq:haar_second}\\
&=\frac{c_5}{m^2}\pbra{\frac{5}{2}}^n,
\label{eq:varSEU}
\end{align}
for some constant $c_4,c_5$ for any $\rho, \sigma$, where Eq. \eqref{eq:haar_second} holds by Lemma \ref{lem:Haar_second}.   
Hence we can bound the variance to $\var(v(S,\Wcal)) \leq c_2 + \frac{c_5 }{m^2}(5/2)^n$. To ensure the estimator error of $\bar{v}$ remains below $\varepsilon$ with high probability, we require 
\begin{align}
N&\geq c_2/\varepsilon^2,\\
Nm^2&\geq c_5 \pbra{\frac{5}{2}}^n/\varepsilon^2,
\end{align}
which implies
$Nm = \sqrt{N} \times \sqrt{Nm^2}\geq \Ocal\pbra{\sqrt{(5/2)^n}/\varepsilon^2}$.

\end{proof}

Different from the results of Theorem~\ref{thm:Global_crossplatform} rooted on the global 4-design, we next give the analysis for the estimator in Eq.~\eqref{eq:cross_platformEstimator} when the random unitary is picked from a local 4-design.
\begin{theorem}
The number of measurements can be bounded by $\Ocal\pbra{\sqrt{(18/5)^n}/\varepsilon^2}$ using the local distance-based cross-platform algorithm where the random ensemble is chosen as the tensor product of a local 4-design, to solve $\tr{\rho\sigma}$ with $\varepsilon$ error and high success probability if $\rho = \sigma$.
\label{thm:Global_crossplatform}
\end{theorem}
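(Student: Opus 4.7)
The plan is to mirror the proof of the previous (global $4$-design) theorem, applying the total-variance decomposition
\[
\var(v(S,\Wcal)) = \var_{\Wcal}\pbra{\Ebb_S\sbra{v(S,\Wcal)|\Wcal}} + \Ebb_{\Wcal}\sbra{\var_S\pbra{v(S,\Wcal)|\Wcal}},
\]
but now exploiting that $\Wcal=\bigotimes_j W_j$ is drawn from the tensor product of single-qubit $4$-designs, so every Haar moment factorizes per qubit. I aim to bound the two contributions by $c_1(6/5)^n$ and $c_2\,3^n/m^2$ respectively; writing $Nm=\sqrt{N\cdot Nm^2}$, this forces $Nm\geq \sqrt{(18/5)^n}/\varepsilon^2$ as claimed.

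For the inner variance, I would reuse the pointwise bound from Eq.~\eqref{eq:variance_inner_part}, $\var_S(v|\Wcal)\leq 4^n\sum_{\bm s,\bm s'}4^{-D(\bm s,\bm s')}p_{\bm s}(\Wcal)q_{\bm s'}(\Wcal)/m^2$, and apply the $d=2$ instance of Lemma~\ref{lem:Haar_second} qubit by qubit. A short computation shows that the per-qubit four-term sum $\sum_{s,s'\in\cbra{0,1}}(1/4)^{D(s,s')}\Ebb_{W_j}\sbra{W_j^{\otimes 2}(\ket{s}\bra{s}\otimes\ket{s'}\bra{s'})W_j^{\dagger\otimes 2}}$ collapses to $\tfrac{1}{2}\Ibb_j+\tfrac{1}{4}\Fbb_j$. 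Tensoring over the $n$ qubits, rescaling by $4^n$, and using the standard swap identity $\tr{(\rho\otimes\sigma)(\Fbb_S\otimes\Ibb_{\bar S})}=\tr{\rho_S\sigma_S}$ yields $\Ebb_\Wcal\sbra{\var_S(v|\Wcal)}\leq \sum_{S\subseteq[n]}2^{n-|S|}\tr{\rho_S\sigma_S}/m^2$. Under $\rho=\sigma$ and $\tr{\rho_S^2}\leq 1$, this is capped by $\sum_{k=0}^n\binom{n}{k}2^{n-k}=3^n$.

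For the outer variance, I would expand $\Ebb_\Wcal\sbra{(\Ebb_S[v|\Wcal])^2}$ as $4^n\tr{(\rho\otimes\sigma\otimes\rho\otimes\sigma)\bigotimes_j Y_j}$, where each $Y_j$ averages $W_j^{\otimes 4}$-conjugations of the four-fold bitstring projectors against the weights $(-2)^{-D(s,s')-D(b,b')}$. Lemma~\ref{lem:Haar_four} at $d=2$ rewrites $Y_j=\sum_{\zeta,\tau\in S_4}C_{\zeta,\tau}\,\alpha_\tau\,V_\zeta$, where the bitstring sum $\alpha_\tau$ is precisely the quantity tabulated in Supplementary Table~\ref{supptab:intermediate_proof_tab} and satisfies $\sum_\tau\alpha_\tau=36$ by Lemma~\ref{lem:intermediate_result_crossplatform}, while Proposition~\ref{pro:Haar_coefs} pins down $\sum_\tau C_{\zeta,\tau}=1/120$. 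Setting $T_\zeta:=\sum_\tau C_{\zeta,\tau}\alpha_\tau$, the outer variance becomes $4^n\sum_{\vec\zeta\in S_4^n}\pbra{\prod_j T_{\zeta_j}}\tr{(\rho\otimes\sigma\otimes\rho\otimes\sigma)V_{\vec\zeta}}$. Under $\rho=\sigma$ every permutation overlap is bounded by $1$, so the expression factorizes into $(4\sum_\zeta|T_\zeta|)^n$; plugging in the explicit $d=2$ Weingarten matrix together with the term-by-term $\alpha_\tau$ values from Supplementary Table~\ref{supptab:intermediate_proof_tab} should verify that this per-qubit constant equals $6/5$.

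The main obstacle is precisely the last per-qubit calculation: the $24\times 24$ Weingarten matrix $C^{(2)}$ is dense, and one must carefully exploit the $\rho=\sigma$ symmetry (which identifies two pairs among the four tensor factors and cancels the overlaps attached to permutations mixing these pairs) in order to sharpen the per-qubit bound from a crude $\mathcal{O}(1)$ down to exactly $6/5$. Any looser bookkeeping would replace $18/5$ with a strictly larger constant and inflate the final sample-complexity estimate.
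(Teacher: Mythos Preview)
Your proposal is correct and follows essentially the same route as the paper: total-variance decomposition, per-qubit factorization via the local $4$-design structure, Lemma~\ref{lem:Haar_four} and Proposition~\ref{pro:Haar_coefs} combined with Lemma~\ref{lem:intermediate_result_crossplatform} to obtain the $(6/5)^n$ outer-variance bound, and a second-moment computation yielding the $3^n/m^2$ inner-variance bound. The only cosmetic difference is that for the inner term you invoke Lemma~\ref{lem:Haar_second} directly and obtain the exact expression $\sum_{S\subseteq[n]}2^{n-|S|}\tr{\rho_S\sigma_S}$ before bounding by $3^n$, whereas the paper applies the Weingarten formula for $S_2$ together with the per-qubit constant $9/2\cdot(3!)^{-1}=3/4$; both arrive at the same bound. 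You are also more forthright than the paper about the sign issue in the per-qubit $6/5$ step: the paper simply collapses the trace $\tr{(\rho\otimes\sigma)^{\otimes 2}V_{\vec\zeta}}$ to $1$ and multiplies $\sum_\zeta C_{\zeta,\tau}=1/5!$ against $\sum_\tau\alpha_\tau=36$, which is exact when $\rho=\sigma$ is pure but, as you note, needs the $\rho=\sigma$ symmetry to be handled carefully for mixed states.
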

\begin{proof}
Here we also consider the estimator with $N=1$, denoted as  
\[v = 2^n \sum_{\bm s, \bm s'} (-2)^{-D(\bm s, \bm s')}\hat{p}_{\bm s}(\Wcal)\hat{q}_{\bm s'} (\Wcal),\] where $\Wcal$ is sampled from the tensor product of a local 4-design.
With the constraint that $\rho=\sigma$, the first part of Eq.~\eqref{eq:totalvariance_sketch} can be bounded by
\begin{align}
&\var_{\Wcal}\pbra{\Ebb_S\sbra{v(S,\Wcal)|\Wcal}} \\
&= \var_{\Wcal} \pbra{2^n \sum_{\bm s, \bm s'} (-2)^{-D(\bm s, \bm s')}p_{\bm s}(\Wcal) q_{\bm s'} (\Wcal)}
\\
&\leq 4^n \tr{\pbra{\rho \otimes \sigma}^{\otimes 2} 
\prod_{j=1}^n \sum_{\zeta_j,\tau_j\in S_4} 
V_{\zeta_j}
C_{\zeta_j,\tau_j} \sum_{s_j,s_j',b_j,b_j'} (-2)^{-D(s_j,s_j')-D(b_j,b_j')}
\tr{V_{\tau_j} \superket{s_j} \superket{s_j'} \superket{b_j} \superket{b_j'} }
}\\
&= 4^n \prod_{j=1}^n \sum_{\tau_j\in S_4}\sum_{\zeta_j\in S_4} 
 C_{\zeta_j,\tau_j} \sum_{\tau_j}\sum_{s_j,s_j',b_j,b_j'} 2^{-D(s_j,s_j')-D(b_j,b_j')}
\tr{V_{\tau_j} \superket{s_j} \superket{s_j'} \superket{b_j} \superket{b_j'}}\\
& =  4^n \prod_{j=1}^n \frac{36}{5!}\label{eq:local_cross_cut_bound_deviation}\\
&=\pbra{\frac{6}{5}}^n,
\end{align}
where Eq.~\eqref{eq:local_cross_cut_bound_deviation} holds by Proposition~\ref{pro:Haar_coefs} and Lemma \ref{lem:intermediate_result_crossplatform}. We next consider the second part of the variance:
\begin{align}
&\Ebb_{\Wcal} \sbra{ \var_{S}\pbra{v(S,\Wcal)|\Wcal}} \\
&\leq \frac{4^n}{m^2} \sum_{\bm s, \bm s'}  2^{-2D(\bm s, \bm s')}\Ebb_{\Wcal}\sbra{p_{\bm s}(\Wcal) q_{\bm s} (\Wcal)}
\label{eq:variance_partial}\\
&= \frac{4^n }{m^2} \tr{(\rho \otimes \sigma) \prod_{j=1}^n \sum_{\zeta_j,\tau_j\in S_2} C_{\zeta_j, \tau_j}\sum_{s_j,s'_j} (-2)^{-2D(s_j,s'_j)} \tr{V_{\tau_j} \superket{s_j} \superket{s'_j} } }
\label{eq:haar_meas_second}\\
&\leq \frac{4^n}{m^2} \prod_{j=1}^n \max_{\tau_j}\sum_{\zeta_j} C_{\zeta_j, \tau_j} \sum_{\tau_j} \sum_{s_j,s_j'} (-2)^{-2D(s_j,s'_j)} \tr{V_{\tau} \superket{s_j} \superket{s'_j} }\\
&=\frac{4^n}{m^2} \prod_{j=1}^n \pbra{\frac{9/2}{3!}}\label{eq:deviation_second_moment}\\
&=\frac{3^n}{m^2},
\end{align}
where Inequality~\eqref{eq:variance_partial} follows from Inequality~\eqref{eq:variance_inner_part}, and Eq.~\eqref{eq:haar_meas_second} follows from Lemma~\ref{lem:Haar_four}. Eq.~\eqref{eq:deviation_second_moment} holds because $\sum_{\zeta_j} C_{\zeta_j, \tau_j} = \frac{1}{3!}$, as established in Proposition~\ref{pro:Haar_coefs}, and
\begin{equation}
    \sum_{s_j,s'_j} (-2)^{-2D(s_j, s'_j)} \tr{V_{\tau} \superket{s_j} \superket{s'_j}} = \frac{9}{2}
\end{equation}
which is obtained using similar enumeration methods as in the proof of Lemma~\ref{lem:intermediate_result_crossplatform}.

By combining these two inequalities, we have
\begin{align}
  \var\pbra{v(S,\Wcal)}\leq \pbra{\frac{6}{5}}^n + \frac{3^n}{m^2}.
  \label{eq:cross_v_variance}
\end{align}
Hence to ensure the estimator error of $\bar{v} = \sum_{i=1}^N \frac{v_i}{N}$ remains below $\varepsilon$, we require
\begin{align}
N&\geq \pbra{\frac{6}{5}}^n/\varepsilon^2,\\
Nm^2&\geq 3^n/\varepsilon^2,
\end{align}
then we have $Nm\geq \frac{\sqrt{(18/5)^n}}{\varepsilon^2}$.
\end{proof}

Through our analysis, we establish an upper bound on $ Nm$ for any input states $\rho$ and $\sigma$, as shown in the following proposition.
\begin{proposition}
The number of measurements can be bounded by $\Ocal\pbra{\sqrt{6^n}/\varepsilon^2}$ using the local distance-based cross-platform algorithm where the random ensemble is chosen as the tensor product of a local 4-design, to solve $\tr{\rho\sigma}$ with $\varepsilon$ error and high success probability.
\label{pro:cross_platform_general}
\end{proposition}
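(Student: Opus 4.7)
The proof follows the structure of the preceding Theorem, which handled the special case $\rho=\sigma$, and adapts it to arbitrary input states. The plan is to apply the total-variance identity
\begin{equation*}
\var(v(S,\Wcal))=\var_\Wcal(\Ebb_S[v|\Wcal])+\Ebb_\Wcal[\var_S(v|\Wcal)]
\end{equation*}
together with Chebyshev's inequality on the sample average $\bar v=\frac{1}{N}\sum_{i=1}^N v_i$, and bound each of the two contributions separately, converting the joint bound on $\var(\bar v)$ into requirements on $N$ and $m$.

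For the conditional-variance term $\Ebb_\Wcal[\var_S(v|\Wcal)]$, the derivation in Eqs.~\eqref{eq:variance_partial}--\eqref{eq:deviation_second_moment} already goes through verbatim for arbitrary $\rho,\sigma$. The only place in that calculation where the density matrices are touched is the Hölder-type estimate $|\tr{(\rho\otimes\sigma)\bigotimes_j V_{\zeta_j}}|\le \|\rho\otimes\sigma\|_1\|\bigotimes_j V_{\zeta_j}\|_\infty=1$, which is a statement about any two density matrices and any unitary permutation, not about their equality. Hence $\Ebb_\Wcal[\var_S(v|\Wcal)]\le 3^n/m^2$ is inherited without modification.

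For the unitary-variance term I would bound $\var_\Wcal(\Ebb_S[v|\Wcal])\le\Ebb_\Wcal[(\Ebb_S[v|\Wcal])^2]$, expand the square, and apply the local 4-design twirl qubit by qubit to obtain
\begin{equation*}
\Ebb_\Wcal[(\Ebb_S[v|\Wcal])^2]=4^n\,\tr{(\rho\otimes\rho\otimes\sigma\otimes\sigma)\bigotimes_{j=1}^n\Phi_j},
\end{equation*}
where $\Phi_j=\sum_{\zeta,\tau\in S_4}C_{\zeta,\tau}\,V_\zeta\,\kappa(\tau)$ is the Weingarten-twirled single-qubit four-copy operator with $\kappa(\tau)=\sum_{s,b,s',b'}(-2)^{-D(s,s')-D(b,b')}\tr{V_\tau\superket{s}\superket{b}\superket{s'}\superket{b'}}$. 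For $\rho=\sigma$ the factor $(\rho\otimes\sigma)^{\otimes 2}=\rho^{\otimes 4}$ is $S_4$-invariant, giving $\tr{\rho^{\otimes 4}V_\zeta}=1$ for every $\zeta$ and collapsing the Weingarten sum to $\sum_\zeta C_{\zeta,\tau}=1/5!$ per qubit; that shortcut is unavailable here because $(\rho\otimes\rho\otimes\sigma\otimes\sigma)$ is invariant only under $\{e,(12),(34),(12)(34)\}$. I would instead apply Hölder at the outer trace, $|\tr{(\rho\otimes\rho\otimes\sigma\otimes\sigma)\bigotimes_j\Phi_j}|\le\prod_j\|\Phi_j\|_\infty$, and use the Schur--Weyl decomposition $(\Cbb^2)^{\otimes 4}=\bigoplus_{\lambda\in\{(4),(3,1),(2,2)\}}V_\lambda\otimes W_\lambda$ to put $\Phi_j$ into block form $\bigoplus_\lambda \Ibb_{V_\lambda}\otimes A_\lambda$; the scalar $A_{(4)}$ and the traces of $A_{(3,1)}$ and $A_{(2,2)}$ follow from the character sums $\frac{\dim W_\lambda}{|S_4|}\sum_\tau\chi_\lambda(\tau)\kappa(\tau)$ using the $\kappa(\tau)$ already computable from Lemma~\ref{lem:intermediate_result_crossplatform}-type enumerations. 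Diagonalizing the small blocks $A_{(3,1)}$ and $A_{(2,2)}$ yields $\|\Phi_j\|_\infty\le 1/2$, and hence $\var_\Wcal(\Ebb_S[v|\Wcal])\le 2^n$.

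Combining the two pieces, Chebyshev's inequality demands $N\gtrsim 2^n/\varepsilon^2$ from the first term and $Nm^2\gtrsim 3^n/\varepsilon^2$ from the second, so $Nm=\sqrt{N\cdot Nm^2}\gtrsim\sqrt{6^n}/\varepsilon^2$, matching the claim. The main obstacle is the per-qubit spectral bound $\|\Phi_j\|_\infty\le 1/2$: since the weight operator $M$ is indefinite (eigenvalues in $\{1,-1/2,1/4\}$), the twirled blocks $A_{(3,1)}$ and $A_{(2,2)}$ are not positive semidefinite, so knowing $\tr A_\lambda$ alone is insufficient and one must actually extract matrix entries inside the Specht modules to control the operator norm. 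Verifying that the largest such eigenvalue does not exceed $1/2$ is the single quantitative step that distinguishes the general-state bound from the symmetric-state bound $(6/5)^n$ obtained in the preceding theorem.
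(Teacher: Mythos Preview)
Your route diverges from the paper's at the unitary-variance term, and the divergence introduces a real gap. The paper never passes through an operator-norm bound on $\Phi_j$; it works termwise in the $(\bm s,\bm s',\bm b,\bm b')$ sum, using that each factor $\Ebb_\Wcal[p_{\bm s}q_{\bm s'}p_{\bm b}q_{\bm b'}]$ is nonnegative, so the signed sum can be upper-bounded by keeping only the positive-weight contributions. Proposition~\ref{pro:Haar_coefs} together with the positive entries of Supplementary Table~\ref{supptab:intermediate_proof_tab} then give a per-qubit factor of $60/5!=1/2$, hence $2^n$. Your H\"older step $|\tr{(\rho\otimes\sigma)^{\otimes 2}\bigotimes_j\Phi_j}|\le\prod_j\|\Phi_j\|_\infty$ is of course valid, but the claimed spectral bound $\|\Phi_j\|_\infty\le 1/2$ is \emph{false}. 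In the $\lambda=(2,2)$ Schur--Weyl block---which for $d=2$ is exactly the two-dimensional space of $U(2)$-singlets---the twirl reduces to $P_{(2,2)}MP_{(2,2)}$. The singlet $|e_1\rangle=|s\rangle_{12}|s\rangle_{34}$ is already an $M$-eigenvector with eigenvalue $1/4$ (all four of its computational components carry weight $(-2)^{-2}$), while the orthogonal singlet $|f_2\rangle\propto |0011\rangle+|1100\rangle-\tfrac12(|0110\rangle+|1001\rangle+|0101\rangle+|1010\rangle)$ satisfies $\langle f_2|M|f_2\rangle=3/4$. Thus $\|\Phi_j\|_\infty\ge 3/4$, the H\"older route gives only $\var_\Wcal\le 4^n(3/4)^n=3^n$, and combined with the shot-noise term $3^n/m^2$ you obtain $Nm\gtrsim 3^n/\varepsilon^2$, strictly weaker than the target $\sqrt{6^n}/\varepsilon^2$.

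The reason the operator-norm approach falls short is structural: the $3/4$-eigenvector $|f_2\rangle$ has Schmidt rank $3$ across the $\{1,2\}|\{3,4\}$ cut and therefore cannot occur as (a marginal of) $(\rho\otimes\sigma)^{\otimes 2}$, which is always a product across that cut. H\"older's inequality discards precisely this constraint. To reach $2^n$ you must exploit, \emph{before} any Weingarten expansion, that the operator traced against $\bigotimes_j\Phi_j$ is a density matrix of the special form $(\rho\otimes\sigma)^{\otimes 2}$, so that negative-sign $(-2)^{-D}$ weights multiply nonnegative probability averages and may be dropped; this positivity argument, not a spectral bound on $\Phi_j$, is the mechanism the paper's proof uses.
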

\begin{proof}
    When $\rho \ne \sigma$, the variance $\var_{\Wcal}\pbra{\Ebb_S\sbra{v(S,\Wcal)|\Wcal}}$ can be expressed as
\begin{align}
\var_{\Wcal}\pbra{\Ebb_S\sbra{v(S,\Wcal)|\Wcal}}
&\leq 4^n \tr{\pbra{\rho \otimes \sigma}^{\otimes 2} 
\prod_{j=1}^n \sum_{\zeta_j,\tau_j\in S_4} 
V_{\zeta_j}
C_{\zeta_j,\tau_j} \sum_{s_j,s_j',b_j,b_j'} (-2)^{-D(s_j,s_j')-D(b_j,b_j')}
\tr{V_{\tau_j} \superket{s_j} \superket{s_j'} \superket{b_j} \superket{b_j'} }
}\\
&=\sum_{\bm \sigma, \bm \tau\in S_4^{n}} C_{\bm \sigma, \bm \tau}\tr{\pbra{\rho \otimes \sigma}^{\otimes 2} V_{\bm \sigma}} \sum_{\bm s, \bm s', \bm b, \bm b'\in\cbra{0,1}^n} (-2)^{-D(\bm s, \bm s')-D(\bm b,\bm b')}\tr{V_{\bm \tau} \superket{\bm s} \superket{\bm s'} \superket{\bm b} \superket{\bm b'}}\\
&\leq 4^n \prod_{j=1}^n \max_{\tau_k}\sum_{\zeta_j\in S_4} 
 C_{\zeta_j,\tau_j} A\\
 & = 4^n \prod_{j=1}^n \max_{\tau_j}\sum_{\zeta_j\in S_4} 
 C_{\zeta_j,\tau_j} \times 60\\
 &= 2^n,
\end{align}
where $C_{\bm \sigma, \bm \tau} = \prod_j C_{\sigma_j,\tau_j}, V_{\bm \sigma} = \otimes_j V_{\sigma_j}$, where $A$ contains the terms of $(-2)^{-D(s_j,s_j')-D(b_j,b_j')}
\tr{V_{\tau_j} \superket{s_j} \superket{s_j'} \superket{b_j} \superket{b_j'}}$ such that $(-2)^{-D(s_j,s_j')-D(b_j,b_j')}>0$, which equals $60$ by counting on the associated terms in Supplementary Table \ref{supptab:intermediate_proof_tab}.
It implies 
\begin{align}
\var\pbra{v(S,\Wcal)}\leq {2^n + \frac{3^n}{m^2}}.
\label{eq:variance_cross_platform_general}
\end{align}
Hence $Nm\geq \sqrt{6^n}/\varepsilon^2$.
\end{proof}
 
\noindent\textbf{Remark}. Based on the above theorems and proposition, we observe that when global distance-based cross-platform protocols are permitted, the number of required measurements scales as $\Ocal\{\sqrt{(5/2)^n}/\varepsilon^2\}$—worse than the collision-based cross-platform approach, which requires only $\Ocal\{\sqrt{2^n}/\varepsilon\}$ measurements. If restricted to local distance-based protocols and assuming $\rho = \sigma$, the measurement cost scales as $\Ocal\{\sqrt{(18/5)^n}/\varepsilon\}$. In the general case with arbitrary $\rho$ and $\sigma$, the scaling becomes $\Ocal\{\sqrt{6^n}/\varepsilon^2\}$. Nevertheless, this remains an improvement over full quantum state tomography, which requires $\Ocal(4^n)$ samples~\cite{Gross2010Quantum, riofrio2017experimental}.

\section{Algorithmic details}
\label{supp:algorithm_details}

In this section, we detail the algorithmic implementation of the proposed method for estimating cross-platform fidelity using wire-cutting techniques. To illustrate the approach, we start with the simplified case of a single cut, implementing wire cutting in parallel. This forms the core algorithm employed in our experiments.
We then present the proof for an extreme multicut case in SI~\ref{SI:IV-a}, and extend the analysis to accommodate more general circuit structures in SI~\ref{SI:IV-B}.

\subsection{Implementation of wire cutting in parallel}
\label{subsec:in_parallel_method}
Recall that the general procedure of wire cutting proceeds sequentially~\cite{Lowe2023fast}, which results in high runtime costs in practice, particularly in the proposed cross-platform fidelity estimation algorithm. To address this bottleneck, we develop an improved version of the proposed algorithm. Specifically, in the case of $k_1=1$, the wire cutting in each platform can be executed in parallel, leading to a substantial reduction in runtime. 

The improved cutting process of one platform, as visualized in Supplementary Figure~\ref{method_fig:compress_innerproduct}, contains two modifications. The first modification simplifies the sampling of the random cutting operation by replacing the original 24-element Clifford group with a three-element unitary set. Mathematically, the set of possible operations $\left( z^{(l)}, U_{z^{(l)}}, p_{z^{(l)}} \right)$ in Eq.~(\ref{eqn:wire-cut}) is given by
\[
\mathcal{Z} = \cbra{(0, \mathbb{I}, \tfrac{1}{5}), (0, R_y^{\frac{\pi}{2}}, \tfrac{1}{5}), (0, R_x^{\frac{\pi}{2}}, \tfrac{1}{5}), (1, \mathbb{I}, \tfrac{2}{5})},
\]
where $z^{(l)} \in \{0,1\}$ represents the outcome sampled from a Bernoulli distribution; $U_0 \in \cbra{\mathbb{I}, R_y^{\frac{\pi}{2}}, R_x^{\frac{\pi}{2}}}$, and $U_1 = \mathbb{I}$. Here, $R_\sigma^\theta := \exp(-i \theta \sigma / 2)$ denotes a rotation about the axis $\sigma \in \cbra{X, Y}$ by angle $\theta$, with $\sigma$ the corresponding Pauli operator. The quantity $p_{z^{(l)}}$ specifies the probability associated with each tuple. This simplification sets the stage for the second modification.

Before presenting the second modification, let us review the original procedure of wire cutting with $k_1=k_2=1$. That is, when a given circuit is cut by a single qubit, it divides into two subcircuits, \textsf{Q}-\textsf{A} and \textsf{Q}-\textsf{B}. The circuit configuration of \textsf{Q}-\textsf{B} depends on the sampled $( {z}^{(l)}, U_{{z}^{(l)}} )$ applied to \textsf{Q}-\textsf{A} and the corresponding measurement outcome, leading to the sequential nature.  

To avoid the sequential execution, the second modification involves independently evaluating all possible circuit configurations for \textsf{Q}-\textsf{A} and \textsf{Q}-\textsf{B}, followed by classical post-processing. In particular, thanks to the reformulated operation set $\mathcal{Z}$, there are $4$ circuit configurations for \textsf{Q}-\textsf{A} and $2\times 4$ configurations for \textsf{Q}-\textsf{B}, where the factor $2$ stems from the initialized state for the cutting qubit being $\ket{0}$ or $\ket{1}$. For each circuit configuration in \textsf{Q}-\textsf{A} or \textsf{Q}-\textsf{B}, the random unitary $\Wcal^{\bar{A}}$ or $\Wcal^{B}$ with $\Wcal=\Wcal^{\bar{A}} \otimes \Wcal^{B}$ sampled from $\Clgroup_1^{\otimes n}$ is applied before the measurement, where $\bar A$ refers to the set of all qubits except the cutting qubit in \textsf{Q}-\textsf{A}. Define the generated random unitaries as  $\{\Wcal_t=  \Wcal_t^{\bar{A}} \otimes \Wcal_t^{B}\}_{t=1}^N$. As such, the total number of circuit configurations is $(4+8)\times N=12N$, which can be executed in parallel. For each specific configuration, we perform $m$ measurement shots to obtain an estimation.

The classical post-processing aims to leverage the measurement outcomes from $12N$ circuit configurations from two platforms to estimate $\tr{\rho\sigma}$. 
For clarity, let $c$ and $c'$ denote the measurement outcomes on the cutting qubit of \textsf{Q}-\textsf{A} from Platform~1 and Platform~2, respectively. The corresponding input to \textsf{Q}-\textsf{B} on Platform~1 is set to $c$ if $z^{(l)} = 0$, and to $\mathbb{I}/2$ if $z^{(l)} = 1$. The same rule applies for Platform~2.
Moreover, denote the $i$-th tuple of $\mathcal{Z}$ as $\mathcal{Z}_i$, and $\mathcal{Z}_{i,j}$ as the $j$-th element of $\mathcal{Z}_i$. The estimator of $\tr{\rho\sigma}$ takes the form as 
\begin{widetext}
  \begin{align}
\begin{aligned}
    \hat{v}&:=\frac{25}{ N^2} 2^n\sum_{j=1}^{4}\sum_{j' = 1}^4 (-1)^{\Zcal_{j,1} + \Zcal_{j',1}} \Zcal_{j,3} \Zcal_{j',3}  \sum_{c,c'\in \cbra{0,1}} \sum_{\bm a, \bm a'}\sum_{t=1}^{N} (-2)^{-D(\bm a,\bm a')}\hat{p}(\bm a,c|\Zcal_j, \Wcal_t^{\bar A}) \hat{q}(\bm a',c'|\Zcal_{j'}, \Wcal_t^{\bar A})\\
& \sum_{\bm{b}, \bm{b'}}\sum_{t=1}^{N} (-2)^{-D(\bm{b}, \bm{b'})}\hat{p}(\bm b|\Zcal_j, c,\Wcal_t^{B}) \hat{q}(\bm b'|\Zcal_{j'}, c'_{z},\Wcal_t^{B}),
\end{aligned}
\label{eq:estimator_parallel_cutting}
\end{align}  
\end{widetext}
where $\hat{p}(\cdot)$ and $\hat{q}(\cdot)$ are the associated estimated probabilities with a fixed circuit configuration and a specific input-and-output pair of the cutting qubit. The explicit formula is summarized below:
\begin{itemize}
\item $ \hat{p}\pbra{\bm a,c|\Zcal_j, \Wcal_t} :=\frac{\# (\bm a,c)}{m}$, where $  t \in [N]$, $  j \in [4]$, and $  (\bm a, c) $  are the sampled bit-strings in \textsf{Q}-\textsf{A} of Platform 1.
\item $ \hat{q}\pbra{\bm a',c'|\Zcal_j, \Wcal_t} :=\frac{\# (\bm a', c')}{m}$, where $  t \in [N]$, $  j \in [4]$, and $  (\bm a', c') $  are the sampled bit-strings in \textsf{Q}-\textsf{A} of Platform 2.
\item $ \hat{p}\pbra{\bm b |\Zcal_j,c,\Wcal_t} :=\frac{\#\bm b | s_q = c}{m}$, where $  t \in [N]$, $  j \in [3]$, $s_q$ is the input of the cutting qubit in \textsf{Q}-\textsf{B} of Platform 1, and $ \bm b $  is the bit-string in \textsf{Q}-\textsf{B} of Platform 1.
\item $ \hat{q}\pbra{\bm b'  |\Zcal_j,c',\Wcal_t} :=\frac{\# \bm b'  | s'_q = c'}{m}$, where $  t \in [N]$, $  j \in [3]$, $  s'_q$ is the input of the cutting qubit in \textsf{Q}-\textsf{B} of Platform 2, and $  b'   $  is the bit-string in \textsf{Q}-\textsf{B} of Platform 2.
\item $ \hat{p}\pbra{\bm b |\Zcal_j,\frac{\Ibb}{2}, \Wcal_t} :=\frac{1}{2} \frac{\sum_{i\in\cbra{0,1}}\#\bm b|s_q = i}{m}$, 
where $  t \in [N]$, $  j = 4$, and $  s_q $  is the input of the cutting qubit in \textsf{Q}-\textsf{B} of Platform 1. This corresponds to the case where the input state of the cutting qubit of \textsf{Q}-\textsf{B} is the maximally mixed state $\frac{\Ibb}{2}$. With slight abuse of notation, we continue to use $c$ to refer to this revised input $c = \frac{\Ibb}{2}$.
\item $ \hat{q}\pbra{\bm b'  |\Zcal_j,\frac{\Ibb}{2}, \Wcal_t} :=\frac{1}{2} \frac{\sum_{i\in\cbra{0,1}}\# \bm b'  |s'_q = i}{m}$, where $t \in [N]$, $j = 4$, and $  s'_q $  is the input of the cutting qubit in \textsf{Q}-\textsf{B} of Platform 2.
\end{itemize}
\begin{figure}
    \centering
\includegraphics[width=1.0\linewidth]{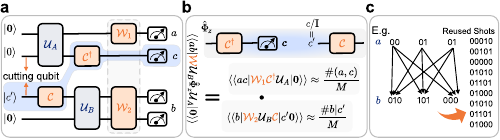}
    \caption{Illustration of parallel sampling and classical post-processing. \textbf{a.} The original circuit with a cut, where the bottom subcircuit depends on the output of the top subcircuit. \textbf{b.} The sequential and parallel sampling processes are shown to be equivalent. The upper and bottom parts can be paralleled by enumerating all possible cutting variables $\pbra{z, \Ccal,p_{z}}$.
    \textbf{c.} By sampling the top and bottom subcircuits independently, each over all possible configurations—$\mathcal{O}(N^2)$ matched circuit pairs can be constructed using only $\mathcal{O}(N)$ samples from each subcircuit.}
\label{method_fig:compress_innerproduct}
\end{figure}

\begin{lemma}
    $\hat{v}$ defined in Eq. \eqref{eq:estimator_parallel_cutting} is an unbiased estimation of $\tr{\rho \sigma}$.
\end{lemma}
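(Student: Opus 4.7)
The plan is to establish $\mathbb{E}[\hat v]=\tr{\rho\sigma}$ by taking expectations successively over the three layers of randomness baked into $\hat v$: the empirical frequencies $\hat p,\hat q$, the local random unitaries $\Wcal_t^{\bar A},\Wcal_t^B$ used for cross-platform measurement, and the Bernoulli/unitary sampling encoded by $\mathcal{Z}$ that realises the wire cut. The parallel structure of the estimator makes these three averages separable, so the argument reduces to checking each step in isolation and then invoking linearity of expectation.

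First I would replace $\hat p,\hat q$ by the true conditional probabilities $p,q$: the frequencies are counts over independent shots on physically separate platforms, hence $\mathbb{E}[\hat p\,\hat q]=pq$. The two independent $\sum_{t}$ sums on the $\bar A$- and $B$-sides, together with the prefactor $1/N^2$, then reduce in expectation to a single copy of each subcircuit's random-unitary average, leaving an expression linear in $\mathbb{E}_{\Wcal^{\bar A}}[pq]$ and $\mathbb{E}_{\Wcal^{B}}[pq]$.

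Second I would take the expectation over the local $2$-design ensembles. Applying Lemma~\ref{lem:Haar_second} qubit-by-qubit, each inner sum
\[
\sum_{\bm a,\bm a'}(-2)^{-D(\bm a,\bm a')}\,\mathbb{E}_{\Wcal^{\bar A}}\bigl[p(\bm a,c\mid\Zcal_j,\Wcal^{\bar A})\,q(\bm a',c'\mid\Zcal_{j'},\Wcal^{\bar A})\bigr]
\]
collapses, up to the $2^{|\bar A|}$ factor supplied by the $2^n$ prefactor, to the Hilbert--Schmidt overlap of the unnormalised $\bar A$-sub-states conditioned on $(c,\Zcal_j)$ and $(c',\Zcal_{j'})$; the analogous identity holds on the $B$-side. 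This is the same swap-extraction mechanism that underpins Eq.~\eqref{eq:cross_platFid}. The outer enumeration over $\mathcal{Z}$, weighted by $\Zcal_{j,3}(-1)^{\Zcal_{j,1}}$ and multiplied by the prefactor $5=2^{k_1+1}+1$ applied once per platform (accounting for the $25=5^{2}$ coefficient), then implements $\mathcal{I}_2=5\,\mathbb{E}_z[(-1)^z\Phi_z]$ on the cutting qubit as in Eq.~\eqref{SI:eqn:wire-cut}, re-assembles the conditional sub-states into $\rho$ and $\sigma$, and yields $\tr{\rho\sigma}$.

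The main obstacle I expect is verifying that the three-element set $\cbra{\Ibb,R_y^{\pi/2},R_x^{\pi/2}}$, each with weight $1/5$ (totalling $3/5=\Pr[z=0]$ for $k_1=1$), faithfully reproduces $\Phi_0=\superket{\Ibb}\superbra{\Ibb}+\tfrac{1}{3}\Pi_1$ of Eq.~\eqref{eq:phi_expect_formula} without invoking the full $24$-element single-qubit Clifford group. This is a direct one-qubit calculation: the three rotations rotate the computational measurement into the $Z$, $X$, and $Y$ bases respectively, so the uniform average of these three projective measurements produces the uniform POVM on the six Pauli eigenstates, which matches the Clifford-group twirl. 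The $z=1$ branch is handled by interpreting the $s_q\in\cbra{0,1}$ average in $\hat p(\bm b\mid\Zcal_4,\tfrac{\Ibb}{2},\Wcal_t)$ as preparing $\Ibb/2$ on the $B$-side, and the sign $(-1)^{\Zcal_{j,1}+\Zcal_{j',1}}$ together with the probability weights $\Zcal_{j,3}\Zcal_{j',3}$ then follows by linearity of expectation.
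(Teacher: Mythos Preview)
Your proposal is correct and follows essentially the same route as the paper: take the shot expectation to replace $\hat p,\hat q$ by true probabilities, use the local $2$-design property (Eq.~\eqref{eq:cross_platFid}) to collapse the $(-2)^{-D(\cdot,\cdot)}$ sums to Hilbert--Schmidt overlaps, and then use the wire-cut identity (Eq.~\eqref{SI:eqn:wire-cut}) over the $\mathcal{Z}$ enumeration to reassemble $\rho$ and $\sigma$. The paper packages the two independent $\sum_t$ sums slightly differently---it forms the $N^2$ combined unitaries $\Wcal_u=(\Wcal_t^{\bar A},\Wcal_{t'}^B)$ explicitly and then applies the cross-platform identity once to the full $n$-qubit state---but this is the same mechanism you describe, and your explicit check that the three-element set $\{\Ibb,R_y^{\pi/2},R_x^{\pi/2}\}$ reproduces $\Phi_0$ is in fact more self-contained than the paper's proof, which defers that step to the reference for wire cutting.
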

\begin{proof}
We employ the Liouville (or vectorized) representation of linear operators and quantum channels. In this representation, a linear operator $A$ is mapped to a normalized vector $\superket{A} := A / \sqrt{\tr{A A^\dagger}}$. The dual vector $\superbra{B}$ corresponds to the Hilbert-Schmidt inner product, such that $\superbraket{B | A} = \tr{B^\dagger A} / \sqrt{\tr{A A^\dagger} \tr{B B^\dagger}}$. Additional details on the Liouville representation are provided in the SI~A.

In the following, we prove that $\Ebb\sbra{\hat{v}} = \tr{\rho \sigma}$. To be concrete, we define $\Phi_{\Zcal_j}=\Zcal_{j,2}^\dagger \sum_{\bm s} \superket{\bm s} \superbra{\bm s} \Zcal_{j,2}$ when $j\in[3] $ and denote $\Phi_{\Zcal_4} = \superket{\Ibb}\superbra{\Ibb}$ which is associated with the fixed circuit configuration $\Zcal_j$.
Note that if $j\in [3]$, we have
\begin{align}
&\Ebb_{\hat{p}}\sbra{\sum_{c} \hat{p}\pbra{\bm a}\hat{p}\pbra{\bm b |c}}\\
&= \sum_{c}\superbra{\bm a, c}\Wcal^{\bar A}_t\Zcal_{j,2} \Ucal_A \superket{\bm 0} \superbra{\bm b } \Wcal^{B}_{t'} \Ucal_B\Zcal_{j,2}^\dagger \superket{c, \bm 0}\\
&= \superbra{\bm a,\bm b}\Wcal_u \Ucal_B \Zcal_{j,2}^{\dagger}\sum_{c}\superket{c}\superbra{c} \Zcal_{j,2} \Ucal_A \superket{\bm 0}\\
&=\superbra{\bm a,\bm b}\Wcal_u \Ucal_B\Phi_{\Zcal_j}\Ucal_A \superket{\bm 0}
\end{align}
where $\Wcal_u = (\Wcal_t^{A\backslash c},\Wcal_{t'}^B)$, and $u\in [N^2]$, $\Ebb\pbra{\hat p}$ denotes the expectation over the measurement condition on fixed $\Zcal_j$ and $\Wcal_t$, and here we omit the input $\Zcal_j, \Wcal_t$ in $\hat{p}(\cdot)$ for convenience. 
We also illustrate this process in Supplementary Figure~\ref{method_fig:compress_innerproduct}\textbf{b,c}.
If $j=4$, we note $c=\frac{\Ibb}{2}$,
\begin{align}
&\Ebb_{\hat{p}}\sbra{\sum_{c} \hat{p}\pbra{a}\hat{p}\pbra{\bm b |c}} \\
&= \sum_{c, s_q} \frac{1}{2}\superbra{\bm s_{\bar A} c}\Wcal^{\bar A}_t\Zcal_{j,2} \Ucal_A \superket{\bm 0} \superbra{\bm b } \Wcal^{B}_{t'} \Ucal_B\Zcal_{j,2}^\dagger \superket{s_q \bm 0}\\
&= \superbra{s_{\bar c}\bm b}\Wcal_u \Ucal_B \Zcal_{j,2}^{\dagger}\frac{1}{2}\sum_{c, s_q}\superket{s_q}\superbra{c} \Zcal_{j,2} \Ucal_A \superket{\bm 0}\\
&=\superbra{\bm a,\bm b}\Wcal_u \Ucal_B \Zcal_{j,2}^{\dagger}\superket{\Ibb} \superbra{\Ibb} \Zcal_{j,2} \Ucal_A \superket{\bm 0}\\
&=\superbra{\bm a,\bm b}\Wcal_u \Ucal_B\Phi_{\Zcal_j}\Ucal_A \superket{\bm 0}.
\end{align}

By substituting into Eq. \eqref{eq:estimator_parallel_cutting}, we have
\begin{widetext}
  \begin{align}
\Ebb\sbra{\hat{v}}&=   \Ebb\sbra{\frac{25}{N^2}  2^n \sum_{j,j'\in[4]} (-1)^{\Zcal_{j,1} + \Zcal_{j',1}} \Zcal_{j,3} \Zcal_{j',3} \sum_{u=1}^{N^2}  \sum_{\bm a,\bm a',\bm b,\bm b'} (-2)^{-D(a, a')-D(\bm b, \bm b')} \superbra{\bm a\bm b}\Wcal_u \Ucal_B\Phi_{\Zcal_j}\Ucal_A \superket{\bm 0}  \superbra{\bm a'\bm b' }\Wcal_u \Vcal_B\Phi_{\Zcal_{j'}}\Vcal_A \superket{\bm 0} }\\
 &=\Ebb\sbra{\sum_{j,j'} \Zcal_{j,3} \Zcal_{j',3}\tr{\rho_{\Zcal_j} \sigma_{\Zcal_{j'}}} }
 \label{eq:cross_platform}
 \\
 &=\tr{\rho \sigma}.
 \label{eq:wire_cut}
\end{align}
\end{widetext}
where Eq.~\eqref{eq:cross_platform} follows from the cross-platform fidelity framework~\cite{Elben2020cross, brydges2019probing}, while Eq.~\eqref{eq:wire_cut} is derived using the wire cutting technique~\cite{Lowe2023fast}.

\end{proof}

\subsection{Cross-platform fidelity estimation with regular cutting}\label{SI:IV-a}

The state preparation circuit in the extreme case takes the form of a staircase composed of sequential sub-blocks, as illustrated in Supplementary Figure~\ref{suppfig:cuts_calculation}\textbf{a}. We consider $k_2$ rounds of cutting, resulting in $k_2 + 1$ parts after all the cuts. For simplicity, we assume that exactly $k_1$ qubits are cut in each round. However, this can easily be generalized to the cases where up to $k_1$ qubits are cut.

In this scenario, two types of cutting qubits are introduced at the $j$-th cut ($1 \leq j \leq k_2$):
\begin{itemize}
    \item[(1)] \textit{incoming cut} — qubits that are cut and measured within \textsf{Q}-$j$. We denote $\bm c_j\in\cbra{0,1}^{k_1}$ be the measurement outcomes in this \textit{incoming cut}.
    \item[(2)] \textit{departing cut} — qubits that were cut and measured in \textsf{Q}-$j$ and are re-initialized in the current \textsf{Q}-$(j+1)$. We denote the initialization state as $\bm{c}_{j-1,\bm{z}}$, which depends on the measurement outcomes of the \textit{incoming cut} in \textsf{Q}-$j$ and the Bernoulli variable associated with this cut, $z_j$, where $\bm{z} = \pbra{z_1, \ldots, z_{k_2}}$.
\end{itemize}

For each cut, we need to apply a random Clifford unitary, as illustrated in Supplementary Figure~\ref{suppfig:cuts_calculation}\textbf{b}. At the $j$-th cut, the Clifford gate operating on the \textit{departing cut} in \textsf{Q}-$(j+1)$ after re-initialization equals the inverse of the Clifford gate operating on the \textit{incoming cut} in \textsf{Q}-$j$ before measurement.

For notational convenience and to emphasize that the two platforms undergo the same process but with independently sampled variables, we use a variable and its primed counterpart to denote corresponding values in Platform~1 and Platform~2, respectively. For example, we use $\bm{z}$ and $\bm{z}'$ to represent the associated Bernoulli vectors.
We give a brief explanation of the algorithm and the generation of variables and random circuits involved in the process as follows.

\begin{itemize}
\item [(1)] \textbf{Generation of cutting-related variables and circuits:}
Let $\bm z = (z_1, \ldots, z_{k_2})$ and $\bm z' = (z'_1, \ldots, z'_{k_2})$.
For the $j$-th cut, we generate independent Bernoulli random variables
$z_j, z_j'\in\cbra{0,1}$ for Platform 1 and Platform 2 that take the value $1$ with probability $2^{k_1} / \pbra{2^{k_1+1} + 1}$. According to  Eqs.~\eqref{eq:phi_0} and \eqref{eq:phi_1}, if $z_j = 0$ (or $z_j'=1$), the random Clifford gate $\Ccal_j$ (or $\Ccal'_j$) from $\Cl_{k_1}$ is applied; otherwise, the applied gate is $\Ccal_j = \Ibb$ (or $\Ccal'_j= \Ibb$ for Platform 2). Recall that if $z_j=0$, then a random Clifford gate is performed, and if $z_j=1$ then the identity gate is performed.
\item [(2)]  \textbf{Generation of fidelity-related local random unitaries:}
We define the set of random unitary channels as $\Wmat = \cbra{\Wcal^{(1)}, \ldots, \Wcal^{(N)}}$, where each $\Wcal \in \Wmat$ is a tensor product of local unitaries. Here, with slight abuse of notation, $\Wcal_j$ denotes the random unitary applied to \textsf{Q}-$j$, i.e., $\Wcal = \bigotimes_{j=1}^{k_2+1} \Wcal_j$, and $\Wmat_j$ represents the associated ensemble, as illustrated in Supplementary Figure~\ref{suppfig:cuts_calculation}(b). 
\item [(3)]  \textbf{Specification of outcome probabilities:}  
Recall that $\bm{c}_j$ denotes the measurement outcomes of the \textit{incoming cut} in \textsf{Q}-$j$. Let $\bm{s}_j$ represent the measurement outcomes of the remaining qubits in \textsf{Q}-$j$, excluding the \textit{incoming cut}.
Define $p(\bm s_j,\bm c_j\mid \bm c_{j-1,z_{j-1}},\Wcal_j)$ as the probability of obtaining outcome $(\bm s_j, \bm c_j)$ with the restriction that the input for the \textit{departing cut} $\bm c_{j-1,z_{j-1}} = \bm c_{j-1}$ if $z_{j-1} = 0$, while $\bm c_{j-1,z_{j-1}}$ represents the maximally mixed state $\Ibb/2^{k_1}$ otherwise.  In the latter case, the probability $p({\bm s}_j, \bm{c}_j \mid \bm{c}_{j-1,\bm{z}}, \mathcal{W}_j)$ is estimated by averaging over the frequencies of $({\bm s}_j,\bm c_j)$, based on repeated sampling where inputs to the \textit{departing cut} in \textsf{Q}-$j$ are randomly drawn from $\cbra{0,1}^{k_1}$. Similarly we define $q(\bm s'_j,\bm c'_j|\bm c'_{j-1,z_{j-1}},\Wcal_j)$ to be the corresponding probability associated with quantum state $\sigma$ prepared in Platform $2$.
\end{itemize}

 In the following, we describe the general circuit structure for \textsf{Q}-$j$ where $1\leq j\leq k_2 + 1$. The $1$-st and $k_2 + 1$-th (last) parts involve only a single cut, whereas all others involve two cuts. Hence, for these two special cases, the operations associated with the additional cut are trivial. 
As illustrated in Supplementary Figure~\ref{suppfig:cuts_calculation}\textbf{b}, the measurement outcomes $\bm{s}_j$ are obtained by applying a random unitary $\Wcal\in \Wmat$ followed by computational basis measurement. In contrast, the measurement outcomes $\bm{c}_j$ correspond to the \textit{incoming cut}, where a Clifford gate is performed with probability associated with the Bernoulli variable $z_j$ before measurement.

 The other input qubits for \textsf{Q}-$j$ are initialized to be $\ket{\bm 0}$, after performing $\Ccal_{j-1}$ on the \textit{departing cut}, followed by operating unitary $U_j$, then a random unitary $\Wcal_j\otimes \Ccal_j^\dagger$ prior to the $Z$-basis measurement in \textsf{Q}-$j$.

The estimators $\superket{\hat{\rho}_{\bm{z}}}$ and $\superket{\hat{\sigma}_{\bm{z}'}}$ extend the definition in Eq.~\eqref{eq:def_rho_z_approx} to the case of $k_2$ cuttings. They represent the estimations of $\rho$ and $\sigma$, respectively, obtained by applying $\Phi_{z}$ channels (with $z \in \cbra{0,1}$) to the $k_2$ cutting qubits.
We give the estimation for $\tr{\hat{\rho}_{\bm z} \hat{\sigma}_{\bm z'}}$ with random unitary set $\Wmat$ as follows,
\begin{align}
\hat v_{\bm z,\bm z'} &:=\frac{2^n(-1)^{\abs{\bm z} + \abs{\bm z'}}}{N^{k_2+1}}\pbra{2^{k_1+1} + 1}^{2k_2} \sum_{\bm c,\bm c'}\prod_{j = 1}^{k_2 + 1}\sum_{
\Wcal_j,\bm s_j,\bm s'_j}
(-2)^{-D(\bm s_j,\bm s'_j)}\hat p(\bm s_j,\bm c_j|\bm c_{j-1,z_{j-1}},\Wcal_j)\hat q(\bm 
 s_j,\bm c'_j|\bm c'_{j-1,z_{j-1}},\Wcal_j),
\label{eq:cross_cutting_general_estimator}
\end{align}
where $\hat{p}(\bm{s}_j, \bm{c}_j \mid \bm{c}_{j-1,\bm{z}}, \Wcal_j)$ and $\hat{q}(\cdot)$ denote the estimated probabilities of $p(\bm{s}_j, \bm{c}_j \mid \bm{c}_{j-1,\bm{z}}, \Wcal_j)$ and $q(\cdot)$, respectively. These estimates correspond to the observed frequencies of the outcomes $\pbra{\bm{s}_j, \bm{c}_j}$ in $m$ rounds of measurements, under the constraints of $\bm{c}_{j-1,\bm{z}}$ and $\Wcal_j$. For convenience, we omit explicit mention of $m$ in the notation.

Assume $L$ Bernoulli variable vectors are generated randomly and uniformly.
Let $\Zmat_1 = (\bm z_1, \bm z_2, \ldots, \bm z_{L})$ (or $\Zmat_2 = (\bm z_1', \bm z_2', \ldots, \bm z_{L}')$) be sets of randomly generated Bernoulli variable vectors, where $\bm z_{i}\in \cbra{0,1}^{k_2}$ in Platform 1 (or Platform 2).
Let $\Zmat = (\Zmat_1, \Zmat_2)$.
For each pair $\bm z \in \Zmat_1$ and $\bm z' \in \Zmat_2$, generate $\hat v_{\bm z,\bm z'}$ using Eq.~\eqref{eq:cross_cutting_general_estimator} with randomly generated unitaries $\Ccal$ and $\Ccal'$ corresponding to $\bm z$ and $\bm z'$, respectively. The final estimator is then given by 
\begin{align}
  \hat v = \sum_{\bm z,\bm z'} \frac{\hat v_{\bm z,\bm z'}}{L^2}.
\label{eq:cross_cutting_Lrounds_general_estimator}
\end{align}

\begin{figure*}
    \includegraphics[width = 1.0\textwidth]{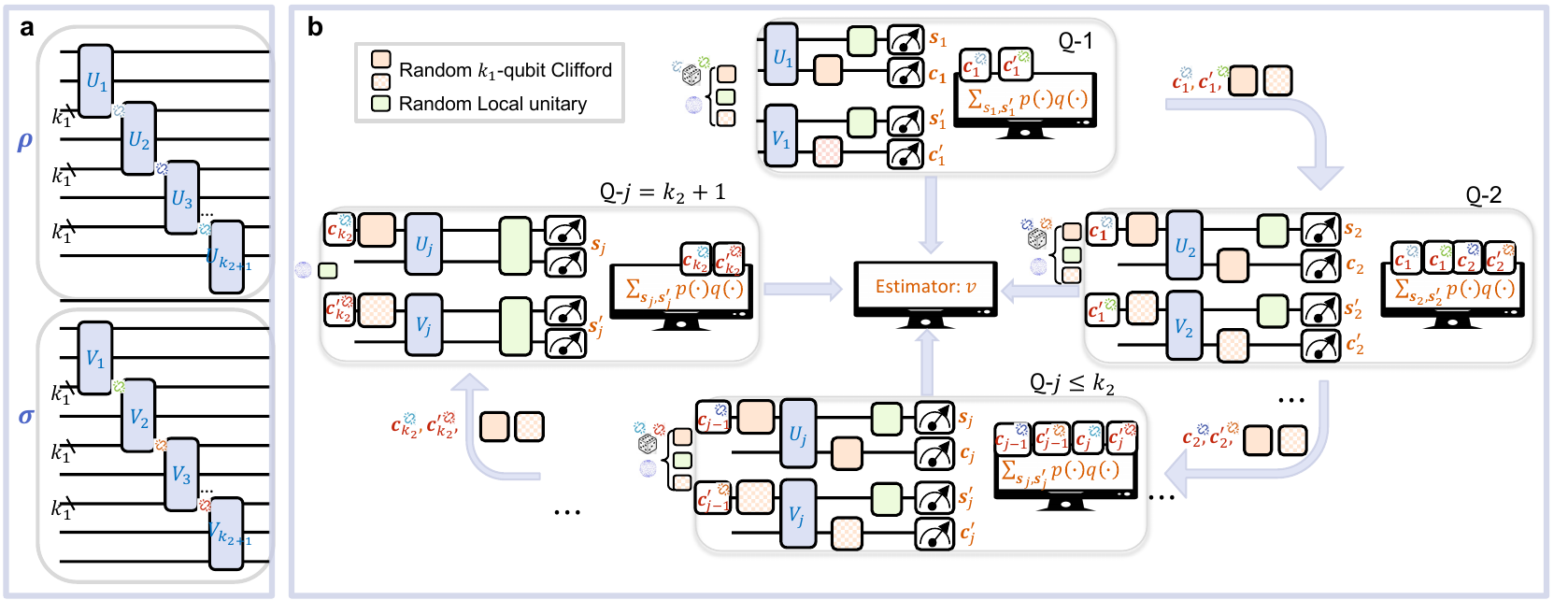}
    \caption{Illustration of constructing the estimator using the cross-platform circuit-cutting algorithm.
\textbf{a.} Original circuits used to prepare the quantum states $\rho$ and $\sigma$, with $k_2$ circuit cuts (represented by cut icons above the cut qubits) indicated on the corresponding wires.
\textbf{b.} Estimation of $\tr{\rho\sigma}$ using the cross-platform fidelity with classical communication algorithm. Each wire in \textsf{Q}-$j$ represents multiple quantum wires. Orange-colored boxes indicate random Clifford gates applied to the cutting qubits, while green-colored boxes denote random unitaries sampled from the tensor product of local random unitaries.}
\label{suppfig:cuts_calculation}
\end{figure*}

\begin{theorem}
Let $\hat{v}$ be the estimator of $\tr{\rho \sigma}$ defined in Eq.~\eqref{eq:cross_cutting_Lrounds_general_estimator}, 
where $\rho$ and $\sigma$ are two $n$-qubit quantum states. Suppose each cut involves at most $k_1$ qubits, and there are $k_2$ total cuts dividing the circuit into $k_2+1$ parts.
Then, $\hat{v}$ is an unbiased estimator of $\tr{\rho \sigma}$, and satisfies $\abs{\hat{v} - \tr{\rho \sigma}} \leq \varepsilon$ with high probability, provided the number of measurements is upper bounded by
\[ \Ocal\pbra{\pbra{2^{k_1+1} + 1}^{2k_2}\frac{6^{n/(2(k_2+1))}}{\varepsilon}}.\]
\label{thm:cut_fidelity_tight}   
\end{theorem}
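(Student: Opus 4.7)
I would process the three layers of randomness one at a time, starting from the innermost. For fixed Bernoulli vectors $(\bm z, \bm z')$, fixed Clifford choices $\{\Ccal_j, \Ccal_j'\}$, and fixed intermediate cut outcomes $(\bm c, \bm c')$, the inner double sum $\sum_{\bm s_j, \bm s_j'}(-2)^{-D(\bm s_j,\bm s_j')}\hat p \, \hat q$ appearing in $\hat v_{\bm z,\bm z'}$ is exactly a local distance-based cross-platform estimator applied to the pair of $n_j$-qubit sub-states produced by the sub-circuits, so by the 2-design twirling identity (Lemma~\ref{lem:Haar_second}) its expectation over $\Wcal_j$ and measurement outcomes equals the Hilbert--Schmidt inner product of those sub-states. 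Next, averaging over the Clifford gates and the intermediate bitstrings $\bm c_j, \bm c_j'$ collapses the pre-measurement operation at each cut into the channel $\Phi_{z_j}$ via Eq.~\eqref{eq:phi_expect_formula}, so $\Ebb[\hat v_{\bm z,\bm z'}]=\tr{\hat\rho_{\bm z}\hat\sigma_{\bm z'}}$ in the notation of Eq.~\eqref{eq:def_rho_z}. Finally, averaging over the Bernoulli vectors with the sign factor $(-1)^{|\bm z|+|\bm z'|}(2^{k_1+1}+1)^{2k_2}$ recovers the identity channels at every cut via Eq.~\eqref{SI:eqn:wire-cut}, yielding $\Ebb[\hat v]=\tr{\rho\sigma}$.

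\textbf{Plan for the sample complexity.} The strategy is a layered application of the law of total variance. Conditional on $(\bm z,\bm z',\bm c,\bm c')$ and the Clifford choices, the sub-circuit outputs are independent, and $\hat v_{\bm z,\bm z'}$ is literally a product of $k_2+1$ local distance-based cross-platform estimators acting on the independent sub-parts of sizes $n_j\le n/(k_2+1)$. For each sub-part, I would invoke Proposition~\ref{pro:cross_platform_general} (the general $Nm\gtrsim\sqrt{6^{n_j}}/\varepsilon$ local bound) to control the per-sub-part second moment by $\Ocal(2^{n_j}+3^{n_j}/m_j^2)$. Since the $k_2+1$ sub-part estimators are independent given the conditioning, the variance of their product factorizes, so choosing the per-sub-part shot count so that each factor's relative variance is $\Ocal(\varepsilon^2/(2^{k_1+1}+1)^{2k_2})$ and taking the product across the $k_2+1$ parts produces an aggregate per-pair variance of order $(2^{k_1+1}+1)^{4k_2}\cdot 6^{n/(k_2+1)}/(Nm)^2$ after the wire-cutting amplification is squared. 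Averaging over $L$ independent Bernoulli pairs and the remaining randomness via the outer layer of total variance, and invoking Chebyshev's inequality, yields $|\hat v-\tr{\rho\sigma}|\le\varepsilon$ whenever the total measurement count satisfies the claimed $\Ocal((2^{k_1+1}+1)^{2k_2}\cdot 6^{n/(2(k_2+1))}/\varepsilon)$ scaling.

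\textbf{Main obstacle.} The delicate point is the bookkeeping of independence under successive conditioning. The same random-unitary sequence $\{\Wcal_j^{(i)}\}$ is deliberately shared across the two platforms (this is what makes the estimator unbiased), so when passing from $\Ebb[\hat v_{\bm z,\bm z'}]$ to $\Ebb[\hat v_{\bm z,\bm z'}^2]$ one must expand a fourth-moment object in $\Wcal_j$, which forces the hypothesis from a local 2-design to a local 4-design and whose cancellations are exactly what produces the $6^{n_j}$ (rather than $4^{n_j}$) scaling. This calculation mirrors the enumeration in Lemma~\ref{lem:intermediate_result_crossplatform} and the $\var\le 2^{n_j}+3^{n_j}/m_j^2$ bound of Proposition~\ref{pro:cross_platform_general}, but now it has to be carried inside a product over $k_2+1$ sub-parts and compounded with the $(2^{k_1+1}+1)^{2k_2}$ amplification from wire cutting, without introducing spurious exponential factors in $n$. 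Keeping this multiplicative overhead isolated in $k_1, k_2$ and preventing it from contaminating the $n$-dependence is the main technical hurdle, and it is the reason the final complexity is a product rather than a sum of the cutting and fidelity-estimation costs.
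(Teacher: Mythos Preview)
Your unbiasedness plan is correct and essentially identical to the paper's: peel off the three layers of randomness (measurements/$\Wcal$, then Cliffords/$\bm c$, then Bernoulli $\bm z$) and use the 2-design identity and Eq.~\eqref{SI:eqn:wire-cut} at the appropriate stages.

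Your variance plan, however, has a genuine gap. The claim that ``$\hat v_{\bm z,\bm z'}$ is literally a product of $k_2{+}1$ local distance-based cross-platform estimators'' is not correct: the outer sum $\sum_{\bm c,\bm c'}$ in Eq.~\eqref{eq:cross_cutting_general_estimator} couples adjacent sub-parts through the shared indices $\bm c_j$ (the outgoing-cut outcome of part $j$ is the incoming-cut input of part $j{+}1$). What you have is a chain contraction $\sum_{\bm c}\prod_j A_j(\bm c_{j-1},\bm c_j)$, not a product of scalars, so the variance does not factorize in the way you describe. To bound the second moment you would need $\Ebb[A_j(\bm c_{j-1},\bm c_j)A_j(\tilde{\bm c}_{j-1},\tilde{\bm c}_j)]$ for \emph{all} index pairs, i.e.\ the full covariance structure across different $\bm c$-values, and then control the $4^{2k_1k_2}$ cross terms in the double sum. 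Your proposal does not address this.

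The paper avoids this obstacle entirely by going the other direction: rather than factorizing into sub-parts, it \emph{re-assembles} $\hat v_{\bm z,\bm z'}$ into a single $n$-qubit distance-based estimator on the states $\hat\rho_{\bm z},\hat\sigma_{\bm z'}$ (Eq.~\eqref{eq:est_v_z_zprime_compress}), and then applies the $n$-qubit bound of Proposition~\ref{pro:cross_platform_general} directly. The crucial observation is that, because $\Wcal=\bigotimes_j\Wcal_j$ and the shots on different parts are cross-concatenated, the effective number of random unitaries is $N^{k_2+1}$ and the effective number of shots is $m^{k_2+1}$; plugging these boosted sample sizes into the bound $2^n+3^n/m^2$ gives $2^n/N^{k_2+1}+3^n/(Nm^2)^{k_2+1}$, and optimizing $N,m,L$ yields the claimed $LNm=\Ocal\big((2^{k_1+1}+1)^{2k_2}6^{n/(2(k_2+1))}/\varepsilon\big)$. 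So the exponent reduction comes from the \emph{multiplication of effective sample sizes}, not from per-sub-part variance factorization.
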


\begin{proof}
Recall that in the proposed algorithm, $L$ Bernoulli variable vectors are generated randomly and uniformly for both platforms. Accordingly, let $\mathbf{\hat\Psi}_1 = \pbra{\hat{\Psi}^{(1)}, \hat{\Psi}^{(2)}, \ldots, \hat{\Psi}^{(L)}}$, and $\mathbf{\hat\Psi}_2 = \pbra{\hat{\Psi}^{(L+1)}, \hat{\Psi}^{(2)}, \ldots, \hat{\Psi}^{(2L)}}$, be the sampled random channel set associated with quantum states $\rho$ and $\sigma$ prepared in Platform 1 and 2 respectively, where $\hat{\Psi}^{(j)} = (\hat\Psi_1^{(j)},\ldots, \hat\Psi_{k_2}^{(j)})$ for $1\leq j \leq 2L$, and $\hat\Psi_i^{(j)}\in\cbra{\Ccal_i M_{\bm s}\Ccal_i^\dagger, \Ibb}$ is an unbiased estimation of the cutting channels $\Psi_i^{(j)} \in \cbra{\Phi_0,\Phi_1}$ where $\Phi_0 = \Ebb_{\Ccal}\sbra{\Ccal\circ M_{\bm s} \circ \Ccal^{\dagger}}$ and $\Phi_1 = \superket{\Ibb}\superbra{\Ibb}$ as defined in Eq.~\eqref{eq:phi_0} and \eqref{eq:phi_1}, $M_{\bm s} = \sum_{\bm s}\superket{\bm s}\superbra{\bm s}$ is the measurement and state preparation channel. In what follows, we denote  $\mathbf{\hat\Psi} = \pbra{\mathbf{\hat\Psi}_1, \mathbf{\hat\Psi}_2}$.

Recall that $\Ebb[\hat v] = \Ebb_{\bm{\hat{\Psi}}, \bm s,\bm s',\Wcal, \bm z, \bm z'}\sbra{\hat v_{\bm z,\bm z'}}$. As such, we split the proof into two parts. Specially, the first part amounts to showing that $\Ebb_{\bm z, \bm z
'}[\tr{\rho_{\bm z}\sigma_{\bm z'}}] = \tr{\rho \sigma}$. This can be easily proved by making use of the fact that $\Ebb\sbra{\rho_{\bm z}} = \rho$ and $\Ebb\sbra{\sigma_{\bm z'}} = \sigma$.

The second part is to prove that $\Ebb_{\bm{\hat{\Psi}},\bm s,\bm s',\Wcal}\sbra{\hat v} = \tr{\rho_{\bm z}\sigma_{\bm z'}}$. Here $\rho_{\bm z}$ and $ \sigma_{\bm z'}$ are the cut states of Platform 1 and Platform 2 after $k_2$ cuttings. It is easy to see that $\Ebb_{\bm{\hat{\Psi}},\bm s,\bm s',\Wcal}\sbra{\hat v} = \Ebb_{\bm{\hat{\Psi}}, \bm s,\bm s',\Wcal}\sbra{\hat v_{\bm z,\bm z'}} $. Hence we only need to prove $\Ebb_{\bm{\hat{\Psi}}, \bm s,\bm s',\Wcal}\sbra{\hat v_{\bm z,\bm z'}} = \tr{\rho_{\bm z}\sigma_{\bm z'}}$.

We start by analyzing the probability of a basis $\bm s$ associated with the state $\rho_{\bm z}$, i.e.,
 \begin{align}
\Ebb_{\hat{\Psi}\in \bm{\hat{\Psi}} }&\sbra{(-1)^{\abs{\bm z}}\pbra{2^{k_1+1} + 1}^{k_2}\sum_{\bm c}\prod_{j = 1}^{k_2 + 1}p\pbra{\bm s^j,\bm c_{j}|\bm c_{j-1,z_{j-1}},\Wcal_j}}\\
 &=\Ebb_{\hat{\Psi}\in \bm{\hat{\Psi}}}\sbra{(-1)^{\abs{\bm z}}\pbra{2^{k_1+1} + 1}^{k_2}\sum_{\bm c}\prod_{j = 1}^{k_2 + 1}\superbraket{\bm s^j,\bm c_{j}|\Wcal_j\Ccal_j\Ucal_j\Ccal_{j-1}|\bm c_{j-1,z_{j-1}},\bm 0} }\\
 &=(-1)^{\abs{\bm z}}\pbra{2^{k_1+1} + 1}^{k_2}\sum_{\bm c}\superbra{\bm s}\Wcal\Ucal_{k_2+1} \Ebb_{\hat{\Psi}_{k_2}}\sbra{\Ccal_{k_2}^{\dagger}\superket{\bm c_{k_2,z}}\superbra{\bm c_{k_2}}\Ccal_{k_2}}\cdots \Ucal_{j+1}\Ebb_{\hat{\Psi}_{j}}\sbra{\Ccal_{j}^\dagger\superket{\bm c_{j,z}}\superbra{\bm c_{j}}\Ccal_{j}}\cdots \Ucal_1 \superket{0^{n}}\\
 &= (-1)^{\abs{\bm z}}\pbra{2^{k_1+1} + 1}^{k_2}\superbraket{\bm s|\Wcal\Ucal_{k_2+1}\Psi_{k_2+1}\Ucal_{k_2}\cdots \Psi_1\Ucal_1|0^{n}}\\
 &:=p\pbra{\bm s\mid\rho_{\bm z}, \Wcal},
\end{align} 
where $\hat{\Psi}= \pbra{\Psi_1,\ldots, \Psi_{k_2}}$, $\bm s = \pbra{\bm s^1, \ldots, \bm s^{k_2+1}}$, and $\Wcal = \otimes_{j = 1}^{k_2+1}\Wcal_j$.  The quantity $p\pbra{{\bm s} \mid \rho_{\bm z}, \Wcal}$ denotes the probability of obtaining outcome ${\bm s}$ when the quantum state $\rho_{\bm{z}}$ is evolved under the channel $\Wcal$ prior to measurement. Similarly, we have
\begin{align}
\begin{aligned}
  &q(\bm s'|\Wcal,\sigma_{\bm z'})= \pbra{2^{k_1+1} + 1}^{k_2}\sum_{\bm c'}\Ebb_{\hat{\Psi}\in \bm{\hat{\Psi}}}\sbra{(-1)^{\abs{\bm z'}}\prod_{j = 1}^{k_2 + 1}q\pbra{\bm s'_j,\bm c'_j|\bm c'_{j-1,z_{j-1}},\Wcal_j}}.  
\end{aligned}
\end{align}
Hence we have
\begin{align}
\begin{aligned}
\Ebb_{\bm s,\bm s',\hat\Psi,\Wcal}\sbra{v_{\bm z,\bm z'}} &= \frac{2^n}{N^{k_2+1}}\Ebb_{\Wcal}\sbra{\sum_{\Wcal}\sum_{\bm s,\bm s'}(-2)^{-D(\bm s,\bm s')} p(\bm s|{\rho}_{\bm z}, \Wcal) q(\bm s'|{\sigma}_{\bm z'}, \Wcal)}\\
    &= \tr{{\rho}_{\bm z} {\sigma}_{\bm z'}},   
\end{aligned}
\label{eq:expect_s_phi_w_v}
\end{align}
where the second equality of Eq.~\eqref{eq:expect_s_phi_w_v} holds by Eq.~\eqref{eq:cross_platFid}.

\smallskip
In the following, we give the bound to the variance.
  Let $\Xmat$ be the measurement set containing all the measurement shots of $\rho$ and $\sigma$. 
  Since $\var\pbra{v} = \frac{\var\pbra{v_{\bm z,\bm z'}}}{L^2}$, it suffices to determine the variance bound of $v_{\bm z,\bm z'}$.

  According to the law of total variance, we express the variance of $v_{\bm z,\bm z'}(\Zmat, \mathbf{\hat{\Psi}}, \Wmat, \Xmat)$ as
    \begin{align}
    \begin{aligned}\label{SI:thm4-total-variance}
       \var(v_{\bm z,\bm z'}(\Zmat, \mathbf{\hat{\Psi}}, \Wmat, \Xmat)) &= \Ebb_{\Zmat,\mathbf{\hat{\Psi}}}\sbra{\var_{\Wmat, \Xmat}\pbra{v_{\bm z,\bm z'}(\Zmat,\mathbf{\hat{\Psi}}, \Wmat, \Xmat)|\Zmat, \mathbf{\hat{\Psi}}} } +\var_{\Zmat,\mathbf{\hat{\Psi}}} \pbra{ \Ebb_{\Wmat,\Xmat}\sbra{v_{\bm z,\bm z'}(\Zmat, \mathbf{\hat{\Psi}}, \Wmat, \Xmat)|\Zmat, \mathbf{\hat{\Psi}}} }.     
    \end{aligned}
 \end{align}   
To give a bound for the first term, we observe that 
\begin{align}
v_{\bm z,\bm z'} &= \frac{2^n(-1)^{\abs{\bm z} + \abs{\bm z'}}(2^{k_1+1} + 1)^{2k_2}} {N^{k_2+1}}\sum_{\bm c,\bm c'}\prod_{j = 1}^{k_2 + 1}\sum_{
\Wcal_j,\bm s_j,\bm s'_j}
(-2)^{-D(\bm s_j,\bm s'_j)}\hat p(\bm s_j,\bm c_j|\bm c_{j-1,z_{j-1}},\Wcal_j)\hat q(\bm 
 s_j,\bm c'_j|\bm c'_{j-1,z_{j-1}},\Wcal_j)\\
&=\frac{2^n(-1)^{\abs{\bm z} + \abs{\bm z'}}(2^{k_1+1} + 1)^{2k_2}} {N^{k_2+1}}\sum_{\bm s,\bm s',\Wcal}(-2)^{-D(\bm s,\bm s')}\hat{p}(\bm s|\hat{\rho}_{\bm z},\Wcal)\hat{q}(\bm s'|\hat{\sigma}_{\bm z'},\Wcal),
\label{eq:est_v_z_zprime_compress}
\end{align}  
 where $\hat{p}(\bm s|\hat{\rho}_{\bm z},\Wcal)$ is the estimation of the probability $\superbraket{\bm s|\Wcal\Ucal_{k_2+1}\hat\Psi_{k_2}\cdots \hat\Psi_1\Ucal_1|0^{n}}$, and $\hat{q}(\bm s'|\hat{\sigma}_{\bm z'},\Wcal)$ is defined analogously. Then
 \begin{align}
  \var_{\Wmat, \Xmat}\pbra{v_{\bm z,\bm z'}(\Zmat,\mathbf{\hat{\Psi}}, \Wmat, \Xmat)|\Zmat, \mathbf{\hat{\Psi}}}&= \frac{\var_{\Wcal, \Xmat}\pbra{v_{\bm z,\bm z'}(\Zmat,\mathbf{\hat{\Psi}}, \Wcal, \Xmat)|\Zmat, \mathbf{\hat{\Psi}}}}{N^{k_2 + 1}} \\
  &\leq\frac{\pbra{2^{k_1+1} + 1}^{4k_2}}{ N^{k_2+1}}\pbra{c 2^n + \frac{3^n}{m^{2(k_2 + 1)}}},   
\label{eq:v_zzprime_first_term}
 \end{align}
 for constant positive $c$.
Eq.~\eqref{eq:v_zzprime_first_term} follows from the variance bound in Eq. \eqref{eq:variance_cross_platform_general}. Note that, due to circuit cutting and cross-concatenation, we effectively obtain $m^{k_2 + 1}$ measurement outcomes.
 
For the second term, applying Eq.~\eqref{eq:est_v_z_zprime_compress}, we obtain
\begin{align}
\begin{aligned}
  &\Ebb_{\Wmat,\Xmat} \sbra{v_{\bm z,\bm z'}} = (-1)^{\abs{\bm z} + \abs{\bm z'}}(2^{k_1+1} + 1)^{2k_2}\superbra{\bm 0}\Ucal\superket{\bm 0},
\end{aligned}
\end{align}
where $\Ucal = \Ucal_1^{\dagger}\hat\Psi_1 \cdots \hat\Psi_{k_2} \Ucal_{k_2+1}^\dagger \Vcal_{k_2+1}\hat\Psi_{k_2}\cdots \hat\Psi_1\Vcal_1$.
Hence
  \begin{align}
  \begin{aligned}
  \var_{\Zmat,\mathbf{\hat{\Psi}}} \pbra{ \Ebb_{\Wmat,\Xmat}\sbra{v_{\bm z,\bm z'}(\Zmat, \mathbf{\hat{\Psi}}, \Wmat, \Xmat)|\Zmat, \mathbf{\hat{\Psi}}} } &=  (2^{k_1+1} + 1)^{2k_2}  
 \var_{\Zmat,\mathbf{\hat{\Psi}}} \pbra{(-1)^{\abs{\bm z} + \abs{\bm z'}}\superbra{\bm 0}\Ucal\superket{\bm 0}} \\
& \leq (2^{k_1+1} + 1)^{4k_2}.    
  \end{aligned}
\end{align}

Hence 
\begin{equation}
     \var(v_{\bm z,\bm z'}(\Zmat, \mathbf{\hat{\Psi}}, \Wmat, \Xmat)) \leq \frac{\pbra{2^{k_1+1} + 1}^{4k_2}}{ N^{k_2+1}}\pbra{c 2^n + \frac{3^n}{m^{2(k_2 + 1)}}} + (2^{k_1+1} + 1)^{4k_2}.
\end{equation}
Let $\hat{v}$ be the estimator defined in Eq.~\eqref{eq:cross_cutting_Lrounds_general_estimator}. By Chebyshev's inequality, we have 
\begin{align}
    \Pr\sbra{\abs{\hat{v} - \tr{\rho \sigma}}\geq \varepsilon} &\leq \frac{\var(\hat{v})}{\varepsilon^2}\\
    &=\frac{\var(v_{\bm z, \bm z'})}{L^2\varepsilon^2}\\
    &=:\delta,
\end{align}
where $\delta$ is the failure probability. We note that $\delta$ can be bounded to a small constant if the following inequalities are satisfied,
\begin{align}
L^2 &\geq c_0\frac{(2^{k_1+1} + 1)^{4k_2}}{\varepsilon^2},\\
L^2 N^{k_2 + 1}&\geq c_0c2^n\frac{(2^{k_1+1} + 1)^{4k_2}}{\varepsilon^2},\\
 L^2 m^{2(k_2 + 1)}N^{k_2+1} &\geq c_03^n \frac{(2^{k_1+1} + 1)^{4k_2}}{\varepsilon^2},
\end{align}
for some constant $c_0$.
Given these constraints, we select $L, N, m$ as follows:
\begin{align}
  L &= \Ocal\pbra{\frac{(2^{k_1+1} + 1)^{2k_2}}{\varepsilon}},\\
  N &= \Ocal\pbra{2^{n/(k_2+1)}},\\
   m &= \Ocal\pbra{(3/2)^{n/2(k_2+1)}}.
\end{align}
Hence $LNm = \Ocal\pbra{\frac{6^{n/2(k_2+1)}(2^{k_1+1} + 1)^{2k_2}}{\varepsilon}}$.
\end{proof}

\subsection{Cross-platform fidelity estimation with more general cutting}\label{SI:IV-B}

Now we extend Theorem~\ref{thm:cut_fidelity_tight} into a broader cutting framework, where the employed circuits are permitted to have more intricate configurations, as illustrated in Supplementary Figure~\ref{fig:method_alg}.

We consider $k_2$ rounds of cutting, resulting in $r$ parts after all the cuts. For simplicity, we assume that exactly $k_1$ qubits are cut in each round. However, this can easily be generalized to cases where up to $k_1$ qubits are cut. For each cut, we apply a random Clifford unitary to the cutting qubits, associated with some other operations, as illustrated in Supplementary Figure~\ref{fig:method_alg}\textbf{b}. We give a brief explanation of the algorithm as follows.

\begin{figure*}
    \includegraphics[width = 0.6\textwidth]{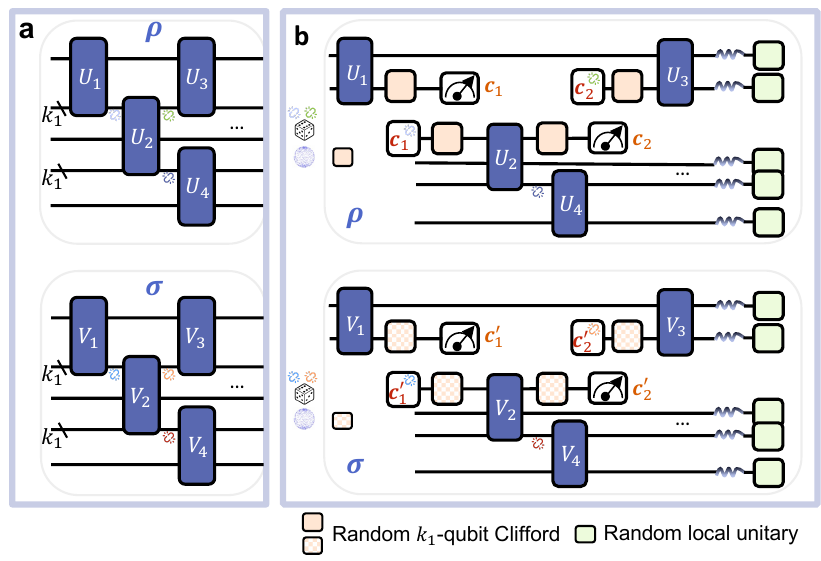
}
    \caption{Illustration of the quantum circuit used to estimate $\tr{\rho\sigma}$ via the state similarity algorithm with classical communication. \textbf{a.} State preparation circuits for the quantum states $\rho$ and $\sigma$. \textbf{b.} Quantum circuit implementing the state similarity algorithm with classical communication. Here, the orange-colored boxes denote random Clifford gates acting on $k_1$ qubits, and green-colored boxes represent the tensor product of random local unitary $\Wcal$. For clarity, only two circuit cuts are explicitly shown as an example. Wavy lines preceding the random unitaries $\mathcal{W}$ indicate the omission of intermediate gates for simplicity. Final measurements are performed on the computational basis following the application of $\mathcal{W}$.}
\label{fig:method_alg}
\end{figure*}

Generate the local random unitary set $\Wmat = \cbra{\Wcal^{(1)},\ldots, \Wcal^{(N)}}$, where each $\Wcal\in \Wmat$ is a tensor product of local random unitaries. Here, with a bit of abuse of symbols, we utilize $\Wcal_j$ to denote the random unitary associated with \textsf{Q}-$j$, i.e.,  $\Wcal = \otimes_{j = 1}^{r} \Wcal_j$ and $\Wmat_j$ to denote the associated set, as illustrated by the green-colored boxes in each independent part of Supplementary Figure~\ref{fig:method_alg}\textbf{b}.

The algorithm settings and circuit structure remain the same as in the previous section, except for the operations on cutting qubits.

Recall that a cut consists of a \textit{incoming cut} and a \textit{departing cut}. In \textsf{Q}-$j$, suppose there are $u_j$ \textit{departing cuts}, and denote the set of these cuts as $A_j$. Consequently, the re-initialized state will involve $u_j \times k_1$ qubits, which we denote by $\bm{c}_{\text{pre}(j), \bm{z}}$, where $\text{pre}(j)$ represents all parts containing the associated $u_j$ number of \textit{incoming cuts}.  
The re-initialized inputs corresponding to the cut $v \in A_j$ are given by the measurement outcome of the associated \textit{incoming cut} if $z_v=0$, and approximate $\superket{\Ibb}$ if $z_v=1$.  
For each cut $v$, a Clifford gate $\mathcal{C}^\dagger$ is applied to the \textit{incoming cut} before computational basis measurements, while the inverse Clifford gate $\mathcal{C}$ is applied after re-initialization for the \textit{departing cuts}, with probabilities determined by the associated Bernoulli variable $z_v$. All other gates remain unchanged. Finally, a random unitary $\mathcal{W}$ is applied prior to the $Z$-basis measurements, as illustrated in Supplementary Figure~\ref{fig:method_alg}\textbf{b}.
  
We define $p\pbra{{\bm s}_j,\bm c_j \mid \bm{c}_{\text{pre}(j),\bm z}, \mathcal{W}_j}$ as the probability of obtaining outcome $({\bm s}_j, \bm c_j)$ in \textsf{Q}-$j$, where $\bm c_j$ denotes the measurement outcomes of the \textit{incoming cuts}, and $\bm s_j$ denotes the measurement outcomes of the remaining outcomes, conditioned on the measurement outcomes and input configurations of the \textit{incoming cuts}, and given the application of the random unitary channel $\mathcal{W}_j$. Similar to $p\pbra{{\bm s}_j,\bm c_j \mid \bm{c}_{\text{pre}(j),\bm z}, \mathcal{W}_j}$, we define $q(\bm s'_j,\bm c'_j \mid \bm{c}'_{\text{pre}(j),\bm z}, \mathcal{W}_j)$ as the corresponding probability associated with the quantum state $\sigma$ prepared on Platform 2. Let $\hat{p}(\cdot)$ and $\hat{q}(\cdot)$ be the estimated probabilities of $p(\cdot)$ and $q(\cdot)$ respectively, obtained from $m$ rounds of measurement repetitions. Following these notations, the estimator of $\tr{\rho\sigma}$ is as follows
  \begin{align}
  \begin{aligned}
 &v_{\bm z,\bm z'} :=\frac{2^n(-1)^{\abs{\bm z}+\abs{\bm z'}}\pbra{2^{k_1+1} + 1}^{2k_2} }{N^{r}}\sum_{\bm c, \bm c'}\prod_{j = 1}^{r}\sum_{
\Wcal_j,\bm s_j,\bm s'_j}(-2)^{-D(\bm s_j,\bm s'_j)}\hat p(\bm s_j,\bm c_j \mid \bm{c}_{\text{pre}(j),\bm z}, \mathcal{W}_j)\hat q(\bm s'_j,\bm c'_j \mid \bm{c}'_{\text{pre}(j),\bm z'}, \mathcal{W}_j),
  \end{aligned}
\label{eq:cross_cutting_general_estimator1}
\end{align}  
where $\bm c:=\pbra{\bm c_1,\ldots \bm c_{r}} $ and $ \bm c'=\pbra{\bm c'_1,\ldots, \bm c'_{r}}$.
Let $\Zmat_1 = (\bm z_1, \bm z_2, \ldots, \bm z_{L})$ and $\Zmat_2 = (\bm z_1', \bm z_2', \ldots, \bm z_{L}')$ be sets of randomly generated Bernoulli variables. For each pair $\bm z \in \Zmat_1$ and $\bm z' \in \Zmat_2$, we generate $v_{\bm z, \bm z'}$ using Eq.~\eqref{eq:cross_cutting_general_estimator1} with randomly generated unitaries $\Ccal$ and $\Ccal'$ corresponding to $\bm z$ and $\bm z'$, respectively. The final estimator is then given by 
\begin{align}
  \hat v = \sum_{\bm z,\bm z'} \frac{v_{\bm z,\bm z'}}{L^2}.
\label{eq:cross_cutting_general_estimator2}
\end{align}

Equation~\eqref{eq:cross_cutting_general_estimator2} provides an unbiased estimator of $\tr{\rho\sigma}$, and the required number of samples can be reduced sub-exponentially, provided that $k_1$ and $k_2$ are constants with $k_1 < r$ and $r \geq 3$. This result, along with Theorem 1 in the main text, is established by leveraging the 3-design property of the Clifford group and tools from the theory of random unitaries.
Hence, the number of measurements can be upper bounded by $\Ocal\pbra{\pbra{2^{k_1+1} + 1}^{2k_2} 6^{n/2r}/\varepsilon}$, as formally stated in the following theorem, which corresponds to Theorem~1 in the main text.

\begin{theorem}[Formal version of Theorem 1 in the main text]
Let $\hat{v}$ be the estimator of $\tr{\rho \sigma}$ defined in Eq.~\eqref{eq:cross_cutting_general_estimator2}, where $\rho$ and $\sigma$ are quantum states. Suppose each cut involves at most $k_1$ qubits, and there are $k_2$ total cuts dividing the circuit into $r$ parts.
Then, $\hat{v}$ is an unbiased estimator of $\tr{\rho \sigma}$, and satisfies $\abs{\hat{v} - \tr{\rho \sigma}} \leq \varepsilon$ with high probability, provided the number of measurements is upper bounded by
\[
\mathcal{O} \left( \left(2^{k_1 + 1} + 1\right)^{2k_2} \cdot \frac{6^{n / 2r}}{\varepsilon} \right).
\]
\label{thm:cut_fidelity_general}   
\end{theorem}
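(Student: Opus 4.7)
The proof will mirror the strategy of Theorem~\ref{thm:cut_fidelity_tight}, promoting $k_2+1\mapsto r$ wherever the number of independent subcircuits appears while keeping the $(2^{k_1+1}+1)^{2k_2}$ prefactor that is tied to the number of cuts rather than to the number of parts. The argument decomposes into three steps: (i) unbiasedness of $v_{\bm z,\bm z'}$ conditional on the Bernoulli vectors $(\bm z, \bm z')$; (ii) a total-variance bound for $v_{\bm z,\bm z'}$ via Proposition~\ref{pro:cross_platform_general}; and (iii) a Chebyshev bound for the outer average $\hat v = L^{-2}\sum_{\bm z, \bm z'} v_{\bm z,\bm z'}$.

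For unbiasedness, I condition on $(\bm z, \bm z')$. On each subcircuit $\mathsf{Q}$-$j$, the empirical frequency $\hat p(\bm s_j, \bm c_j\mid \bm c_{\text{pre}(j), \bm z}, \Wcal_j)$ is an unbiased estimator of the corresponding Born probability, so averaging the product over the $r$ subcircuits, then over the Clifford choices on each cut, and finally over the random re-initializations turns each cut into an application of $\Phi_{z_\ell}$ via Eqs.~\eqref{eq:phi_0}--\eqref{eq:phi_expect_formula}. The product over $j$ then collapses to the single-state Born probability $p(\bm s\mid \hat\rho_{\bm z}, \Wcal)$ exactly as in the proof of Theorem~\ref{thm:cut_fidelity_tight}. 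Averaging over $\Wcal$ sampled from the tensor product of a local 4-design invokes Eq.~\eqref{eq:cross_platFid} to yield $\tr{\hat\rho_{\bm z}\hat\sigma_{\bm z'}}$, and a final average over $(\bm z, \bm z')$ via Eq.~\eqref{SI:eqn:wire-cut} reproduces $\tr{\rho\sigma}$.

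For the variance, I apply the total-variance decomposition Eq.~\eqref{SI:thm4-total-variance}. Inside a fixed $(\bm z, \bm z', \hat\Psi)$, the parallel post-processing of Section~\ref{subsec:in_parallel_method} reuses the $m$ shots per subcircuit to build $m^r$ cross-concatenated measurement pairs, so Proposition~\ref{pro:cross_platform_general} applied to the resulting effective cross-platform estimator gives
\begin{align*}
\var_{\Wmat, \Xmat}\pbra{v_{\bm z,\bm z'}\mid \Zmat, \hat\Psi} \leq \frac{(2^{k_1+1}+1)^{4k_2}}{N^{r}}\pbra{c\, 2^n + \frac{3^n}{m^{2r}}},
\end{align*}
for a universal constant $c$, while the outer term $\var_{\Zmat, \hat\Psi}\pbra{\Ebb_{\Wmat, \Xmat}[v_{\bm z,\bm z'}]}$ is at most $(2^{k_1+1}+1)^{4k_2}$ because the inner expectation is a bounded scalar of magnitude $(2^{k_1+1}+1)^{2k_2}$. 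Plugging into $\var(\hat v) \leq \var(v_{\bm z,\bm z'})/L^2$ and invoking Chebyshev, the three resulting sample-complexity constraints solve to $L = \Ocal((2^{k_1+1}+1)^{2k_2}/\varepsilon)$, $N = \Ocal(2^{n/r})$, and $m = \Ocal((3/2)^{n/(2r)})$, yielding the claimed total $LNm = \Ocal((2^{k_1+1}+1)^{2k_2}\cdot 6^{n/(2r)}/\varepsilon)$.

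The main obstacle is the factorization step (i) under the general cut topology. In the staircase case of Theorem~\ref{thm:cut_fidelity_tight} the inserted cut channels form a linear chain, so the required telescoping is almost automatic. In the general case one must carefully index $\bm c_{\text{pre}(j), \bm z}$ so that every \textit{departing cut} at $\mathsf{Q}$-$j$ is matched with its unique \textit{incoming cut} upstream, and then verify that the expectations over the two corresponding Clifford insertions and the shared re-initialization sample still identify the pair with a single application of $\Phi_{z_\ell}$. This is ultimately a bookkeeping argument leveraging the Clifford-twirl identity Eq.~\eqref{eq:phi_expect_formula} cut-by-cut, but it is the one place where the DAG structure of the cuts must be handled explicitly rather than inherited from the linear chain of Theorem~\ref{thm:cut_fidelity_tight}.
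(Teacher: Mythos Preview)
Your proposal is correct and follows exactly the approach the paper takes: the paper's own proof of Theorem~\ref{thm:cut_fidelity_general} consists of a single sentence stating that it is obtained from Theorem~\ref{thm:cut_fidelity_tight} by replacing the sequential $k_2+1$ parts with $r$ independent parts. Your write-up is in fact more detailed than the paper's, and your identification of the factorization bookkeeping over the general cut topology as the only nontrivial step is accurate.
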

The proof is similar to Theorem \ref{thm:cut_fidelity_tight} by generalizing the sequential cutting with generating $k_2+1$ parts into more generating cutting with generating $r$ independent parts.

\subsection{Cross-platform fidelity for noiseless and noisy graph states with measurement error mitigation}

Here we give the analysis for the fidelities between the generated noisy graph state with circuit cutting and the noiseless graph state $\tr{\hat{\rho}\rho_{G}}$, where $G$ denotes the graph representation for the graph state and $\hat{\rho}$ is the prepared noisy quantum state.

Let us start by briefly introducing the graph state representation for $n$-qubit quantum states. An $n$-qubit graph state $\rho_G$ can be written as
\begin{align}
\rho_G = \prod_{j}\frac{\pbra{\mathbb{I} + S_j}}{2} = \frac{1}{2^n} \sum_P \text{sign}(P)P,
\end{align}
where $S_j = X_j \prod_{l \in N(j)} Z_l$ is the stabilizer generator for qubit $j$, and $N(j)$ denotes the set of neighbors of vertex $j$ in the graph $G$. Each $P$ in the sum is a stabilizer element with $\text{sign}(P) \in \cbra{\pm 1}$. In particular, for the GHZ state, the stabilizers are given by $S_1 = X_1 X_2 \ldots X_n$ and $S_j = Z_{j-1} Z_j$ for $j \in \cbra{2, \ldots, n}$.

By the Chernoff bound, $\mathcal{O}(1)$ samples are sufficient to estimate $\tr{\hat{\rho} \rho_G}$ with constant additive error and high probability, when $P$ is sampled uniformly at random from the stabilizer group~\cite{Cao2023GenerationOG}. In the following, we present the estimation of $\tr{\hat{\rho} \rho_G}$ for a noisy quantum state prepared via the circuit cutting method, and introduce an error mitigation strategy to suppress readout errors.

\subsubsection{Algorithm details}

 We assume that only a single qubit is cut, partitioning the quantum state into two parts: \textsf{Q}-\textsf{A} and \textsf{Q}-\textsf{B}. This setup can be readily extended to the general case with multiple cuts. 
Let $\text{sign}(\mathcal{P}_t)\mathcal{P}_t$ denote the randomly generated stabilizer for each $t \in [T]$, where $T$ is the total number of stabilizers sampled.

Recall the definition of the estimator for the cross-platform fidelity with a single cut in SI \ref{subsec:in_parallel_method}, where we define the set of tuples
\begin{align}
    \mathcal{Z} = \cbra{\pbra{0, \mathbb{I}, \frac{1}{5}}, \pbra{0, R_y^{\pi/2}, \frac{1}{5}}, \pbra{0, R_x^{\pi/2}, \frac{1}{5}}, \pbra{1, \mathbb{I}, \frac{2}{5}}}
\end{align}
to simplify circuit configurations associated with cutting. Here we adopt the same strategy and introduce $\mathcal{Z}$ for calculating the fidelity $\tr{\rho \rho_G}$, where $\rho$ is the prepared quantum state on Platform~1 (or Platform~2) with wire cutting, and $\rho_G$ is the ideal graph state.  An unbiased estimator for this fidelity can be expressed as
\begin{align}
\hat{v}_{G} &= \frac{5}{T} \sum_{j=1}^4 (-1)^{\Zcal_{j,1}} \Zcal_{j,3}  
\sum_{t = 1}^T\text{sign}(\Pcal_t) \sum_{c\in\cbra{0,1}} \sum_{\bm a} \mu(\bm a)\hat{p}\pbra{\bm a, c | \Zcal_j, \Wcal_t^{\Pcal_{\bar A}}} \sum_{\bm b} \mu\pbra{\bm b} \hat{p}\pbra{\bm b|\Zcal_j, c, \Wcal_t^{\Pcal_B}},
\label{eq:estimator_pure_state}
\end{align}
where $\Wcal^{\Pcal}:= \Wcal^{\Pcal_{\bar A}} \otimes \Wcal^{\Pcal_B}$ denotes a randomly chosen stabilizer unitary that generates the associated graph state, $\bar A$ denotes the qubits in $A$ except the cutting qubits that satisfies the linear map $P_t = W_t^\dagger Z^{\otimes n} W_t$, and $\Pcal_t$ and $\Wcal_t$ are channel representations of unitaries $P_t$ and $W_t$ respectively, $\mu(\bm s) = (-1)^{\sum_{j=1}^{|\bm s|}s_i}$, and $\abs{\bm s}$ is the size of $\bm s$.
Bitstring $(\bm{a}, c)$ denotes the measured bitstring in \textsf{Q}-\textsf{A}, and $\bm{b}$ denotes the measured bitstring in \textsf{Q}-\textsf{B}, consistent with the definitions provided in the Methods section. The notations $c$ and $\Wcal_t$ are also used as defined in the Methods section and are not repeated here.

When we refer to an unbiased estimation of the cross-platform fidelity or the fidelity with a pure state, we assume that the random unitary $\mathcal{W}$ and the measurement process are noiseless. In the following, we introduce the error mitigation algorithm designed to mitigate the measurement noise.

\smallskip
\noindent\textbf{Error mitigation for cutting associated noise.}
We focus on mitigating measurement noise on the cutting qubits, under the assumption that state preparation and the Clifford gates applied to the cutting qubits are noise-free. This assumption is reasonable in the single-qubit cut setting, where measurement noise dominates, and errors from zero-state preparation and single-qubit gates are negligible by comparison~\cite{Arute2019Quantum}.

Since the channel $\Phi_1$ defined in Eq.~\eqref{eq:phi_1} resets the input of the cutting qubits, we only need to consider the noise introduced by the protocol of $\Phi_0$ defined in Eq.~\eqref{eq:phi_0}. Assuming the random Clifford gates and state preparation are noise-free, we denote the noisy channel $\Phi_0$ as $\widetilde{\Phi}_0$, which can be expressed as:
\[
\widetilde{\Phi}_0 = \Ebb_{U}\sbra{\Ucal \circ \widetilde{M}_{\bm s} \circ \Ucal^{\dagger}},
\]
where $\widetilde{M}_{\bm s} = \Lambda \sum_{\bm s} \superket{\bm s}\superbra{{\bm s}}$ represents the noisy measurement and noiseless state preparation processes, $\Lambda$ is the measurement noise channel.
By a corollary of Schur's lemma~\cite{fulton2013representation},  we have 
\begin{align*}
  \widetilde{\Phi}_0 = \superket{\Ibb_{d}}\superbra{\Ibb_{d}} + f\Pi_1,  
\end{align*}
where $f = \tr{\widetilde{M}_{\bm s}\Pi_1}/\tr{\Pi_1}$, and $d = 2$.
By the definition of $\Pi_1$, we have $\Pi_1 = \Ical - \superket{\Ibb}\superbra{\Ibb}$ with  $\tr{\Pi_1} = d^2 - 1$. 
Therefore, $f = \frac{1}{d+1}$ if the quantum device is noise-free.

We approximate $f$ using the calibration protocol proposed by Chen et al.~\cite{Chen2021Robust}. To ensure completeness, we restate the protocol below and give the illustration in Supplementary Figure~\ref{suppfig:errmiti_cutting}. 
We repeat $T$ rounds of the following steps to approximate $f$ by $\hat{f} = \frac{1}{T}\sum_{t=1}^T \hat{f}_t$, where we generate an estimation $\hat{f}_t$ in each round.
\begin{itemize}
    \item[(1)] Randomly select a Clifford element $\Ccal$ from the $\Clgroup_{1}$, apply $\Ccal^\dagger$ to the readily prepared state $\ket{0}$, and then measure it in the computational bases, yielding outcomes $\bm s$.
    \item[(2)] Generate estimation $\hat{f}_t = \frac{d\tr{\ket{\bm 0}\bra{\bm 0} \Ccal \ket{\bm s}\bra{\bm s}\Ccal^\dagger }  - 1}{d - 1}$.
\end{itemize}

\begin{figure}
    \centering
    \includegraphics[width=0.8\linewidth]{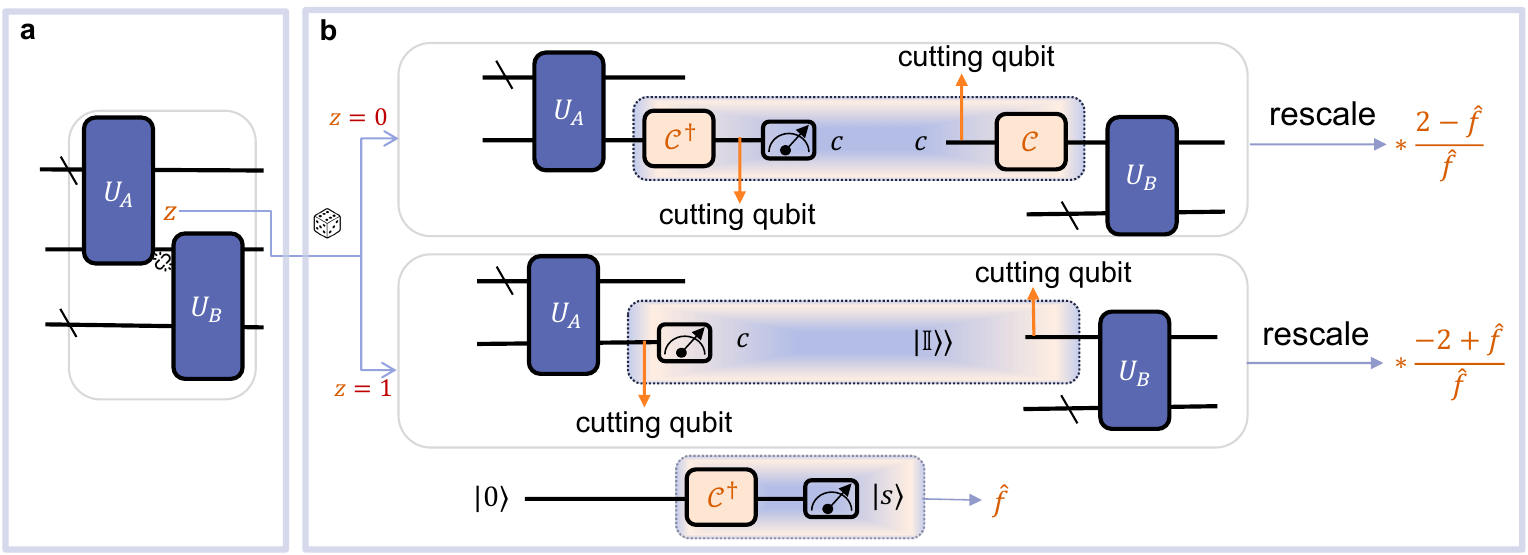}
    \caption{Schematic illustration of the circuit-cutting algorithm with error mitigation of the cut-qubit measurement noise. \textbf{a.} The circuit before the cut. \textbf{b.} The sub-circuits after the cut and the calibration circuit. The bottom calibration process generates $\hat{f}$, which serves for the estimation process.}
    \label{suppfig:errmiti_cutting}
\end{figure}

Here we prove that $\Ebb[\hat{f}] = f$ using the twirling property of the Clifford group and calculate it within the Pauli-transfer matrix representation, as shown in the following,
\begin{align*}
\Ebb\sbra{\hat{f}} &= \frac{d\Ebb_{\Ccal}\sum_{\bm s}\sbra{\superbra{\bm 0}\Ccal\superket{\bm s}\superbra{\bm s}\Lambda\Ccal^\dagger\superket{\bm 0}} - 1}{d - 1} \\
&= \frac{d \superbra{\bm 0} \pbra{\superket{\Ibb}\superbra{\Ibb} + f\Pi_1}\superket{\bm 0} - 1}{d - 1} \\
&= \frac{d\pbra{f + (1-f)\frac{1}{d}} - 1}{d-1}\\
&=f,
\end{align*}
where $\Ccal = C(\cdot )C^{\dagger}$ is the channel representation of the Clifford gate $C$ and $d = 2$, and $\Lambda$ is the noise channel.

With the approximated $f$, denoted as $\hat{f}$, we let $z\in\cbra{0,1}$ be the Bernoulli random variable such that $\Pr[z = 0] = \frac{1}{2-\hat{f}}$ and replace $\Phi_0$ by $\widetilde{\Phi}_0$, we have $\Ical = \frac{2-\hat{f}}{\hat{f}}\Ebb_z\sbra{(-1)^z\widetilde{\Phi}_z}$, where $\widetilde{\Phi}_i$ denotes $\Phi_i$ with measurement noise. 
We can easily check that for the noise-free model,
\begin{align*}
    \Pr[z = 0] = \frac{1}{2 - f} = \frac{d+1}{2d + 1}, \qquad \Ical = \frac{2-{f}}{{f}}\Ebb_z\sbra{(-1)^z{\Phi}_z} = \frac{d + 1}{2d+1} \Ebb_z\sbra{(-1)^z{\Phi}_z},
\end{align*}
where $d = 2$, which matches the noise-free result in the previous section.
Consequently, the number of measurements is associated with $\hat{f}$. 
 
As a contrast to the distribution of circuit configuration defined in SI \ref{subsec:in_parallel_method}, we let the distribution of the circuit configuration as 
\begin{align}
\hat\Zcal^{(i)} = \left\{\left( 0, \Ibb, \frac{1}{3\pbra{2 - \hat{f}_i}} \right), \left( 0, R_y^{\pi/2}, \frac{1}{3\pbra{2 - \hat{f}_i}} \right), \left( 0, R_x^{\pi/2}, \frac{1}{3\pbra{2 - \hat{f}_i}} \right), \left( 1, \Ibb, \frac{{1 - \hat{f}}}{{2 - \hat{f}_i}} \right)\right\},
\end{align}
where $\hat{f}_i$ is an estimator of $f_i$ with the above calibration process, $i$ denotes Platform $i$, and $\hat\Zcal^{(i)}$ is the configuration set associated with the Platform $i\in\cbra{1,2}$.

Since here we consider a single cut of a quantum state with $k_1=k_2=1$, then for a quantum state $\rho$, after cutting, similar to Eq.~\eqref{eq:estimator_pure_state}, we denote the estimator of $\tr{\hat{\rho}\rho_G}$ obtained through the error-mitigated cutting process as
\begin{align}
\hat{v}_{G,z}&:=\frac{ 2^n }{T}
\frac{\pbra{2-\hat{f}}}{\hat{f}}\sum_{j} (-1)^{\Zcal_{j,1}}\Zcal_{j,3} \sum_{t = 1}^T\text{sign}(\Pcal_t) \sum_{c}\sum_{\bm a}\mu(\bm a)\hat{p}\pbra{\bm a, c | \Zcal_j, \Wcal_t^{\Pcal_{\bar A}}} \sum_{\bm b}  \mu\pbra{\bm b} \hat{p}\pbra{\bm b|\Zcal_j, c, \Wcal_t^{\Pcal_B}},
\label{eq:em_cutting}
\end{align}
where $\Wcal^{\Pcal}:= \Wcal^{\Pcal_{\bar A}} \otimes \Wcal^{\Pcal_B}$ denotes a randomly chosen stabilizer unitary that generates the associated graph state, satisfying the linear map $P = W^\dagger Z W$, $\hat{f}$ is the estimation of $f$, which corresponds to the cutting of Platform 1 or 2.

\smallskip
\noindent\textbf{Error mitigation for readout error of other qubits.}
Assuming that the measurement noise exhibits no cross-talk and can be modeled as a tensor product of $n$ independent two-level noise channels, the overall noise channel $\Lambda$ takes the form
\begin{align}
\Lambda = \bigotimes_{j=1}^n \Lambda_j, \quad \Lambda_j = \begin{pmatrix} 1-\xi_j & \eta_j\\
\xi_j & 1 - \eta_j \end{pmatrix},
\end{align} where each $\Lambda_j$ represents the local noise acting on the $j$-th qubit, and the parameters $\xi_j$ and $\eta_j$ correspond to the probabilities of bit-flip errors from $0$ to $1$ and from $1$ to $0$, respectively. Under this assumption, we give an unbiased estimation
\begin{align}
    \hat{v}_{G}= \frac{5}{T} \sum_{j=1}^4 (-1)^{\Zcal_{j,1}} \Zcal_{j,3} \sum_{t = 1}^T \text{sign}(\Pcal_t)\sum_{c}
\sum_{\bm x,\bm a} 
 \hat{p}\pbra{\bm a, c | \Zcal_j, \Wcal_t^{\Pcal_{\bar A}}} \mu(\bm x)\braket{\bm x|\Lambda^{-1}|\bm a}
 \sum_{\bm y,\bm b}
 \hat{p}\pbra{\bm b|\Zcal_j, c, \Wcal_t^{\Pcal_B}} \mu\pbra{\bm y} \braket{\bm y|\Lambda^{-1}|\bm b}.
 \label{eq:em_readout}
\end{align}

This quantity $\hat{v}$ generates an unbiased estimation of $\tr{\rho\ket{\psi}\bra{\psi}}$ by Ref.~\cite{bravyi2021mitigating}.
Combined with Eq.~\eqref{eq:em_cutting} and Eq.~\eqref{eq:em_readout}, we generate the final error-mitigated estimator.

\section{Details of superconducting quantum processors}
\label{supp:intro_supconduct_device}

Our experiment is conducted on a multi-qubit superconducting quantum processor, similar in architecture to that described in Ref.~\cite{liang2024dephasing}. The processor consists of 66 frequency-tunable transmon qubits and 110 tunable couplers, each connecting neighboring qubits. In the GHZ cross-platform fidelity estimation experiment, 19 frequency-tunable asymmetric qubits are used, while 18 qubits are used in the quantum phase learning experiment. As these two sets of qubits share 16 qubits, a total of 22 distinct qubits are utilized across both experiments.
The sweet points of the qubits are around 3.5 GHz and 7.0 GHz, with nonlinearity around -220 MHz. 
Each qubit is equipped with a dedicated Z-bias line for frequency tuning and implementation of Z gates, as well as an independent XY-control line for the application of XY gates.
The tunable couplers are individually controlled via separate Z-bias lines and enable dynamic modulation of the effective coupling strength between adjacent qubits, ranging from approximately $+5$ MHz to $-20$ MHz. A subset of 22 qubits is selected to experimentally demonstrate state similarity with communication.

Key-performance characteristics of the utilized qubits are summarized in Supplementary Figure~\ref{suppfig:parameters}. The idle frequencies of the qubits range from approximately 4.0 GHz to 4.5 GHz. The average relaxation time $T_1$ and spin-echo dephasing time $T_{2}^\text{SE}$ are measured to be 76.7~$\mu$s and 11.7~$\mu$s, respectively.
Single-qubit gate fidelities are assessed using randomized benchmarking~\cite{knill2008randomized}, yielding an average gate error of 0.08\%. Two-qubit CZ gates are characterized via cross-entropy benchmarking~\cite{boixo2018characterizing, google2023suppressing, ren2022experimental}, with a measured average cycle Pauli error of 1.14\%. The average readout fidelities for the $\ket{0}$ and $\ket{1}$ states are 98.0\% and 95.1\%, respectively.
\begin{figure}[htpb]
    \centering
    \includegraphics[width=1.0\linewidth]{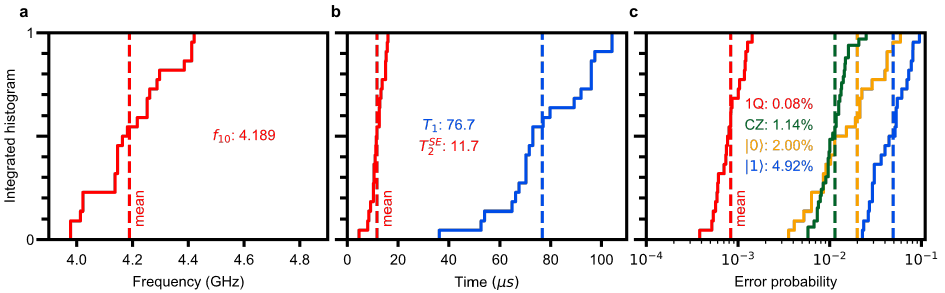}
    \caption{Integrated histograms of typical parameters for the employed 22 qubits. \textbf{a.} Distribution of qubit idle frequency $f_\text{10}$. \textbf{b.} Distribution of qubit relaxation time $T_1$(read line) and spin echo dephasing time $T_2^\text{SE}$(blue line) measured at idle frequency. \textbf{c.} Pauli error for single-(red line) and the cycle Pauli error for two-qubit CZ gates(green line). The yellow and blue lines indicate the readout errors for the $\ket{0}$ and $\ket{1}$ states, respectively.}
    \label{suppfig:parameters}
\end{figure}

\section{More Experimental Results}
\label{supp:experimental_res}
In this section, we provide a more detailed explanation of the experimental setup described in the main text, along with additional supporting results. The experiments involve a single circuit cut applied to one qubit. We also present a numerical analysis of multi-cut scenarios, showing that as the number of circuit partitions increases, the estimation error decreases for a fixed number of measurements.

\subsection{Experimental implementation of cross-platform fidelity estimation of GHZ states with circuit cutting}

\subsubsection{Parameter settings for the GHZ-state experiments in the main text}
Supplementary Table~\ref{tab:GHZ_param_set} summarizes the number of random unitaries applied for each qubit configuration, along with the measurement shots used in the experiments. Recall that for GHZ-state experiments, a single cut is applied to the $(n+1)/2$-th qubit, dividing the quantum state into two parts. Accordingly, in the `Qubits’ column, the notation $(j+j)$ indicates that each device contains $j$ qubits. To mitigate statistical fluctuations and estimate error bars, each experiment is repeated over 12 slightly dependent sub-experiments. The reported value in the main text is the mean of these 12 estimations, and the error bar is given by the standard deviation divided by $\sqrt{12}$.

\begin{table}[h]
\caption{Parameter settings for the cross-platform fidelity estimation with classical communication for the GHZ state shown in Fig. 2 of the main text. 
}
\begin{tabular}{|c|c|c|c|c|c|c|}
\hline
\textbf{State} & \textbf{Qubits} & \textbf{\#$\Wcal$} &\textbf{\#Circ. a $\Wcal$} & \textbf{\#Shots a Circ.}  &  \textbf{\#$\Wcal$ a Sub-exp.} & \textbf{\#Shots a Circ. in a Sub-exp.} \\ \hline
\multirow{4}{*}{GHZ} & 5 ($3+3$) 
& \textsf{Q}-\textsf{A}:9/\textsf{Q}-\textsf{B}:27 & \multirow{4}{*}{12} & 6000  
& \textsf{Q}-\textsf{A}:9/\textsf{Q}-\textsf{B}:27 & 500\\ 
\cline{2-3}  \cline{5-7} 
 & 7 ($4+4$) & 80 & & 6000  & 40 & 1000 \\ \cline{2-3}  \cline{5-7} 
 & 9 ($5+5$) & 100  & & 6000  & 50& 1000 \\ \cline{2-3}  \cline{5-7} 
 & 11 ($6+6$) & 400 &  & 2000  & 66& 1000 \\ \hline
\end{tabular}
\label{tab:GHZ_param_set}
\end{table}

To enhance the performance of the superconducting device in our experiment, we conduct all 12 sub-experiments simultaneously and subsequently partition the data into 12 sub-experiments, thereby enabling data reuse across these sub-experiments. Accordingly, we report (1) the total number of $\mathcal{W}$ unitaries (in the ``\# $\mathcal{W}$'' column), (2) the number of circuits induced by the cutting process (in the ``\# Circ. a $\mathcal{W}$'' column), and (3) the number of shots per circuit (in the ``\# Shots a Circ.'' column).

The ``\# $\mathcal{W}$'' column lists the number of random unitaries $\mathcal{W}$ applied to each part for $n \in \{7,9,11\}$, while all unitaries are enumerated when $n=5$. Recall that we have two parts, and we distinguish them as \textsf{Q}-\textsf{A} and \textsf{Q}-\textsf{B} in the $n=5$ case, as they involve different numbers of enumerated unitaries.

As detailed in the Methods section, we parallelize the quantum execution and classical postprocessing by enumerating all circuit configurations associated with a given cut, replacing the sampling of Bernoulli variables and cutting channels with full enumeration. Consequently, for each entry in the ``\# Circ. a $\mathcal{W}$'' column, 12 times more circuits are executed to account for all combinations of cutting configurations, as shown in the last column of Supplementary Table~\ref{tab:GHZ_param_set}.

After dividing the data from an experiment into 12 sub-experiment datasets, we list the number of random unitaries applied in each sub-experiment (in the ``\# $\mathcal{W}$ a Sub-exp.'' column) and the total number of shots used per sub-experiment (in the ``\# Shots a Circ. in a Sub-exp.'' column).

Generating $N$ random unitaries allows for the construction of $N^2$ distinct circuit configurations, which significantly reduces the overall measurement cost when the number of qubits exceeds 7. In the 5-qubit case, we enumerate all possible random unitaries: the upper subcircuit involves 2 qubits and yields 9 unitaries, while the lower subcircuit involves 3 qubits and yields 27 unitaries. As a result, the total number of regenerated unitaries is $9 \times 27 = 243$.

\begin{figure}[htpb]
    \centering
    \includegraphics[width=0.8\linewidth]{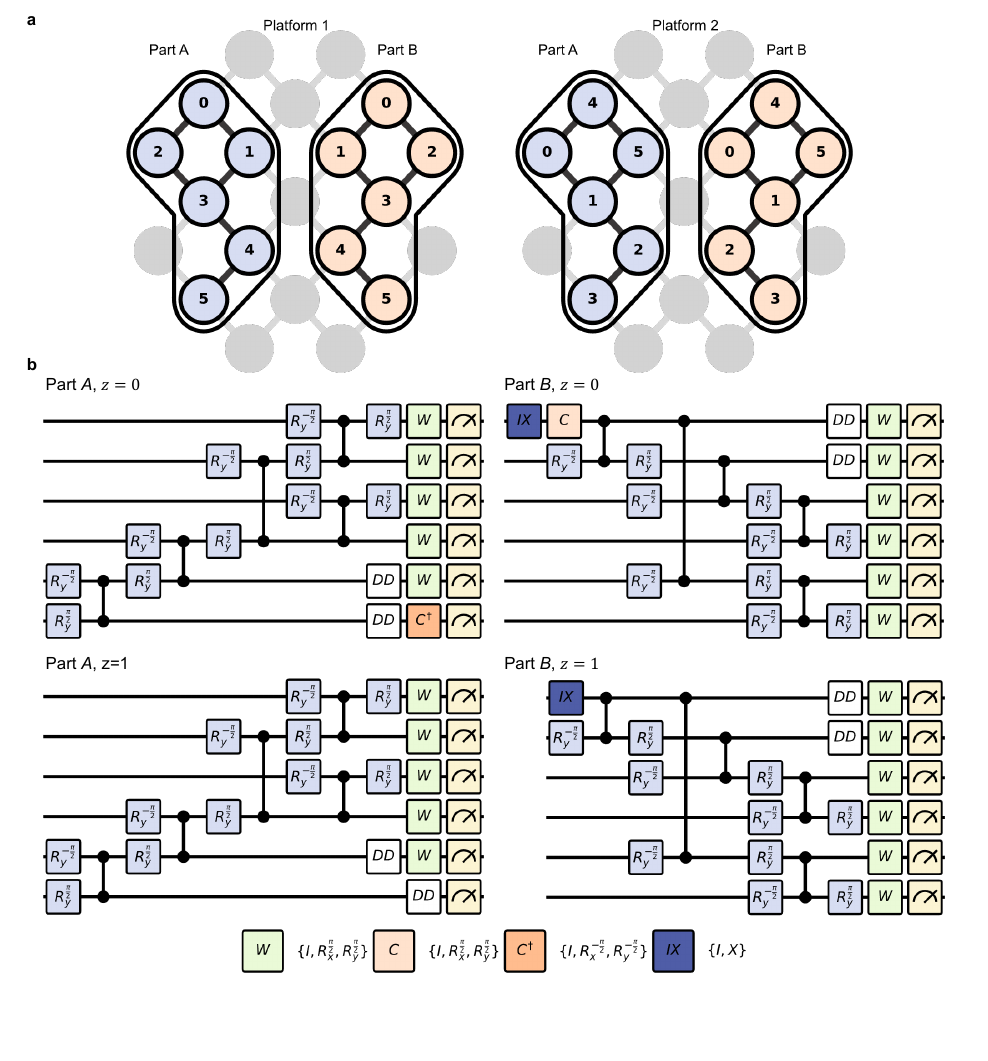}
    \caption{Experimental circuit for cross-platform fidelity estimation of 11-qubit GHZ states. DD in the circuit refers to the use of dynamical decoupling techniques to mitigate decoherence. 
\textbf{a.} Physical qubits and their connectivity used in the experiment, with two platforms employed—each comprising two parts (modules). 
\textbf{b.} Specific circuits after cutting, as implemented in the experiment. The quantum state is divided into two parts: \textsf{Q}-\textsf{A} and \textsf{Q}-\textsf{B}, with both platforms executing the same circuit. In \textsf{Q}-\textsf{A}, 4 circuit configurations are performed, while in \textsf{Q}-\textsf{B}, 8 configurations are executed.  
}
    \label{suppfig:SI_6+6circuits}
\end{figure}
\subsubsection{Experimental circuit settings in the main text}

In all of the experiments, we utilized the same quantum processor and selected two sets of qubits, labeled Platform 1 and 2, each consisting of $(n+1)/2$ qubits, to implement the quantum circuit. 
For \textsf{Q}-\textsf{A}, single- and two-qubit gates were applied to prepare an $n$-qubit GHZ state. When $z=0$, a quantum operation was randomly chosen from $\{ I, R_x^{\frac{\pi}{2}}, R_y^{\frac{\pi}{2}} \}$ and applied to the qubits before measurement, while for the cutting qubit, the operation was randomly selected from $\{ I, R_x^{-\frac{\pi}{2}}, R_y^{-\frac{\pi}{2}} \}$. 
When $z = 1$, no additional quantum gate is performed on the cutting qubit before measurement.

For \textsf{Q}-\textsf{B}, since we have parallelized the circuit with additional postprocessing, the qubit must first undergo the preparation of either the $|0\rangle$ or $|1\rangle$ state. For $z=0$, a quantum operation randomly chosen from $\{ I, R_x^{\frac{\pi}{2}}, R_y^{\frac{\pi}{2}} \}$ is applied to the cutting qubit after state preparation. However, when $z = 1$, no additional operation is needed for the cutting qubit.

The circuit for preparing the 11-qubit GHZ state ($6+6$ qubits) via circuit cutting is shown in Supplementary Figure~\ref{suppfig:SI_6+6circuits}.
Depending on the qubits' idle time, we applied the dynamical decoupling (DD) technique~\cite{ezzell2023dynamical} to mitigate decoherence.

\subsubsection{More experimental results for GHZ-states}
To assess the quality of the generated GHZ state, we estimate its fidelity with the ideal (noiseless) GHZ state using measurement data obtained from the cross-platform fidelity experiment, as indicated in FIG. 2c in the main text.
This is feasible because we use local Clifford gates, which diagonalize Pauli operators. That is, if a Pauli operator $P$ admits the decomposition $P = C^\dagger \Lambda C$ for some Clifford $C$, then we can estimate $\tr{\rho P}$ by applying $C$ to the state $\rho$, performing a computational basis measurement, and computing the estimator $\mu(s) = (-1)^{\sum_j \bm{s}_j}$ from the measurement outcome $\bm{s} \in \cbra{0,1}^n$, which provides an unbiased estimate of $\tr{\rho P}$.

To estimate the fidelity with the pure state, we face the challenge that the quantum state is destroyed by the measurements. Moreover, due to the system noise, it is difficult to regenerate an identical state. To address this, we reuse the measurement results from the cross-platform fidelity experiment to estimate the fidelity of the generated quantum states with the ideal pure state on both Platform 1 and Platform 2.

One approach is to directly select the global random unitary set $\cbra{\Wcal}$ that forms the Pauli operators belonging to the stabilizer group.   Since the set of $2^n$ stabilizer generators forms a small subset of the full $4^n$ Pauli basis, this chosen set is relatively small. Importantly, both local and global Pauli stabilizers contain $2^n$ elements, meaning their total numbers are identical.
To satisfy the randomness requirement and expand the measurement statistics, we choose an equal number of global stabilizers and local stabilizers for comparison. Local stabilizers can be easily selected from the $\cbra{\mathcal{W}}$ set by using Pauli measurement methods~\cite{wu2023overlapped}.
In practice, we first identify the number of global stabilizers present among the random Pauli operators generated by the random unitaries. We then select an equal number of local stabilizers from the set of random Pauli operators for comparison.  

\begin{figure}[htpb]
    \centering
    \includegraphics[width=1.0\linewidth]{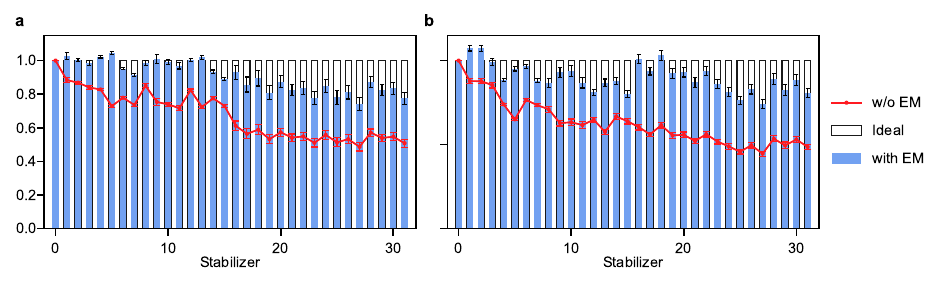}
    \caption{Expectation values of all stabilizer operators for 5-qubit GHZ states cut into $3+3$ parts, with and without error mitigation.
Results are shown for GHZ states prepared on \textbf{a.} Platform~1 and \textbf{b.} Platform~2.
}
\label{suppfig:SI_stabilizer_expectation}
\end{figure}

In Supplementary Figure~\ref{suppfig:SI_stabilizer_expectation}, we present a comparison of all stabilizer expectation values of the 5-qubit GHZ state in two platforms without and with error mitigation proposed in SI Section \ref{supp:algorithm_details}, based on experimental data. The GHZ state is generated with a single cut for the third qubit.

\begin{figure}[t]
    \centering
    \includegraphics[width=0.4\linewidth]{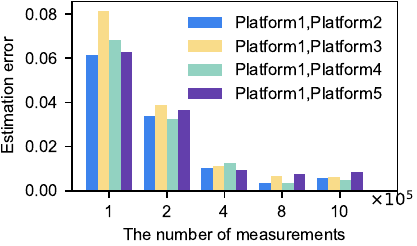}
    \caption{Estimation error as a function of the number of measurements. The error is computed from 12 independent rounds of the 9-qubit GHZ cross-platform fidelity estimation experiment, using the same experimental measurement data as in Figure 2 of the main text.}
    \label{suppfig:ErrMeas}
\end{figure}

We note that the estimation of tomography results with Pauli operators in the main text also follows the same approach by enumerating the first 64 Pauli operators. Each Pauli operator is encoded using a quaternary number, where the digits 0, 1, 2, and 3 represent $\mathbb{I}$, $\sigma_x$, $\sigma_y$, and $\sigma_z$, respectively. The first 64 Pauli operators correspond to the first 64 quaternary numbers under this encoding scheme.

Supplementary Figure~\ref{suppfig:ErrMeas} illustrates how the estimation error scales with the number of measurements, using the same data as in Figure 2 of the main text for the 9-qubit ($5+5$) cross-platform fidelity experiment. The corresponding settings for the number of random unitaries, measurement shots per round, and the number of repetition rounds used for error estimation are provided in Supplementary Table~\ref{tab:fiveplusfive_settings}. The error is defined as the standard deviation over the repeated rounds.

Supplementary Figure~\ref{suppfig:5plus5_fidpure} shows the fidelity between the noisy 9-qubit GHZ state and the ideal GHZ state, obtained by cutting the 5th qubit and implementing the circuit on two independent 5-qubit modular quantum processors. This setup can also be realized on a single modular quantum device. The figure compares results with and without error mitigation, using the method described in SI Section~\ref{supp:algorithm_details}.

\begin{figure}
    \centering
\includegraphics[width=0.4\linewidth]{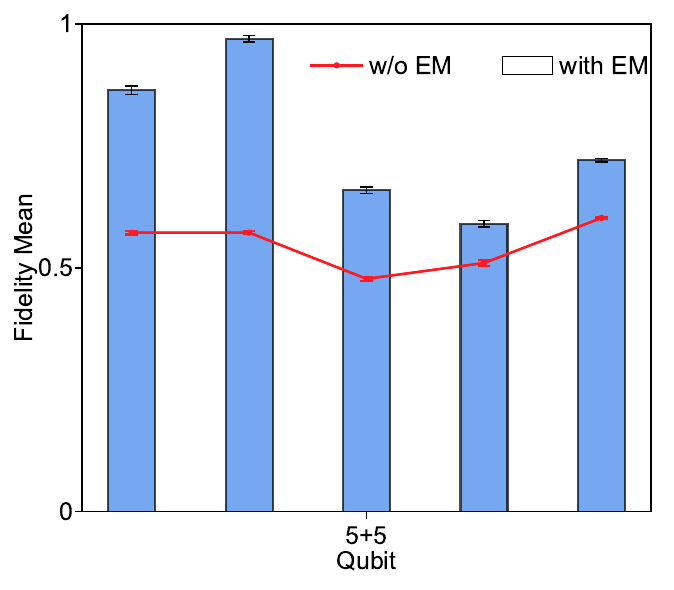}
    \caption{Fidelity with the ideal GHZ state for 7-qubit ($5+5$) GHZ states prepared on different modular quantum processors, shown with and without error mitigation (EM and w/o EM). This figure uses the same experimental data as in Fig. 2 of the main text.}
    \label{suppfig:5plus5_fidpure}
\end{figure}

\begin{table}[htbp]
\caption{Experimental settings used for the estimation error analysis shown in Supplementary Figure~\ref{suppfig:ErrMeas}.}
\begin{tabular}{|c|c|c|c|c|c|}
\hline
\textbf{State} & \textbf{Qubits} & \textbf{$\#\Wcal$ per round}  & \textbf{\#Shots per round} & \textbf{\#Measurements($\times10^5$)} & \textbf{Rounds} \\ \hline
\multirow{5}{*}{GHZ} & \multirow{5}{*}{9 (5+5)} & 10 & \multirow{5}{*}{1000} & 1 & 60 \\ \cline{3-3} \cline{5-6} 
 &  & 20 &   & 2 & 30 \\ \cline{3-3} \cline{5-6} 
 &  & 40 &   & 4 & 12 \\ \cline{3-3} \cline{5-6}  
 &  & 80 &   & 8 & 6 \\ \cline{3-3} \cline{5-6}  
 &  & 100 &   & 10 & 6 \\ \hline
\end{tabular}
\label{tab:fiveplusfive_settings}
\end{table}

\subsubsection{Simulation results in multiple cuttings}

We evaluate the performance of our algorithm in scenarios involving more than one cut. 

Specifically, we first consider the case of two cuts, which partition the 7-qubit GHZ state into three parts. To quantify the estimation error, we repeat the experiment over 30 independent rounds and compute the error as $\varepsilon = \sqrt{\sum_{i=1}^{30} (v_i - 1)^2 / 30}$, where $v_i$ denotes the estimation obtained from the $i$-th trial. In this analysis, we set $\rho = \sigma$ as the ideal GHZ state.

As shown in Supplementary Figure~\ref{suppfig:simulate_multicut}\textbf{a},\textbf{b}, the estimation error of our algorithm decreases with an increasing number of measurement shots and random unitaries per part. Notably, the error reduction is more significant with an increased number of random unitaries, illustrating the effectiveness of the cutting procedure: applying $N$ random unitaries per part yields a total of $N^3$ unique combinations due to the three-part partitioning.

We demonstrate that increasing the number of circuit cuts improves estimation accuracy when using the same number of random unitaries. Supplementary Figure~\ref{suppfig:simulate_multicut}\textbf{c} exhibits the estimation error under a fixed number of random unitaries and measurement shots for cases with one, two, and three cuts of the 9-qubit GHZ state, which corresponds to 2, 3, and 4 parts, respectively. The numerical results support the theoretical results, highlighting the benefit of our approach when the quantum state preparation circuit permits decomposition into multiple parts with only a small number of cuts.

As a generalization of the Method result, here we also utilize the random configuration $\Zcal$ with four different choices, resulting in $32^{r-2} + 12$ number of circuit settings where $r\geq 2$ is the number of parts.  Consequently, for the scenarios depicted in Supplementary Figure~\ref{suppfig:simulate_multicut}\textbf{c}, the number of circuit settings is $12$, $44$, and $76$, respectively.

\begin{figure}
    \centering
    \includegraphics[width=0.6\linewidth]{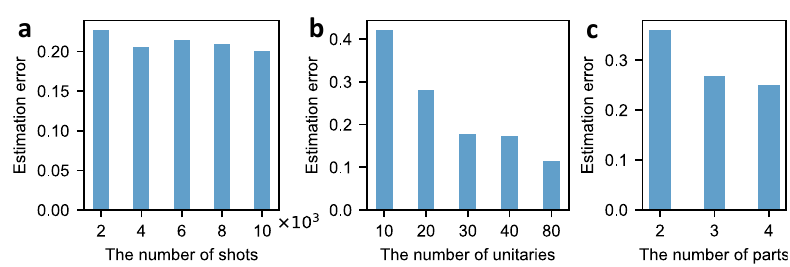}
    \caption{Estimation errors for $\tr{\rho\sigma}$. \textbf{a.} Estimation errors with an increasing number of shots per circuit setting for a 7-qubit GHZ state generated using 2 cuts across three 3-qubit devices, with 30 randomly generated unitaries applied in each segment.
    \textbf{b.} Estimation errors with an increasing number of randomly generated unitaries in each segment, for a 7-qubit GHZ state constructed using 2 cuts across three 3-qubit devices, with $10{,}000$ shots per circuit setting. \textbf{c.} Estimation errors for a $9$-qubit GHZ state constructed using $1$, $2$, and $3$ cuts, corresponding to device configurations of $(5,5)$, $(4,4,3)$, and $(3,3,3,3)$ qubits per part, respectively, with 30 random unitaries applied in each part and $10{,}000$ shots per circuit setting.
    }
    \label{suppfig:simulate_multicut}
\end{figure}

\subsection{Quantum phase learning}

\subsubsection{Problem setup}

Quantum phase recognition plays a central role in understanding complex quantum systems, particularly in condensed matter physics and quantum chemistry. In this work, we focus on learning the phase structure of the transverse-field Ising model, a well-studied family of Hamiltonians with the form
\begin{align} 
H(J, h) = - J \sum_{j=1}^{n-1} Z_j Z_{j+1} - h \sum_{j=1}^{n} X_j, 
\end{align}
where $J$ and $h$ are real, positive parameters controlling the interaction strength and transverse field, respectively. This model exhibits a quantum phase transition at a critical ratio of $J/h$ in the thermodynamic limit.

To investigate the phase behavior, we consider the ground states associated with a specific ansatz circuit and a training dataset $\mathcal{T} = \cbra{\bm x_i, y_i}$, where $\bm x_i$ specifies the state $\ket{\psi\pbra{\bm x^{(i)}}}$, which is prepared by applying a tailored parameterized circuit $U\pbra{\bm x^{(i)}}$, and $y_i$ labels the corresponding quantum phase.

We employ a parameterized variational quantum eigensolver (VQE) circuit to approximate the ground states $\ket{\psi\pbra{\bm x^{(i)}}}$ of the Hamiltonian family $H(h_i):= H(\frac{1}{2}, h_i)$ and use these states to train a model that learns the underlying phase structure.

 \subsubsection{Parameter settings and simulation results}

The hyperparameter settings used in the experiments are as follows. We fix $J=0.5$, and the value of $h$ iterates through [-1.45,-1.35, -1.20, -1.05, -0.90, -0.75, -0.60, -0.45, -0.30, -0.15, 0.00, 0.15, 0.30, 0.45, 0.60, 0.75, 0.90, 1.05, 1.20, 1.35, 1.45].
The classical estimation of the ground state energies and phases is presented in Supplementary Figure~\ref{suppfig:simulate_phaseLearn}. The classically calculated kernel matrix shown in Fig.~3\textbf{c} of the main text is obtained by evaluating the trace overlap between optimized quantum states prepared using VQEs.
The mean squared error (MSE) of two data sets $\cbra{y_i}_{i=1}^R$ and $\cbra{ y'_i}_{i=1}^R$ is defined as
\begin{equation}
    \text{MSE} = \sum_{i=1}^R \pbra{y'_i - y_i}^2.
\end{equation}
For two kernel matrices $K$ and $L$, the centered kernel alignment (CKA) is defined as
\begin{equation} \text{CKA} = \frac{\braket{\tilde{K}, \tilde{L}}_{F}}{\|\tilde{K}\|_F \|\tilde{L}\|_F}, \end{equation}
where $\braket{A, B}_F = \tr{A^{\top} B}$ denotes the Frobenius inner product, and $\|\cdot\|_F$ is the Frobenius norm. The centered kernel matrix $\tilde{K}$ is given by $\tilde{K} = H K H$, where $H = \Ibb_n - \bm{1}_n \bm{1}_n^{\top}$ is the centering matrix, $\bm{1}_n$ is the all-ones column vector in $\Rbb^n$, and $\Ibb_n$ is the $n \times n$ identity matrix. A similar expression applies for $\tilde{L}$.

\subsubsection{Experimental circuit settings in the main file}

\begin{figure}
    \centering
\includegraphics[width=1.0\linewidth]{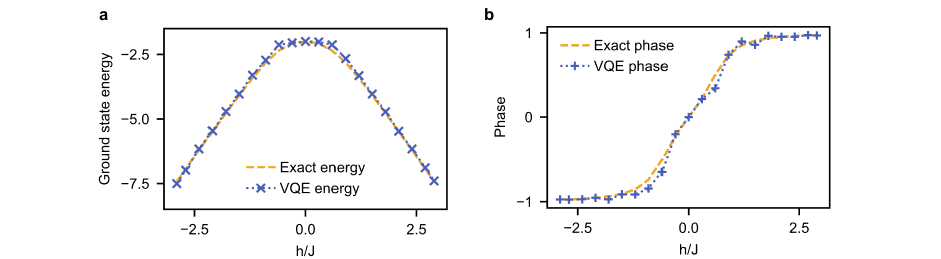}
    \caption{Simulation results for the ground state energy and phases with the VQE algorithm. \textbf{a.} Ground state energies with the VQE algorithm. \textbf{b.} Simulated phases with the VQE algorithm.}
    \label{suppfig:simulate_phaseLearn}
\end{figure}
 
The experimental circuits designed for implementing quantum machine learning algorithms for phase learning through classical communication are illustrated in Supplementary Figure~\ref{suppfig:SI_phase_learning_circuit}. These circuits consist of multiple layers of single- and two-qubit gates.
The single-qubit gates include rotations about $\sigma_z$ for angles determined by the designed ansatz, as well as rotations about $\sigma_x$ and $\sigma_y$ for specified angles. The $\sigma_x$ and $\sigma_y$ rotations are achieved by applying 30 ns microwave pulses at qubits' idle frequencies, employing the ``derivative reduction by
adiabatic gate technique'' to mitigate leakage and enhance fidelity~\cite{song201710}.
$\sigma_z$ rotations are implemented using virtual Z gates~\cite{mckay2017efficient}.
The two-qubit CZ gate is realized by bringing a pair of neighboring qubits into resonance at the $|11\rangle$ and $|02\rangle$($|20\rangle$) states, and applying a 50 ns square pulse to the betweening coupler~\cite{Arute2019Quantum,foxen2020demonstrating}.

As shown in Supplementary Figure~\ref{suppfig:SI_phase_learning_circuit}, for the phase learning circuit, we employ a classical computer to assist in generating the encoded parameters, which are subsequently mapped to the angles of the $R_x^\theta$ and $R_z^\theta$ gates.

\begin{figure}[htpb]
    \centering
    \includegraphics[width=0.8\linewidth]{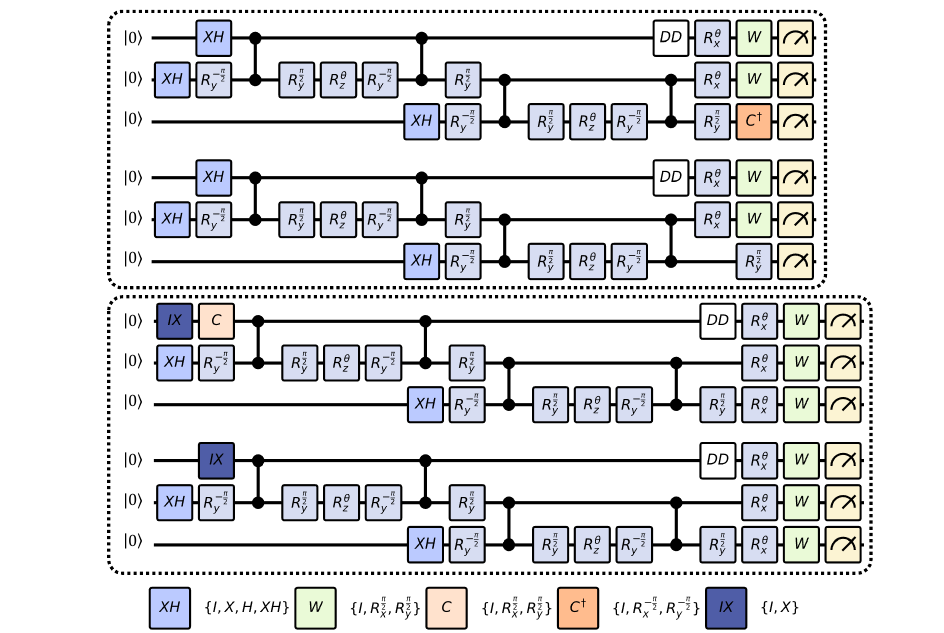}
    \caption{Experimental circuit for phase learning in the main text. The first dotted box shows the \textsf{Q}-\textsf{A} circuit after cutting, where $C^\dagger \in \{\mathbb{I}, R_x^{-\pi/2}, R_y^{-\pi/2}\}$, yielding 4 possible circuit base configurations. The second dotted box shows the \textsf{Q}-\textsf{B} circuit after cutting, with choices of $C$ and input states of the cutting qubit set to $0$ or $1$, resulting in 8 configurations. In total, there are 12 distinct circuit base configurations. The ansatz $R_z(\theta)$ is varied to optimize toward the ground state.
} \label{suppfig:SI_phase_learning_circuit}
\end{figure}


\begin{thebibliography}{10}

\bibitem{Arute2019Quantum}
Frank Arute, Kunal Arya, Ryan Babbush, Dave Bacon, Joseph~C Bardin, Rami
  Barends, Rupak Biswas, Sergio Boixo, Fernando~GSL Brandao, David~A Buell,
  et~al.
\newblock Quantum supremacy using a programmable superconducting processor.
\newblock {\em Nature}, 574(7779):505--510, 2019.

\bibitem{Zhong2020Quantum}
Han-Sen Zhong et~al.
\newblock Quantum computational advantage using photons.
\newblock {\em Science}, 370(6523):1460--1463, 2020.

\bibitem{Madsen2022Quantum}
Lars~S. Madsen et~al.
\newblock {Quantum computational advantage with a programmable photonic
  processor}.
\newblock {\em Nature}, 606(7912):75--81, 2022.

\bibitem{Zhu2022Quantum}
Qingling Zhu et~al.
\newblock Quantum computational advantage via 60-qubit 24-cycle random circuit
  sampling.
\newblock {\em Science Bulletin}, 67(3):240--245, 2022.

\bibitem{bluvstein2022quantum}
Bluvstein et~al.
\newblock A quantum processor based on coherent transport of entangled atom
  arrays.
\newblock {\em Nature}, 604(7906):451--456, 2022.

\bibitem{Cao2023GenerationOG}
Sirui Cao et~al.
\newblock Generation of genuine entanglement up to 51 superconducting qubits.
\newblock {\em Nature}, 619:738--742, 2023.

\bibitem{Kim2023Evidence}
Youngseok Kim et~al.
\newblock {Evidence for the utility of quantum computing before fault
  tolerance}.
\newblock {\em Nature}, 618(7965):500--505, 2023.

\bibitem{Guo2024BoostedFG}
Yong-Peng Guo et~al.
\newblock Boosted fusion gates above the percolation threshold for scalable
  graph-state generation.
\newblock 2024.

\bibitem{bluvstein2024logical}
Dolev Bluvstein, Simon~J Evered, Alexandra~A Geim, Sophie~H Li, Hengyun Zhou,
  Tom Manovitz, Sepehr Ebadi, Madelyn Cain, Marcin Kalinowski, Dominik
  Hangleiter, et~al.
\newblock Logical quantum processor based on reconfigurable atom arrays.
\newblock {\em Nature}, 626(7997):58--65, 2024.

\bibitem{GoogleQuantumAI2024QEC}
Google~Quantum AI et~al.
\newblock Quantum error correction below the surface code threshold.
\newblock {\em Nature}, 638(8052):920--926, 2025.

\bibitem{morvan2024phase}
Alexis Morvan, B~Villalonga, X~Mi, S~Mandra, A~Bengtsson, PV~Klimov, Z~Chen,
  S~Hong, C~Erickson, IK~Drozdov, et~al.
\newblock Phase transitions in random circuit sampling.
\newblock {\em Nature}, 634(8033):328--333, 2024.

\bibitem{gao2024establishing}
Dongxin Gao, Daojin Fan, Chen Zha, Jiahao Bei, Guoqing Cai, Jianbin Cai, Sirui
  Cao, Xiangdong Zeng, Fusheng Chen, Jiang Chen, et~al.
\newblock Establishing a new benchmark in quantum computational advantage with
  105-qubit zuchongzhi 3.0 processor.
\newblock {\em arXiv preprint arXiv:2412.11924}, 2024.

\bibitem{bravyi2022future}
Sergey Bravyi, Oliver Dial, Jay~M Gambetta, Dar{\'\i}o Gil, and Zaira Nazario.
\newblock The future of quantum computing with superconducting qubits.
\newblock {\em Journal of Applied Physics}, 132(16), 2022.

\bibitem{sunami2025scalable}
Shinichi Sunami, Shiro Tamiya, Ryotaro Inoue, Hayata Yamasaki, and Akihisa
  Goban.
\newblock Scalable networking of neutral-atom qubits: Nanofiber-based approach
  for multiprocessor fault-tolerant quantum computers.
\newblock {\em PRX Quantum}, 6(1):010101, 2025.

\bibitem{cirac1999distributed}
J~Ignacio Cirac, AK~Ekert, Susana~F Huelga, and Chiara Macchiavello.
\newblock Distributed quantum computation over noisy channels.
\newblock {\em Physical Review A}, 59(6):4249, 1999.

\bibitem{wehner2018quantum}
Stephanie Wehner, David Elkouss, and Ronald Hanson.
\newblock Quantum internet: A vision for the road ahead.
\newblock {\em Science}, 362(6412):eaam9288, 2018.

\bibitem{caleffi2024distributed}
Marcello Caleffi, Michele Amoretti, Davide Ferrari, Jessica Illiano, Antonio
  Manzalini, and Angela~Sara Cacciapuoti.
\newblock Distributed quantum computing: a survey.
\newblock {\em Computer Networks}, 254:110672, 2024.

\bibitem{niu2023low}
Jingjing Niu, Libo Zhang, Yang Liu, Jiawei Qiu, Wenhui Huang, Jiaxiang Huang,
  Hao Jia, Jiawei Liu, Ziyu Tao, Weiwei Wei, et~al.
\newblock Low-loss interconnects for modular superconducting quantum
  processors.
\newblock {\em Nature Electronics}, 6(3):235--241, 2023.

\bibitem{Main2024Distributed}
D.~Main, P.~Drmota, D.~P. Nadlinger, E.~M. Ainley, A.~Agrawal, B.~C. Nichol,
  R.~Srinivas, G.~Araneda, and D.~M. Lucas.
\newblock {Distributed quantum computing across an optical network link}.
\newblock {\em Nature}, 638(8050):383--388, 2025.

\bibitem{aghaee2025scaling}
H~Aghaee~Rad, T~Ainsworth, RN~Alexander, B~Altieri, MF~Askarani, R~Baby,
  L~Banchi, BQ~Baragiola, JE~Bourassa, RS~Chadwick, et~al.
\newblock Scaling and networking a modular photonic quantum computer.
\newblock {\em Nature}, pages 1--8, 2025.

\bibitem{mitarai2021constructing}
Kosuke Mitarai and Keisuke Fujii.
\newblock Constructing a virtual two-qubit gate by sampling single-qubit
  operations.
\newblock {\em New Journal of Physics}, 23(2):023021, 2021.

\bibitem{piveteau2023circuit}
Christophe Piveteau and David Sutter.
\newblock Circuit knitting with classical communication.
\newblock {\em IEEE Transactions on Information Theory}, 70(4):2734--2745,
  2023.

\bibitem{Lowe2023fast}
Angus Lowe, Matija Medvidovi{\'c}, Anthony Hayes, Lee~J O'Riordan, Thomas~R
  Bromley, Juan~Miguel Arrazola, and Nathan Killoran.
\newblock Fast quantum circuit cutting with randomized measurements.
\newblock {\em Quantum}, 7:934, March 2023.

\bibitem{CarreraVazquez2024Combing}
Almudena Carrera~Vazquez, Caroline Tornow, Diego Rist\`e, Stefan Woerner, Maika
  Takita, and Daniel~J. Egger.
\newblock {Combining quantum processors with real-time classical
  communication}.
\newblock {\em Nature}, 636(8041):75--79, 2024.

\bibitem{gentinetta2024overhead}
Gian Gentinetta, Friederike Metz, and Giuseppe Carleo.
\newblock Overhead-constrained circuit knitting for variational quantum
  dynamics.
\newblock {\em Quantum}, 8:1296, 2024.

\bibitem{eisert2020quantum}
Jens Eisert, Dominik Hangleiter, Nathan Walk, Ingo Roth, Damian Markham, Rhea
  Parekh, Ulysse Chabaud, and Elham Kashefi.
\newblock Quantum certification and benchmarking.
\newblock {\em Nature Reviews Physics}, 2(7):382--390, 2020.

\bibitem{Xue19Benchmarking}
X.~Xue, T.~F. Watson, J.~Helsen, D.~R. Ward, D.~E. Savage, M.~G. Lagally, S.~N.
  Coppersmith, M.~A. Eriksson, S.~Wehner, and L.~M.~K. Vandersypen.
\newblock Benchmarking gate fidelities in a $\mathrm{Si}/\mathrm{SiGe}$
  two-qubit device.
\newblock {\em Physical Review X}, 9:021011, Apr 2019.

\bibitem{sung2021realization}
Youngkyu Sung, Leon Ding, Jochen Braum\"uller, Antti Veps\"al\"ainen, Bharath
  Kannan, Morten Kjaergaard, Ami Greene, Gabriel~O. Samach, Chris McNally,
  David Kim, Alexander Melville, Bethany~M. Niedzielski, Mollie~E. Schwartz,
  Jonilyn~L. Yoder, Terry~P. Orlando, Simon Gustavsson, and William~D. Oliver.
\newblock Realization of high-fidelity cz and $zz$-free iswap gates with a
  tunable coupler.
\newblock {\em Physical Review X}, 11:021058, Jun 2021.

\bibitem{bao2022fluxonium}
Feng Bao, Hao Deng, Dawei Ding, Ran Gao, Xun Gao, Cupjin Huang, Xun Jiang,
  Hsiang-Sheng Ku, Zhisheng Li, Xizheng Ma, Xiaotong Ni, Jin Qin, Zhijun Song,
  Hantao Sun, Chengchun Tang, Tenghui Wang, Feng Wu, Tian Xia, Wenlong Yu, Fang
  Zhang, Gengyan Zhang, Xiaohang Zhang, Jingwei Zhou, Xing Zhu, Yaoyun Shi,
  Jianxin Chen, Hui-Hai Zhao, and Chunqing Deng.
\newblock Fluxonium: An alternative qubit platform for high-fidelity
  operations.
\newblock {\em Physical Review Letters}, 129:010502, Jun 2022.

\bibitem{flammia2011direct}
Steven~T Flammia and Yi-Kai Liu.
\newblock Direct fidelity estimation from few pauli measurements.
\newblock {\em Physical Review Letters}, 106(23):230501, 2011.

\bibitem{da2011practical}
Marcus~P. da~Silva, Olivier Landon-Cardinal, and David Poulin.
\newblock Practical characterization of quantum devices without tomography.
\newblock {\em Physical Review Letters}, 107:210404, November 2011.
\newblock Publisher: American Physical Society.

\bibitem{Elben2020cross}
Andreas Elben, Beno\^{\i}t Vermersch, Rick van Bijnen, Christian Kokail, Tiff
  Brydges, Christine Maier, Manoj~K. Joshi, Rainer Blatt, Christian~F. Roos,
  and Peter Zoller.
\newblock Cross-platform verification of intermediate scale quantum devices.
\newblock {\em Physical Review Letters}, 124:010504, Jan 2020.

\bibitem{Anshu2022Distributed}
Anurag Anshu, Zeph Landau, and Yunchao Liu.
\newblock Distributed quantum inner product estimation.
\newblock In {\em Proceedings of the 54th Annual ACM SIGACT Symposium on Theory
  of Computing}, STOC 2022, page 44–51, New York, NY, USA, 2022. Association
  for Computing Machinery.

\bibitem{du2025quantum}
Yuxuan Du, Xinbiao Wang, Naixu Guo, Zhan Yu, Yang Qian, Kaining Zhang, Min-Hsiu
  Hsieh, Patrick Rebentrost, and Dacheng Tao.
\newblock Quantum machine learning: A hands-on tutorial for machine learning
  practitioners and researchers.
\newblock {\em arXiv preprint arXiv:2502.01146}, 2025.

\bibitem{yang2019federated}
Qiang Yang, Yang Liu, Tianjian Chen, and Yongxin Tong.
\newblock Federated machine learning: Concept and applications.
\newblock {\em ACM Transactions on Intelligent Systems and Technology (TIST)},
  10(2):1--19, 2019.

\bibitem{hinsche2024efficient}
Marcel Hinsche, Marios Ioannou, Sofiene Jerbi, Lorenzo Leone, Jens Eisert, and
  Jose Carrasco.
\newblock Efficient distributed inner product estimation via pauli sampling.
\newblock {\em arXiv preprint arXiv:2405.06544}, 2024.

\bibitem{mele2024introduction}
Antonio~Anna Mele.
\newblock Introduction to haar measure tools in quantum information: A
  beginner's tutorial.
\newblock {\em Quantum}, 8:1340, 2024.

\bibitem{haah2016sample}
Jeongwan Haah, Aram~W Harrow, Zhengfeng Ji, Xiaodi Wu, and Nengkun Yu.
\newblock Sample-optimal tomography of quantum states.
\newblock In {\em Proceedings of the forty-eighth annual ACM symposium on
  Theory of Computing}, pages 913--925, 2016.

\bibitem{nayak2025lower}
Ashwin Nayak and Angus Lowe.
\newblock Lower bounds for learning quantum states with single-copy
  measurements.
\newblock {\em ACM Transactions on Computation Theory}, 2025.

\bibitem{schuld2019quantum}
Maria Schuld and Nathan Killoran.
\newblock Quantum machine learning in feature hilbert spaces.
\newblock {\em Physical Review Letters}, 122:040504, Feb 2019.

\bibitem{Havl2019supervised}
Vojtěch Havlíček, Antonio~D. Córcoles, Kristan Temme, Aram~W. Harrow,
  Abhinav Kandala, Jerry~M. Chow, and Jay~M. Gambetta.
\newblock Supervised learning with quantum-enhanced feature spaces.
\newblock {\em Nature}, 567(7747):209–212, March 2019.

\bibitem{peters2021machine}
Evan Peters, Jo{\~a}o Caldeira, Alan Ho, Stefan Leichenauer, Masoud Mohseni,
  Hartmut Neven, Panagiotis Spentzouris, Doug Strain, and Gabriel~N Perdue.
\newblock Machine learning of high dimensional data on a noisy quantum
  processor.
\newblock {\em npj Quantum Information}, 7(1):161, 2021.

\bibitem{huang2021power}
Hsin-Yuan Huang, Michael Broughton, Masoud Mohseni, Ryan Babbush, Sergio Boixo,
  Hartmut Neven, and Jarrod~R McClean.
\newblock Power of data in quantum machine learning.
\newblock {\em Nature communications}, 12(1):2631, 2021.

\bibitem{liu2021rigorous}
Yunchao Liu, Srinivasan Arunachalam, and Kristan Temme.
\newblock A rigorous and robust quantum speed-up in supervised machine
  learning.
\newblock {\em Nature Physics}, 17(9):1013--1017, 2021.

\bibitem{glick2024covariant}
Jennifer~R Glick, Tanvi~P Gujarati, Antonio~D Corcoles, Youngseok Kim, Abhinav
  Kandala, Jay~M Gambetta, and Kristan Temme.
\newblock Covariant quantum kernels for data with group structure.
\newblock {\em Nature Physics}, 20(3):479--483, 2024.

\bibitem{wu2023quantum}
Yusen Wu, Bujiao Wu, Jingbo Wang, and Xiao Yuan.
\newblock Quantum phase recognition via quantum kernel methods.
\newblock {\em Quantum}, 7:981, 2023.

\bibitem{bouland2019complexity}
Adam Bouland, Bill Fefferman, Chinmay Nirkhe, and Umesh Vazirani.
\newblock On the complexity and verification of quantum random circuit
  sampling.
\newblock {\em Nature Physics}, 15(2):159--163, 2019.

\bibitem{scholkopf2002learning}
Bernhard Sch{\"o}lkopf and Alexander~J Smola.
\newblock {\em Learning with kernels: support vector machines, regularization,
  optimization, and beyond}.
\newblock MIT press, 2002.

\bibitem{cherkassky2004practical}
Vladimir Cherkassky and Yunqian Ma.
\newblock Practical selection of svm parameters and noise estimation for svm
  regression.
\newblock {\em Neural networks}, 17(1):113--126, 2004.

\bibitem{smola2004tutorial}
Alex~J Smola and Bernhard Sch{\"o}lkopf.
\newblock A tutorial on support vector regression.
\newblock {\em Statistics and computing}, 14:199--222, 2004.

\bibitem{kornblith19similarity}
Simon Kornblith, Mohammad Norouzi, Honglak Lee, and Geoffrey Hinton.
\newblock Similarity of neural network representations revisited.
\newblock In Kamalika Chaudhuri and Ruslan Salakhutdinov, editors, {\em
  Proceedings of the 36th International Conference on Machine Learning},
  volume~97 of {\em Proceedings of Machine Learning Research}, pages
  3519--3529. PMLR, 09--15 Jun 2019.

\bibitem{decross2023qubit}
Matthew DeCross, Eli Chertkov, Megan Kohagen, and Michael Foss-Feig.
\newblock Qubit-reuse compilation with mid-circuit measurement and reset.
\newblock {\em Physical Review X}, 13(4):041057, 2023.

\bibitem{swiadek2024enhancing}
Fran{\c{c}}ois Swiadek, Ross Shillito, Paul Magnard, Ants Remm, Christoph
  Hellings, Nathan Lacroix, Quentin Ficheux, Dante~Colao Zanuz, Graham~J
  Norris, Alexandre Blais, et~al.
\newblock Enhancing dispersive readout of superconducting qubits through
  dynamic control of the dispersive shift: Experiment and theory.
\newblock {\em PRX Quantum}, 5(4):040326, 2024.

\bibitem{wood2011tensor}
Christopher~J Wood, Jacob~D Biamonte, and David~G Cory.
\newblock Tensor networks and graphical calculus for open quantum systems.
\newblock {\em arXiv preprint arXiv:1111.6950}, 2011.

\bibitem{wang2008alternative}
Xiaoguang Wang, Chang-Shui Yu, and XX~Yi.
\newblock An alternative quantum fidelity for mixed states of qudits.
\newblock {\em Physics Letters A}, 373(1):58--60, 2008.

\bibitem{liang2019quantum}
Yeong-Cherng Liang, Yu-Hao Yeh, Paulo~EMF Mendon{\c{c}}a, Run~Yan Teh,
  Margaret~D Reid, and Peter~D Drummond.
\newblock Quantum fidelity measures for mixed states.
\newblock {\em Reports on Progress in Physics}, 82(7):076001, 2019.

\bibitem{brydges2019probing}
Tiff Brydges, Andreas Elben, Petar Jurcevic, Beno{\^\i}t Vermersch, Christine
  Maier, Ben~P Lanyon, Peter Zoller, Rainer Blatt, and Christian~F Roos.
\newblock Probing r{\'e}nyi entanglement entropy via randomized measurements.
\newblock {\em Science}, 364(6437):260--263, 2019.

\bibitem{fulton2013representation}
William Fulton and Joe Harris.
\newblock {\em Representation theory: a first course}, volume 129.
\newblock Springer Science \& Business Media, 2013.

\bibitem{lecun2015deep}
Yann LeCun, Yoshua Bengio, and Geoffrey Hinton.
\newblock Deep learning.
\newblock {\em Nature}, 521(7553):436--444, 2015.

\bibitem{lin2017does}
Henry~W Lin, Max Tegmark, and David Rolnick.
\newblock Why does deep and cheap learning work so well?
\newblock {\em Journal of Statistical Physics}, 168:1223--1247, 2017.

\bibitem{carleo2017solving}
Giuseppe Carleo and Matthias Troyer.
\newblock Solving the quantum many-body problem with artificial neural
  networks.
\newblock {\em Science}, 355(6325):602--606, 2017.

\bibitem{van2017learning}
Evert~PL Van~Nieuwenburg, Ye-Hua Liu, and Sebastian~D Huber.
\newblock Learning phase transitions by confusion.
\newblock {\em Nature Physics}, 13(5):435--439, 2017.

\bibitem{cong2019quantum}
Iris Cong, Soonwon Choi, and Mikhail~D Lukin.
\newblock Quantum convolutional neural networks.
\newblock {\em Nature Physics}, 15(12):1273--1278, 2019.

\bibitem{wang2021towards}
Xinbiao Wang, Yuxuan Du, Yong Luo, and Dacheng Tao.
\newblock Towards understanding the power of quantum kernels in the nisq era.
\newblock {\em Quantum}, 5:531, 2021.

\bibitem{kubler2021inductive}
Jonas K{\"u}bler, Simon Buchholz, and Bernhard Sch{\"o}lkopf.
\newblock The inductive bias of quantum kernels.
\newblock {\em Advances in Neural Information Processing Systems},
  34:12661--12673, 2021.

\bibitem{collins2006integration}
Beno{\^\i}t Collins and Piotr {\'S}niady.
\newblock Integration with respect to the haar measure on unitary, orthogonal
  and symplectic group.
\newblock {\em Communications in Mathematical Physics}, 264(3):773--795, 2006.

\bibitem{collins2003moments}
Beno{\^\i}t Collins.
\newblock Moments and cumulants of polynomial random variables on
  unitarygroups, the itzykson-zuber integral, and free probability.
\newblock {\em International Mathematics Research Notices}, 2003(17):953--982,
  2003.

\bibitem{brandao2016local}
Fernando~GSL Brandao, Aram~W Harrow, and Micha{\l} Horodecki.
\newblock Local random quantum circuits are approximate polynomial-designs.
\newblock {\em Communications in Mathematical Physics}, 346:397--434, 2016.

\bibitem{graham1989concrete}
Ronald~L. Graham, Donald~E. Knuth, and Oren Patashnik.
\newblock {\em Concrete Mathematics: A Foundation for Computer Science}.
\newblock Advanced Book Program, Reading, MA, USA, 1st edition, 1989.

\bibitem{Gross2010Quantum}
David Gross, Yi-Kai Liu, Steven~T. Flammia, Stephen Becker, and Jens Eisert.
\newblock Quantum state tomography via compressed sensing.
\newblock {\em Physical Review Letters}, 105:150401, Oct 2010.

\bibitem{riofrio2017experimental}
Carlos~A Riofrio, David Gross, Steven~T Flammia, Thomas Monz, Daniel Nigg,
  Rainer Blatt, and Jens Eisert.
\newblock Experimental quantum compressed sensing for a seven-qubit system.
\newblock {\em Nature communications}, 8(1):15305, 2017.

\bibitem{Chen2021Robust}
Senrui Chen, Wenjun Yu, Pei Zeng, and Steven~T. Flammia.
\newblock Robust shadow estimation.
\newblock {\em PRX Quantum}, 2:030348, Sep 2021.

\bibitem{bravyi2021mitigating}
Sergey Bravyi, Sarah Sheldon, Abhinav Kandala, David~C Mckay, and Jay~M
  Gambetta.
\newblock Mitigating measurement errors in multiqubit experiments.
\newblock {\em Physical Review A}, 103(4):042605, 2021.

\bibitem{liang2024dephasing}
Yongqi Liang et~al.
\newblock Dephasing-assisted diffusive dynamics in superconducting quantum
  circuits, 2024.

\bibitem{knill2008randomized}
Emanuel Knill, Dietrich Leibfried, Rolf Reichle, Joe Britton, R~Brad Blakestad,
  John~D Jost, Chris Langer, Roee Ozeri, Signe Seidelin, and David~J Wineland.
\newblock Randomized benchmarking of quantum gates.
\newblock {\em Physical Review A—Atomic, Molecular, and Optical Physics},
  77(1):012307, 2008.

\bibitem{boixo2018characterizing}
Sergio Boixo et~al.
\newblock Characterizing quantum supremacy in near-term devices.
\newblock {\em Nature Physics}, 14(6):595--600, 2018.

\bibitem{google2023suppressing}
Suppressing quantum errors by scaling a surface code logical qubit.
\newblock {\em Nature}, 614(7949):676--681, 2023.

\bibitem{ren2022experimental}
Wenhui Ren et~al.
\newblock Experimental quantum adversarial learning with programmable
  superconducting qubits.
\newblock {\em Nature Computational Science}, 2(11):711--717, 2022.

\bibitem{ezzell2023dynamical}
Nic Ezzell, Bibek Pokharel, Lina Tewala, Gregory Quiroz, and Daniel~A Lidar.
\newblock Dynamical decoupling for superconducting qubits: A performance
  survey.
\newblock {\em Physical Review Applied}, 20(6):064027, 2023.

\bibitem{wu2023overlapped}
Bujiao Wu, Jinzhao Sun, Qi~Huang, and Xiao Yuan.
\newblock Overlapped grouping measurement: A unified framework for measuring
  quantum states.
\newblock {\em Quantum}, 7:896, 2023.

\bibitem{song201710}
Chao Song, Kai Xu, Wuxin Liu, Chui-ping Yang, Shi-Biao Zheng, Hui Deng, Qiwei
  Xie, Keqiang Huang, Qiujiang Guo, Libo Zhang, et~al.
\newblock 10-qubit entanglement and parallel logic operations with a
  superconducting circuit.
\newblock {\em Physical review letters}, 119(18):180511, 2017.

\bibitem{mckay2017efficient}
David~C McKay, Christopher~J Wood, Sarah Sheldon, Jerry~M Chow, and Jay~M
  Gambetta.
\newblock Efficient z gates for quantum computing.
\newblock {\em Physical Review A}, 96(2):022330, 2017.

\bibitem{foxen2020demonstrating}
Brooks Foxen, Charles Neill, Andrew Dunsworth, Pedram Roushan, Ben Chiaro,
  Anthony Megrant, Julian Kelly, Zijun Chen, Kevin Satzinger, Rami Barends,
  et~al.
\newblock Demonstrating a continuous set of two-qubit gates for near-term
  quantum algorithms.
\newblock {\em Physical Review Letters}, 125(12):120504, 2020.

\end{thebibliography}
\end{document}